\documentclass[letterpaper,11pt]{article}
\usepackage{fullpage}
\usepackage[utf8]{inputenc}
\usepackage[T1]{fontenc}
\usepackage{verbatim}
\usepackage{graphicx}
\usepackage{tabularx}
\usepackage{array}
\newcolumntype{L}{>{\raggedright\arraybackslash}X}
\newcolumntype{R}{>{\raggedleft\arraybackslash}X}
\usepackage{delarray}
\usepackage{dcolumn}
\usepackage{float}
\usepackage{amsmath}
\usepackage{psfrag}
\usepackage{booktabs}
\usepackage{multirow,hhline}
\usepackage{amstext}
\usepackage{amssymb}
\usepackage{textcomp}
\usepackage{fancyhdr}
\usepackage[onehalfspacing]{setspace}
\usepackage{amsthm}
\usepackage[ruled,linesnumbered]{algorithm2e}
\usepackage{makeidx}
\usepackage{bm}
\usepackage{url}
\usepackage{adjustbox}
\usepackage{enumitem}
\usepackage{ragged2e}
\usepackage{booktabs}
\usepackage{titling}
\usepackage{subcaption}
\usepackage{threeparttable}
\usepackage{longtable}
\usepackage{pdflscape}
\usepackage[round,authoryear,longnamesfirst]{natbib}
\usepackage{authblk}
\makeatletter
\renewcommand\AB@authnote[1]{\protect\raisebox{0.6ex}{\protect\scriptsize\protect\normalfont#1}}
\renewcommand\AB@affilnote[1]{\protect\raisebox{0.6ex}{\protect\scriptsize\protect\normalfont#1}}
\makeatother
\usepackage{hyperref}
\usepackage[svgnames]{xcolor}
\hypersetup{
	colorlinks=true,
	citecolor=NavyBlue,
    filecolor=NavyBlue,
    linkcolor=NavyBlue,
    urlcolor=NavyBlue}

\makeindex
\allowdisplaybreaks
\numberwithin{equation}{section}
\newtheorem{lem}{Lemma}[section]
\newtheorem{thm}{Theorem}[section]
\newtheorem{cor}{Corollary}[section]

\newtheorem{ass}{Assumption}
\newtheorem{prop}{Proposition}[section]
\newtheorem{rem}{Remark}

\newcommand{\diag}{\text{diag}}

\DeclareMathOperator*{\argmax}{arg\,max}

\makeatletter
\newcommand*{\rom}
[1]{\expandafter\@slowromancap\romannumeral #1@}
\makeatother
\usepackage{xr}

\usepackage{microtype}

\newtheorem{proposition}[prop]{Proposition}
\newtheorem{lemma}[lem]{Lemma}
\newtheorem{corollary}[cor]{Corollary}

\theoremstyle{definition}

\theoremstyle{remark}
\newtheorem{remark}[rem]{Remark}

\newcommand{\R}{\mathbb R}

\usepackage{placeins}

\begin{document}
\title{Vine Copula VAR:\\ From Recursive Margins to Joint Forecast Inference\thanks{Previous versions of this paper have been circulated under the title ``Vine Copula VAR". We thank seminar participants at the 2026 Midwest Econometrics Group Meeting and the CUNY Graduate Center for their valuable comments. Yubo Tao acknowledges the financial support from the National Natural Science Foundation of China (Project No.~72303003) and a seed grant from the Asia-Pacific Academy of Economics and Management of the University of Macau. }}
\author[a]{Hunter Ng}
\affil[a]{\small\emph{Zicklin School of Business, Baruch College, City University of New York, USA}}
\author[b]{Yubo Tao}
\affil[b]{\small\emph{Faculty of Social Sciences, University of Macau, Macao SAR, China}}
\date{October 2026}
\maketitle
\begin{abstract}
\noindent
Joint-event forecasts combine an estimated dependence model with current
marginal forecasts. Retaining historical transforms at the marginal fits
available when each outcome arrived creates overlapping estimation
errors. We derive their joint influence with the terminal estimates in
a stable Vine Copula VAR with normal innovation margins and a fixed,
correctly specified Gaussian or positive Clayton vine. An intercept
identity and the stable VAR filter reduce the historical correction to
harmonically weighted innovation moments. Terminal slope uncertainty
remains in the forecast gradient. A feasible covariance estimator yields
asymptotically valid repeated-sample intervals for fixed one-sided event
probabilities at the realized forecast state. A same-family terminal
refit provides the efficiency comparison. Retaining dated transforms
adds a positive semidefinite limiting covariance term, whose event-level
cost depends on dependence sensitivity and forecast-state leverage.
Monte Carlo comparisons show a small accuracy cost of retaining the
transforms and a larger inferential role for current marginal estimation.
Allowing Clayton dependence improves downside forecasts relative to
Gaussian dependence under each transform rule. A fourteen-variable
U.S. real-time forecasting application and simulations with twenty and
fifty variables illustrate feasible joint-event estimation and analytic
inference for fixed-depth vines.
\par\medskip
\noindent\textbf{Keywords}: Vine copulas, Recursive estimation, Generated regressors, Joint-event forecasts, Tail dependence.
\par\medskip
\noindent\textbf{JEL Classification}: C32, C53, C58.
\end{abstract}
\clearpage
\section{Introduction}\label{sec:introduction}

A forecaster assessing joint contractions in employment and production
must combine current marginal forecasts with an estimate of dependence.
The historical inputs to that estimate involve a choice. An outcome can
be measured against the marginal forecast made before it arrived, or
against a marginal distribution re-estimated using the complete current
sample. The first rule preserves a record of successive forecast
surprises. The second revises the historical residuals whenever the
marginal model is updated. We study how this choice affects inference
for the next conditional event probability.

The distinction matters even when both rules use the same observations.
A fall in employment and production can be largely anticipated at one
date and unusual at another. Its positions within the two preceding
marginal forecast distributions describe the joint surprise at that
date. A copula fitted to these positions estimates dependence from the
historical marginal forecasting rule. The next forecast combines that
estimate with the current margins. Repeated re-estimation makes the
historical positions statistically dependent through their overlapping
training samples, even when the underlying innovation vectors are
independent over time. Treating the marginal inputs as known therefore
omits part of the uncertainty in the estimated event probability.

Preserving the forecast record is motivated by the sequential assessment
of issued forecasts in prequential analysis \citep{dawid_1984} and
the probability integral transform (PIT) approach to density-forecast
evaluation \citep{diebold_gunther_tay_1998}. Copulas also provide a way
to combine existing marginal forecasts. For example,
\citet{mogliani_odendahl_2025} estimate cross-horizon dependence from
historical forecast PITs when transforming supplied density forecasts.
When issued marginal forecasts are the inputs to a combination procedure,
retaining their PITs keeps dependence estimation tied to the original
forecast errors. The dependence stage can then be updated while preserving
the marginal forecast record used for evaluation.
Our problem concerns dependence across variables after recursive VAR
estimation. We take retention of the dated marginal transforms as the
specified forecasting rule and derive its inferential consequences.
The application reconstructs this rule from real-time observations.

A terminal refit remains feasible because the raw history is available.
It uses information available at the current forecast origin and is a
legitimate alternative for forecasting the next outcome. Under our
stationary model, the two procedures target the same innovation copula
and conditional event probability. The refit supplies a useful efficiency
comparison: we show that retaining dated transforms adds a positive
semidefinite term to the limiting covariance. The result quantifies the
cost of preserving the marginal forecast record and identifies the
correction needed for inference under that rule.

Vines make the dependence component tractable. They build a multivariate
copula from bivariate copulas for pairs and conditional pairs
\citep{aas_pair-copula_2009,nagler_kruger_min_2022}. Gaussian pairs
describe symmetric dependence, while positive Clayton pairs allow
lower-tail dependence \citep{czado_nagler_2022}. Normal coordinate
margins can therefore coexist with an asymmetric joint distribution.
We cover both families in a stable VAR with constant innovation scales
and a fixed sequential vine. The distinction between copula families
and the distinction between dated and refitted transforms answer
separate questions. Our benchmark comparisons vary one at a time.

Marginal equations are estimated on expanding samples, the vine is fitted
sequentially to the retained
PITs, and the next event probability combines the vine estimate with the
terminal VAR estimates. The general need to propagate first-stage error
is familiar from two-step econometric inference
\citep{murphy_topel_1985} and copula estimation with fitted margins
\citep{joe_2005,patton_2012}. Here the first stage is an entire sequence
of overlapping marginal fits. An observation affects many subsequent
PITs, and the resulting errors also pass through the conditional
transforms used in later vine trees.

The central step is to identify which historical estimation errors
survive averaging. An intercept identity and the stable VAR filter
cancel the centered lag-regressor contribution, leaving harmonic weights
on innovation location and scale moments. Earlier observations receive
larger weights because they enter more subsequent regressions. The
terminal forecast is evaluated at a single lag state, so every terminal
slope remains in its gradient. We derive the joint covariance of the
sequential vine estimate, all terminal coefficients, and the terminal
scales, including their cross covariances. Its feasible estimate is
positive semidefinite and uses the retained forecast record.

Inference also requires care because the terminal state remains random
as the sample grows. We separate its recent innovations from the leading
estimation error and obtain a state-dependent mixed normal limit.
Studentization then yields a normal pivot. The intervals have
repeated-sample coverage for the conditional probability evaluated at
each sample's realized state. The theory is pointwise under a fixed
stationary law, with fixed dimension and lag order, iid innovation
vectors, normal coordinate margins, and a correctly specified simplified
vine. We verify the score and event conditions for Gaussian and positive
Clayton pairs, including their estimated ancestors. Fixed one-sided
events have positive limiting variance almost surely, including mixed
directions and subsets of variables. Bounded rectangles require an
event-specific nondegeneracy condition.

A same-family terminal refit permits a direct comparison on the same
observations. The comparison applies to Gaussian and positive Clayton
vines. The two estimators have identical limiting
terminal-margin and cross covariance blocks. Retaining issued transforms
adds a positive semidefinite term to the copula block, giving the archive's
variance cost for any covered event. In a bivariate specialization,
this term arises entirely from scale sensitivity and vanishes at
independence. Its share of event variance also depends on uncertainty
in the terminal conditional mean. A fixed-event submodel isolates this
denominator effect: increasing lag-state leverage reduces the archive
share while leaving its added variance unchanged. The comparison holds
the observations fixed. Replacing them with revised data changes the
forecasting problem. We also retain exact finite training weights in
the covariance estimate and justify intervals on the log-odds scale.
Both modifications preserve the first-order coverage result.

Existing vine methods cover both conditional distributions and quantiles.
\citet{kraus_czado_2017} develop D-vine quantile regression, while
\citet{chang_joe_2019} compute conditional distributions under
regular-vine models with continuous and discrete variables. For
multivariate time series, \citet{smith_2015} uses a D-vine to represent
both nonlinear serial and cross-sectional dependence. Our question
concerns inference when the historical transforms retain
the marginal estimates available at their issue dates.

The closest results on inference for prediction are those of
\citet{nagler_kruger_min_2022}. Their stationary-vine framework uses
pair copulas to represent both serial and cross-sectional dependence,
extending copula time-series models such as
\citet{remillard_papageorgiou_soustra_2012}. Our vine describes
contemporaneous innovations after the VAR accounts for linear serial
dynamics. The inferential distinction concerns the marginal inputs.
Section~4.1 of
\citet{nagler_kruger_min_2022} forms historical transforms using a
common marginal estimate, and their Theorems~8--9 propagate marginal
and copula estimation error to prediction functionals. We calculate
the accumulated influence of dated marginal estimates and its joint
law with the terminal forecast parameters. The recursive weighting is
related to \citet{west_1996},
while the triangular score system builds on sequential vine estimation
\citep{hobaeck_haff_2013}. The sample-separation principle of
\citet{beutner_heinemann_smeekes_2021} addresses the additional issue
posed by a random conditioning state. We establish the required
separation for the recursive influence derived here.

Observation-specific marginal inputs also appear in
\citet[Section 2.3]{sancetta_2008}, who allows surrogate margins,
including a prequential construction, and studies posterior consistency
in an elliptic copula model. We derive the joint first-order law of
the vine fit based on dated marginal estimates, terminal VAR estimates, and forecast state.
That law supports feasible event inference and a same-family covariance
comparison on the same observations.

Other copula time-series models address different sources of marginal
and dynamic uncertainty. \citet{chen_xiao_wang_2022} study filtered
nonstationarity, and \citet{chen_wang_xiao_yi_2025} develop a pseudo
sieve likelihood approach that jointly estimates the residual copula
parameter and invariant density. \citet{zhao_shi_zhang_2022} construct
conditional distributions through copula-linked univariate D-vines,
while \citet{tsionas_izzeldin_trapani_2022} allow time-varying VAR
parameters. Our fixed OLS and residual-variance rule is more restrictive
but makes the historical correction and its terminal cross covariances
explicit. Dimension remains fixed throughout. The common-fit,
increasing-dimension results of \citet{gauss_nagler_2026} concern a
different asymptotic regime.

Monte Carlo simulations separate the effects of transform timing,
copula-family specification, and estimation uncertainty. For the two
principal events, the proposed logit intervals have coverage of
94.1\%--95.5\% in the baseline designs. Accounting for current marginal
estimation improves coverage much more than the additional historical
variance term. Same-family terminal refits generally have slightly
lower probability RMSE and shorter intervals than the dated-transform
estimator, consistent with the covariance comparison. Clayton fits
outperform Gaussian fits under Clayton innovations with either dated
or refitted transforms. This accuracy gain comes from the dependence
family. Comparisons with the stationary-vine method of
\citet{nagler_kruger_min_2022} use a common corrected point fit for both
its native bootstrap and a full-Jacobian bootstrap. Their coverage
rankings depend on persistence and the bootstrap construction, so the
numerical evidence supports inference for the specified forecasting
rule without a general superiority claim.

A real-time forecasting application uses fourteen macroeconomic variables to
forecast joint U.S. employment and production contractions at 231
monthly origins. The broader information set includes labor input,
capacity pressure, prices, monetary conditions, and credit. Nested
three- and eleven-variable systems use identical issue dates and locked
release outcomes, separating the dimension comparison from data revisions
\citep{croushore_stark_2001,koenig_dolmas_piger_2003}. Before 2020,
employment--production Brier loss falls from $0.0495$ with three
variables to $0.0446$ with fourteen, although the comparison narrowly
misses the 5\% level after multiplicity adjustment. The fourteen-variable
archive is competitive with Gaussian and stationary-vine benchmarks,
but independence has lower observed loss in both evaluation periods
and the sparse stationary vine performs better during 2020--2024.
The fourteen-variable model improves the observed pre-2020 score
relative to the three-variable model, while the benchmark comparisons
show which alternatives forecast each evaluation period more accurately.

The same application illustrates a practical computational benefit.
A fourteen-variable recurring probability update takes a median
$0.0027$ seconds, followed by $0.0123$ seconds for analytic inference.
Corrected stationary-vine point fits take about $0.36$--$0.37$ seconds,
while the dense Gaussian VAR is faster at point forecasting. Additional
twenty- and fifty-variable simulations assess accuracy, interval
calibration, and computing cost at longer sample sizes. This evidence
supports computational scalability for fixed-depth vines and specified
low-dimensional event targets. Current marginal estimation also
accounts for much of the uncertainty in the application, connecting
its analytic intervals to the paper's joint covariance formula.

Section~\ref{sec:method} defines the forecast rule.
Section~\ref{sec:theory} derives the joint limit and event intervals,
and Sections~\ref{sec:simulation} and~\ref{sec:evaluation} present the
simulations and real-time application. The main appendices contain the
central proofs. Supporting lemmas, implementation details, and
additional evidence appear in the online supplement.

\section{Model and Estimation}\label{sec:method}\label{sec:model}

The VAR describes serial dynamics, and the vine describes
contemporaneous innovation dependence, following the marginal and
copula approach reviewed by \citet{patton_2012}. Inference must account
for the timing of marginal estimation: historical PITs use dated fits,
whereas the event forecast uses terminal estimates. We define the
target, both estimation stages, and the record needed for inference.

\subsection{A stationary VAR and a conditional event}

For fixed $n,p\ge1$, let $y_t=(y_{1t},\ldots,y_{nt})'$ follow the causal
stationary solution of
\begin{equation}\label{main:var-model}
 y_t=a+\sum_{j=1}^p A_jy_{t-j}+D_\sigma Z_t,\qquad
 D_\sigma=\diag(\sigma_1,\ldots,\sigma_n),\quad \sigma_i>0.
\end{equation}
\begin{ass}[Stable VAR and innovations]\label{main:var-assumption}
The dimensions $n,p$ and parameters in \eqref{main:var-model} are fixed.
The lag companion matrix $\mathcal A$ and innovations satisfy
\[
 \rho(\mathcal A)<1,\qquad Z_t\overset{\mathrm{iid}}{\sim}P_Z,
 \qquad Z_{it}\sim N(0,1),\qquad R_Z=E(Z_tZ_t')\succ0.
\]
The process is its causal stationary solution, $\sigma_i>0$ for every
$i$, and the initial $p$ lag vectors are observed.
\end{ass}
Innovation vectors are independent across dates, while their coordinates
may be dependent. Thus Gaussian coordinate margins permit non-Gaussian
joint dependence, which the vine will specify. Write
$\mathcal F_t=\sigma(Z_s:s\le t)$. Stability makes the history of $y_t$
measurable with respect to this field. Define
\begin{equation}\label{eq:design}
 L_t=(y_{t-1}',\ldots,y_{t-p}')',\qquad x_t=(1,L_t')',\qquad
 y_{it}=x_t'\beta_i+\sigma_iZ_{it}.
\end{equation}
The vector $\beta_i$ contains an intercept and $np$ lag coefficients,
giving dimension $d_x=1+np$. The stable VAR expresses $L_t$ as a
geometrically decaying function of past innovations. The current
innovation is therefore independent of the design $x_t$, as required
by the OLS influence calculation. Parameters remain constant, while
forecasts vary with the observed lag state.

For a fixed deterministic threshold vector $c\in\mathbb R^n$, the
one-step forecast target is
\begin{equation}\label{main:conditional-target}
 p_{T+1}(c)=P(y_{T+1}\le c\mid\mathcal F_T)
 =C_{\theta_0}\left(\Phi\left\{
        \frac{c_i-x_{T+1}'\beta_i}{\sigma_i}\right\}:i\le n\right).
\end{equation}
The inequalities hold coordinatewise. Although $c$ is fixed,
$p_{T+1}(c)$ is random across samples because the terminal state
$x_{T+1}$ is random. This distinction determines the coverage statement
in Section~\ref{main:forecast-section}. We first consider the lower
orthant in \eqref{main:conditional-target}, then extend inference to
fixed rectangles, mixed directions, and unrestricted coordinates. The
labor-market event in Section~\ref{sec:evaluation} uses that extension.
Estimating the threshold itself would introduce another influence term.

\subsection{Recursive margins and the archive}

Using the first $m$ observations, compute the OLS coefficient estimator
and residual variance with the following safeguards:
\begin{align}
 \widehat Q_m&=m^{-1}\sum_{s=1}^m x_sx_s',\notag\\
 \widehat\beta_{i,m}
 &=\begin{cases}
 \widehat Q_m^{-1}m^{-1}\sum_{s=1}^m x_sy_{is},
       &\lambda_{\min}(\widehat Q_m)\ge m^{-1/4},\\
 0,&\text{otherwise},
 \end{cases} \label{dyn:ols}\\
 \widehat v_{i,m}
 &=\max\left\{m^{-1/4},\,
       m^{-1}\sum_{s=1}^m(y_{is}-x_s'\widehat\beta_{i,m})^2\right\},\notag\\
 \widehat Z_{it}&=\frac{y_{it}-x_t'\widehat\beta_{i,t-1}}
                         {\sqrt{\widehat v_{i,t-1}}}.
 \label{dyn:variance}
\end{align}
The Gram-matrix threshold and variance floor keep the estimator defined
even when an early regression is ill-conditioned. They become inactive
on all retained estimation dates simultaneously with probability tending to one
(Lemma~\ref{dyn:first-stage}), but remain part of the finite-sample
procedure. The standardized residual $\widehat Z_{it}$ uses estimates
available only through $t-1$.
The true and fitted conditional margins are
\begin{equation}\label{eq:gaussian-margin}
 F^0_{i,t}(y)=\Phi\left(\frac{y-x_t'\beta_i}{\sigma_i}\right),\qquad
 \widehat F_{i,t|t-1}(y)=
 \Phi\left(\frac{y-x_t'\widehat\beta_{i,t-1}}
                    {\sqrt{\widehat v_{i,t-1}}}\right).
\end{equation}
Thus $y_t$ is evaluated against a margin fitted before it enters the
regression. With $b_T=\lceil T^\eta\rceil$, $0<\eta<1$, retain the archive
\begin{equation}\label{eq:stored-pits}
 \widehat U_{it}=\widehat F_{i,t|t-1}(y_{it}),\qquad b_T<t\le T.
\end{equation}
Once recorded, a transform remains fixed. The current forecast margin
uses $\widehat\beta_{i,T}$ and $\widehat v_{i,T}$.

For each issue date, retain the marginal fit and the regressor vector, then
compute the PIT when the outcome arrives. The mathematical record
needed at origin $T$ is
\[
 \mathcal A_T=\{(t,\widehat U_t,x_t):b_T<t\le T\},\qquad
 \widehat Q_T,\quad
 (\widehat\beta_{i,T},\widehat v_{i,T}:i\le n),\quad x_{T+1}.
\]
The PITs determine the score derivatives, while $x_t$ and
$\widehat Q_T$ determine the regression influences in
\eqref{main:joint-feasible}. Joint inference therefore requires more
than a PIT series. In numerical storage, retain the finite standardized
residuals $\widehat Z_t$ as well. Evaluating $\Phi$ can round an extreme
score to zero or one, after which inverse transformation cannot recover
it. Storing the scores preserves the information used by the estimator
without altering its mathematical definition.

At each date, the order of operations preserves the forecast's
information set. On observing $y_t$, evaluate it using the marginal fit
through $t-1$ and store $\widehat U_t$. Then add $y_t$ to the regression
for the next forecast. At origin $T$, the copula uses this record of
marginal forecast surprises, while the forecast for $T+1$ uses the fit
through $T$. Each marginal training input is thus evaluated out of
sample, and the next forecast uses the full available sample.
Recomputing a stored PIT from the fit through $T$ changes the training
criterion even when the observation itself is unchanged.

The retained PITs need not have exactly uniform finite-sample margins
because each dated fit contains estimation error. The population target
is still the stationary innovation copula, and the joint inference
accounts for the generated inputs. A terminal refit targets the same
copula using in-sample marginal residuals. In the Gaussian submodel,
refitting weakly reduces asymptotic variance
(Section~\ref{main:dynamic-comparisons}). Retaining dated transforms
therefore implements the specified training protocol at a variance cost.

For real-time data, let $\tau_m$ be the first date on which all
components of a declared reference-period vector are available, with
$\tau_m<\tau_{m+1}$. If $L_{i,m}^{[v]}$ denotes the level for period
$m$ available at date $v$, lock each growth transformation as
\begin{equation}\label{rli:locked-growth}
 Y_{i,m}^{\rm L}=100\{\log L_{i,m}^{[\tau_m]}
                         -\log L_{i,m-1}^{[\tau_m]}\}.
\end{equation}
Both levels use the vintage available at $\tau_m$, so a growth rate
does not mix levels from different releases. Other declared
transformations complete the locked vector $Y_m^{\rm L}$. An early
component may have been revised before the final component appears.
The vector records the first common release rather than each component's
first estimate. Define the service history by
\begin{equation}\label{rli:service-field}
 \mathcal G_m=\sigma(Y_s^{\rm L}:s\le m).
\end{equation}
Proposition~\ref{rli:transfer} applies the theory to these records if
the locked sequence itself satisfies \eqref{main:var-model} and the
innovation conditions. The recording rule does not establish that law.
The target conditions on $\mathcal G_T$. Using all calendar-time news
would require the next innovation to retain its maintained law given
that larger information set. Replacing the locked history with revised
data changes the observed process and lies outside the comparison using the same
observations. The stationary argument uses a two-sided history, while
implementation starts with observed initial lags.

Any fixed $\eta\in(0,1)$ is admissible in the proofs. The baseline
training rule uses $\eta=0.7$. Online
Appendix~\ref{sec:sim-small-support} also reports $\eta=0.55$ and $0.85$
as sensitivity checks. The growing initial training sample ensures
uniform control of retained marginal fits, while $b_T/T\to0$ preserves
the first-order sample normalization.

\subsection{Fixed vine and sequential fit}

For continuous predictive margins $F_{i,t}$ with densities $f_{i,t}$,
a valid copula density defines
\begin{equation}\label{eq:joint-model}
 q_t(y\mid\mathcal F_{t-1})=
 c_{\theta,V,K}\{F_{1,t}(y_1),\ldots,F_{n,t}(y_n)\}
                       \prod_{i=1}^n f_{i,t}(y_i).
\end{equation}
The regular
vine $V$, depth $K\in\{0,\ldots,n-1\}$, and pair families are fixed
before estimation. The simplified vine factorization is
\begin{equation}\label{eq:vine-product}
 c_{\theta,V,K}(u)=\prod_{r=1}^K\prod_{e\in E_r(V)}
 c_{e,\theta_e}(u_{a_e|D_e},u_{b_e|D_e}),\qquad |D_e|=r-1.
\end{equation}
An edge links $a_e$ and $b_e$ conditional on $D_e$. The arguments
are obtained by successive conditional-CDF maps
$h_{a|b;D}(v,w)=\partial_w C_{ab;D}(v,w)$, following the pair-copula
construction of \citet{aas_pair-copula_2009}. The simplifying restriction
makes the pair parameter invariant to the realized conditioning
variables. It is a dependence assumption, not a consequence of removing
conditional means and variances \citep{czado_nagler_2022}.
\begin{ass}[Fixed and correctly specified vine]\label{main:vine-assumption}
The copula of $Z_t$ is \eqref{eq:vine-product} at $\theta_0$.
The vine $V$, depth $K$, and pair families are fixed, and omitted
edges are independence copulas. For the retained free parameters,
\[
 \Theta=\prod_{e=1}^E\Theta_e\subset\mathbb R^{q_\theta}
 \text{ is compact},\qquad
 \theta_0\in\operatorname{int}(\Theta),\qquad q_\theta<\infty.
\]
All these objects are fixed as $T\to\infty$.
Each free parameter belongs to one edge, and the product space permits
its variation independently of the other edge parameters. Pair densities
are positive on $(0,1)^2$ throughout their admissible parameter spaces.
\end{ass}
Correct specification requires the simplifying restriction and all
imposed independence edges to hold. The product parameter space allows
stagewise estimation of retained pairs, as in
\citet{hobaeck_haff_2013}. Shared parameters would require a different
estimating system. Procedures such as \citet{dissmann_selecting_2013}
select the vine structure and pair families from the data. Here, those
choices and the truncation depth are fixed, so the theory excludes
selection uncertainty. Positive pair densities make the log objective
well defined on the open unit square, without requiring a uniform
positive lower bound.

At origin $T$, enumerate retained edges in tree order and fit
\begin{equation}\label{eq:sequential-fit}
 \widehat\theta_e\in\argmax_{\theta_e\in\Theta_e}
 \frac1{T-b_T}\sum_{t=b_T+1}^T
 \log c_{e,\theta_e}
 \{\widehat U_{a_e|D_e,t}(\widehat\theta_{<e}),
   \widehat U_{b_e|D_e,t}(\widehat\theta_{<e})\}.
\end{equation}
Use fixed measurable tie-breaking. At edge $e$, optimization holds
the estimated ancestors $\widehat\theta_{<e}$ fixed. Their sampling
error still enters the score because they determine its conditional
inputs. The estimating system is therefore triangular and generally
differs from the score of a jointly maximized likelihood. Later origins
update estimates of the same constant dependence law.

The fitted orthant probability combines this dependence estimate with
the terminal margins:
\begin{equation}\label{main:event-estimator}
 \widehat p_{T+1}(c)=C_{\widehat\theta}\left(
       \Phi\left\{\frac{c_i-x_{T+1}'\widehat\beta_{i,T}}
                        {\sqrt{\widehat v_{i,T}}}\right\}:i\le n\right).
\end{equation}
Theorem~\ref{main:event-inference} gives its interval. Section~\ref{main:implementation}
states the numerical accuracy needed for approximate optimization and
probability evaluation. Time-varying coefficients and latent volatility
require a separate analysis of the marginal estimation errors.

\section{Inference for Conditional Joint Forecasts}\label{sec:theory}

Inference requires a joint law for the sequential vine fit and terminal
VAR estimates, followed by an expansion at the random forecast state.
We first derive the joint covariance, accounting for recursive marginal
error. We then establish the separation from the terminal state needed
for event intervals. The same decomposition yields the refit comparison,
finite training weights, and logit intervals.

\subsection{Scores and regularity}

Let $V_e(\theta_{<e},z)$ be the two conditional PITs at edge $e$,
starting from coordinates $\Phi(z_i)$. Define the edge objective and
its own-parameter score by
\begin{equation}\label{main:edge-scores}
 \ell_e(\theta_{\le e},z)=\log c_{e,\theta_e}\{V_e(\theta_{<e},z)\},
 \qquad g_e(\theta,z)=\partial_{\theta_e}\ell_e(\theta_{\le e},z).
\end{equation}
The stacked score $g=(g_1',\ldots,g_E')'$ has dimension
$q_\theta=\dim(\theta)$. Ancestor dependence makes its Jacobian lower
block triangular.

\begin{ass}[Sequential identification and curvature]\label{main:identification}
For each retained edge, put
$Q_e(\vartheta)=E\ell_e(\theta_{<e,0},\vartheta,Z_t)$.
\textup{(i)} The population objective identifies the edge parameter:
\[
 \argmax_{\vartheta\in\Theta_e}Q_e(\vartheta)=\{\theta_{e,0}\}.
\]
\textup{(ii)} Its local curvature is nonsingular:
\[
 I_e=-E\partial_{\theta_e}g_e(\theta_0,Z_t)\succ0.
\]
\end{ass}

\begin{ass}[Normal-score smoothness]\label{main:regularity}
Each $\ell_e$ is continuous on a neighborhood $\mathcal N$ of the
compact parameter product and has continuous mixed derivatives
satisfying, for finite $C,r$,
\[
 \sup_{\theta\in\mathcal N}
 \bigl\|\partial_\theta^\alpha\partial_z^\gamma
                  \ell_e(\theta_{\le e},z)\bigr\|
 \le C(1+\|z\|^r),\qquad
 |\alpha|\le3,\quad |\gamma|\le2,\quad |\alpha|+|\gamma|\le3.
\]
The constants may depend on the fixed model, but not on $T$ or $z$.
\end{ass}

Under the remaining regularity conditions,
Assumption~\ref{main:identification}(i) identifies the limit of each
stagewise maximizer. Part (ii) makes the local score system invertible
at that limit. Neither global concavity nor uniqueness of every
stationary point is required. The distinction between a population
maximizer and a score root also governs numerical implementation, as
discussed by \citet[Section 4.3]{nagler_kruger_min_2022}.

The smoothness condition covers the composite edge objective and every
ancestor transformation. An earlier pair estimate changes the inputs
to later pairs, so bounds for isolated pair densities would not suffice.
Normal-score coordinates make the full recursion tractable: derivatives
near a PIT boundary can be bounded by polynomials in $z$. Normal
coordinate margins give these polynomials all fixed moments, including
under non-Gaussian joint dependence.

A separate argument controls the recursive regressions. The stable VAR
filter and growing initial training sample give uniform bounds over the retained
fits in Lemmas~\ref{dyn:moments}--\ref{dyn:first-stage}. Finite-lag
approximation of that filter also establishes the joint CLT in
Appendix~\ref{dji:section}, without imposing an additional mixing-rate
condition. These bounds depend on the fixed stationary law. They do
not provide uniform coverage near a unit root or a pair-parameter
boundary. Correct copula specification is needed for a further reason:
a pseudo-true limit, permitted in parts of
\citet{nagler_kruger_min_2022}, need not deliver the conditional event
probability in \eqref{main:conditional-target}.

For $a=(a_\mu',a_v')'$, put
$[T_a(z)]_i=(z_i-a_{\mu,i})/(1+a_{v,i})^{1/2}$. The score moments are
\begin{align}\label{main:score-matrices}
 \varphi(z)&=(z_1,\ldots,z_n,z_1^2-1,\ldots,z_n^2-1)',\notag\\
 J&=E\partial_\theta g(\theta_0,Z_t),\qquad
 S=E\{g(\theta_0,Z_t)g(\theta_0,Z_t)'\},\notag\\
 \Sigma_\varphi&=E\{\varphi(Z_t)\varphi(Z_t)'\},\qquad
 \Gamma=E\{g(\theta_0,Z_t)\varphi(Z_t)'\},\notag\\
 B_g&=E\,\partial_a g\{\theta_0,T_a(Z_t)\}\big|_{a=0}.
\end{align}
The columns of $B_g$ measure score sensitivity to standardized location
and relative variance errors. The diagonal blocks of $J$ are the
own-edge curvatures $-I_e$, while its lower blocks transmit estimation
error from ancestors. Thus the correction has both a marginal stage
and a sequence of dependence stages, consistent with the distinction
between marginal and copula estimation emphasized by
\citet{joe_2005,patton_2012}. Separate pairwise standard errors omit
the latter propagation.

Proposition~\ref{score:gaussian-family} verifies these conditions for
Gaussian vines: conditional normal scores are linear in the original
scores, making the edge objectives quadratic. For Clayton vines,
Online Appendix~\ref{tail:section} uses logarithms of conditional PITs
to propagate polynomial bounds through every ancestor, including
nonindependent higher-tree pairs. Free Clayton parameters stay in
compact subsets of $(0,\infty)$. Independence edges can be fixed in
advance.
The Clayton pair is
\[
 C_\vartheta(u,v)=(u^{-\vartheta}+v^{-\vartheta}-1)^{-1/\vartheta},
 \qquad \vartheta>0;
\]
its density and conditional transforms are obtained by differentiation
on $(0,1)^2$. Only the CDF is extended continuously to the closed square.
The additional conditions for forecast probabilities are stated in
Section~\ref{main:forecast-section}.
Selecting independence at a parameter boundary is a different problem.

\subsection{The joint dynamic limit}\label{main:dynamic-section}

Write $Q=E(x_tx_t')$ and $e_1=(1,0,\ldots,0)'$. Define the terminal
errors and their influence vector by
\begin{align}\label{main:terminal-influences}
 d_{i,T}&=(\widehat\beta_{i,T}-\beta_i)/\sigma_i,\qquad
 a^v_{i,T}=\widehat v_{i,T}/\sigma_i^2-1,\notag\\
 \delta_T&=(d_{1,T}',\ldots,d_{n,T}',a^v_{1,T},\ldots,a^v_{n,T})',\notag\\
 v_t&=(Z_{1t}^2-1,\ldots,Z_{nt}^2-1)',\qquad
 w_{i,t}=Q^{-1}x_tZ_{it},\notag\\
 e_t&=(w_{1,t}',\ldots,w_{n,t}',v_t')',\qquad
 \Delta_T=((\widehat\theta-\theta_0)',\delta_T')'.
\end{align}
Coefficient blocks are stacked by response equation. The joint
covariance has the decomposition
\begin{align}
 \xi_t&=\begin{pmatrix}
 -J^{-1}\{g(\theta_0,Z_t)+B_g\varphi(Z_t)\}\\ e_t
 \end{pmatrix},\notag\\
 \mathcal W_{\rm VAR}
 &=E(\xi_t\xi_t')+
 \begin{pmatrix}J^{-1}B_g\Sigma_\varphi B_g'J^{-\prime}&0\\0&0\end{pmatrix}.
 \label{main:covariance-decomposition}
\end{align}
The outer-product term combines marginal estimation and the copula
score, retaining their cross covariances. The second term captures the
additional variation from dated fits and affects only the copula block.
The first term is the terminal-refit covariance for either maintained
family (Section~\ref{main:dynamic-comparisons}). The decomposition
provides a direct positive semidefinite construction.
To display its blocks, let
$R_Z=E(Z_tZ_t')$, $M_3=E(Z_tv_t')$, and $S_v=E(v_tv_t')$.
Partition $M=\Gamma+B_g\Sigma_\varphi=(M_\mu,M_v)$, with $n$ columns
in each part, and define
\begin{align}\label{main:joint-covariance}
 W_\delta&=\begin{pmatrix}
 R_Z\otimes Q^{-1}&M_3\otimes e_1\\
 M_3'\otimes e_1'&S_v
 \end{pmatrix},\notag\\
 D_{\rm VAR}&=-J^{-1}
       (M_{\mu,1}e_1',\ldots,M_{\mu,n}e_1',M_v),\notag\\
 \mathcal W_{\rm VAR}&=
 \begin{pmatrix}V_{\rm seq}&D_{\rm VAR}\\D_{\rm VAR}'&W_\delta\end{pmatrix},
\end{align}
where $M_{\mu,i}$ denotes column $i$ and
\begin{equation}\label{main:sequential-covariance}
 V_{\rm seq}=J^{-1}
 \{S+\Gamma B_g'+B_g\Gamma'+2B_g\Sigma_\varphi B_g'\}J^{-\prime}.
\end{equation}
The block $M_3$ links coefficient and scale uncertainty. Asymmetric
dependence can produce nonzero mixed third moments even with normal
coordinate margins.

\begin{thm}[Joint inference after recursive VAR estimation]\label{main:dynamic}\label{dji:joint}
Under Assumptions~\ref{main:var-assumption}--\ref{main:regularity},
with the estimators and $b_T$ defined in Section~\ref{sec:method}, the sequential estimator is
consistent. With $N_T=T-b_T$ and
$h_{T,s}=\sum_{m=\max(b_T,s)}^{T-1}m^{-1}$, interpreted as zero for an
empty sum,
\begin{align}\label{main:sequential-expansion}
 \sqrt T(\widehat\theta-\theta_0)
 &=-\frac{J^{-1}}{\sqrt T}\sum_{s=1}^T
 \{1\{s>b_T\}g(\theta_0,Z_s)+h_{T,s}B_g\varphi(Z_s)\}+o_p(1),
 \\
 \sqrt T\delta_T&=T^{-1/2}\sum_{s=1}^Te_s+o_p(1),\qquad
 \sqrt T\Delta_T\Longrightarrow N(0,\mathcal W_{\rm VAR}).
 \label{main:joint-limit}
\end{align}
The covariance estimator in \eqref{main:joint-feasible} is positive
semidefinite and consistent. The terminal fitted conditional margins
satisfy
\begin{equation}\label{main:dynamic-margin-rate}
 \max_{i\le n}\sup_y
 |\widehat F_{i,T+1|T}(y)-F^0_{i,T+1}(y)|=O_p(T^{-1/2}).
\end{equation}
\end{thm}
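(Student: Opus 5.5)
The plan is a two-step (generated-regressor) argument. First I control the recursive marginal fits uniformly. Then I linearize the sequential vine score system around the true normal scores. Finally I aggregate the accumulated marginal influences via a summation-by-parts identity that produces the harmonic weights $h_{T,s}$.

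\textbf{Step 1: uniform first-stage control.} I start from Lemmas~\ref{dyn:moments}--\ref{dyn:first-stage}. With probability tending to one, the Gram threshold and variance floor are inactive for all $m\ge b_T$. On that event,
\[
 \widehat\beta_{i,m}-\beta_i=\sigma_i\widehat Q_m^{-1}m^{-1}\sum_{s\le m}x_sZ_{is},\qquad
 \widehat v_{i,m}/\sigma_i^2-1=m^{-1}\sum_{s\le m}(Z_{is}^2-1)+O_p(m^{-1}\log m).
\]
Both hold uniformly in $m\in[b_T,T]$, with $\max_{m\ge b_T}\|\widehat\beta_{i,m}-\beta_i\|=O_p(b_T^{-1/2}\log T)$ or a comparable rate. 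Substituting these into \eqref{dyn:variance} writes
\[
 \widehat Z_{it}=T_{a_{t-1}}(Z_t)_i-x_t'(\widehat\beta_{i,t-1}-\beta_i)/\sqrt{\widehat v_{i,t-1}},
\]
where $a_{t-1}$ collects the dated location and scale errors.

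\textbf{Step 2: linearizing the sequential fit.} Consistency of $\widehat\theta$ follows edge by edge. The uniform first-stage bounds and the polynomial envelopes of Assumption~\ref{main:regularity}, together with the finite moments of normal scores, give a uniform LLN for the edge objectives at generated inputs. Assumption~\ref{main:identification}(i) and compactness then yield consistency inductively along tree order, since each edge takes its ancestors as consistent.

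A mean-value expansion of the stacked score, using lower-block-triangular $J$ and Assumption~\ref{main:identification}(ii), gives
\[
 \sqrt T(\widehat\theta-\theta_0)=-J^{-1}T^{-1/2}\sum_{t>b_T}\bigl\{g(\theta_0,Z_t)+\partial_z g(\theta_0,Z_t)(\widehat Z_t-Z_t)\bigr\}+o_p(1).
\]
The second-order terms are $O_p(T^{1/2}\max_t\|\widehat Z_t-Z_t\|^2)=o_p(1)$, provided $\eta$ is handled through the $m^{-1}$ decay of squared errors; I would sum $\sum_t t^{-1}$ rather than use the crude maximum.

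\textbf{Step 3: the main obstacle, the correction term.} The correction term is
\[
 T^{-1/2}\sum_{t>b_T}\partial_zg(Z_t)\,D_\sigma^{-1}\bigl[-x_t'(\widehat\beta_{\cdot,t-1}-\beta)\,;\,\text{scale}\bigr].
\]
I would replace $\partial_z g(Z_t)\otimes x_t$ by its expectation conditional on $\mathcal F_{t-1}$ and control the martingale remainder by its variance, which is $O(\sum_t t^{-1}/T)\to0$.

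By independence of $Z_t$ from $x_t$, that conditional expectation is $E\partial_zg(Z_t)$ times $x_t'$. Write $x_t=e_1+(x_t-Ex_t)+(Ex_t-e_1)$. Since $Q^{-1}Ex_t=e_1$, the intercept identity $Ex_t'Q^{-1}=e_1'$ holds. The centered lag part averages out: by the stable VAR filter, $\sum_t(x_t-Ex_t)'Q_{t-1}^{-1}(\cdot)$ is of smaller order, which I would show by a finite-lag approximation and a second summation by parts.

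The surviving terms are
\[
 \sum_t (t-1)^{-1}\sum_{s<t}\varphi(Z_s).
\]
Exchanging the order of summation gives $\sum_s h_{T,s}\varphi(Z_s)$. Together with $E\partial_zg$ mapped into $\partial_a g$, this produces $B_g$ and yields \eqref{main:sequential-expansion}. The terminal expansion for $\delta_T$ is the $m=T$ case of Step~1.

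\textbf{Step 4: joint limit, covariance estimator and margins.} The joint CLT follows by Cramér--Wold from the martingale CLT applied to the array $\{g,h_{T,s}\varphi,e_s\}$. For the variances I use the harmonic identities
\[
 T^{-1}\sum_s h_{T,s}\to1,\qquad T^{-1}\sum_s h_{T,s}^2\to2.
\]
These give the factor $2B_g\Sigma_\varphi B_g'$ and the cross terms $\Gamma B_g'$. Combined with $E(g\,e')$, which involves $M_3$ and $e_1$ through the intercept identity again, they produce $\mathcal W_{\rm VAR}$ and its decomposition \eqref{main:covariance-decomposition}.

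The feasible estimator is a sample analog of the sum of two outer-product (Gram) forms, so it is positive semidefinite. Its consistency follows from the uniform LLN of Step~2 applied at $\widehat\theta$ and $\widehat Z_t$.

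Finally, \eqref{main:dynamic-margin-rate} follows from the mean-value theorem for $\Phi$. The derivative bound $\sup_y|\partial\Phi((y-\mu)/s)/\partial(\mu,s)|\lesssim1/s$ is applied with $\|x_{T+1}\|=O_p(1)$ by stationarity and $\sqrt T\delta_T=O_p(1)$.
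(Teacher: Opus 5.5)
Your proposal is correct and has the same architecture as the paper: uniform control of the dated fits, edge-by-edge consistency, a score expansion in the nuisance coordinates with the predictable martingale replacement of the score derivative by $B_g$, the intercept identity $\mu_x'Q^{-1}=e_1'$, and reversal of the double sum to obtain $h_{T,s}$. You prove two of the key lemmas differently.

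\emph{Centered-design cancellation.} The paper uses the exact coboundary $x_{m+1}-\mu_x=H\varepsilon_m+v_{m-1}-v_m$. This splits the sum into a martingale, a diagonal term and a telescoping term, and the telescoping term is handled by summation by parts. Your finite-lag plus summation-by-parts sketch also works. In fact Abel summation against the partial sums $P_m=\sum_{j=b_T}^m(x_{j+1}-\mu_x)$ suffices. Since $s_{i,m+1}-s_{i,m}=\{x_{m+1}Z_{i,m+1}-s_{i,m}\}/(m+1)$, it leaves a martingale with variance $O(\log T)$ and remaining terms of $L^1$ size $O(\log T)$, given $E\|P_m\|^4\le Cm^2$ from the stable filter. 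You should write this step out, because it is where stability is actually used.

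\emph{Joint CLT.} The paper truncates the design filter at $q_T\asymp\log T$ lags and applies Lyapunov to independent big blocks. That route needs only the exact unconditional covariance it has already computed. Your martingale CLT is shorter and equally valid. However, it also requires the conditional variance $T^{-1}\sum_tE(Y_{T,t}^2\mid\mathcal F_{t-1})$ to converge in probability. That means you need laws of large numbers for $T^{-1}\sum_tx_tx_t'$ and for the weighted averages $T^{-1}\sum_th_{T,t}x_t$ produced by the score--coefficient cross terms. Both follow from summable autocovariances and $h_{T,t}\le 1+\log T$.

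The same stationary LLN is what gives consistency of the $x_t$-dependent blocks of the feasible covariance. The iid uniform law from your Step~2 does not cover those blocks.
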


An innovation affects the dependence estimate both through its own
copula score and through the marginal fits used to standardize later
observations. Its contribution to a fit based on the first $m$
observations is proportional to $1/m$. Summing over the retained fits
that reuse innovation $s$ gives the harmonic weight $h_{T,s}$ in
\eqref{main:sequential-expansion}. Earlier innovations enter more of
these fits and receive greater weight. The normalized sum of squared
weights tends to two, whereas the corresponding average over retained
score dates tends to one. These limits explain the coefficients in
\eqref{main:sequential-covariance}. The recursive weighting mechanism
is familiar from \citet{west_1996}. Here it applies to the marginal
innovation moments and their joint covariance with the terminal VAR fit.

The reduction of the historical regression error to $\varphi(Z_s)$
depends on averaging over the lag designs. The intercept identity
$Qe_1=E(x_t)$ implies $E(x_t)'Q^{-1}x_s=1$, so the mean design leaves
only an innovation location term. Stability makes the normalized
contribution of the centered designs negligible, as established in
Lemma~\ref{dyn:interaction}. Location and scale errors therefore remain
in the general historical correction, even though every recursive regression
estimates the full coefficient vector. Gaussian sign symmetry makes its
location columns zero. The remaining scale contribution is explicit in
Proposition~\ref{rli:gaussian-primitives} for the bivariate case.

The terminal forecast is evaluated at one observed lag state, where
there is no such averaging. Errors in both the intercept and the slopes
affect its conditional mean, and their covariance retains the full
block $R_Z\otimes Q^{-1}$. The historical and terminal fits also share
innovations, which accounts for the cross block in
\eqref{main:joint-covariance}. The covariance comparisons in
Section~\ref{sec:sim-components} distinguish these contributions at a
common point forecast. Appendix~\ref{dji:section} gives the complete
argument, with bounds for recursive estimation and moment consistency in Online
Appendix~\ref{supp:joint-proofs}.

For implementation, use the stored normal scores, mathematically
$\widehat Z_{it}=\Phi^{-1}(\widehat U_{it})$, and put
$\widehat g_t=g(\widehat\theta,\widehat Z_t)$ and
$\widehat\varphi_t=\varphi(\widehat Z_t)$. Estimate $J$, $B_g$, and
$\Sigma_\varphi$ by the retained-sample averages of their defining
derivatives and outer products in \eqref{main:score-matrices}.
Let $\widehat e_t$ substitute $\widehat Q_T^{-1}$ and $\widehat Z_t$
in \eqref{main:terminal-influences}. A feasible joint covariance is
\begin{align}\label{main:joint-feasible}
 \widehat\xi_t&=
 \begin{pmatrix}-\widehat J^{-1}
       (\widehat g_t+\widehat B_g\widehat\varphi_t)\\\widehat e_t\end{pmatrix},
 \notag\\
 \widehat{\mathcal W}_{\rm VAR}
 &=\frac1{N_T}\sum_{t>b_T}\widehat\xi_t\widehat\xi_t'
 +\begin{pmatrix}
 \widehat J^{-1}\widehat B_g\widehat\Sigma_\varphi
                  \widehat B_g'\widehat J^{-\prime}&0\\0&0
 \end{pmatrix}.
\end{align}
Use the identity matrix if a required inverse is absent. The probability
of this fallback tends to zero. Each term in
\eqref{main:joint-feasible} is positive semidefinite, so the construction
avoids a separate adjustment to enforce that property. The common outer
product already contains the copula--terminal cross covariances.
Only the copula block receives the additional historical correction.

\subsection{A confidence interval at the random forecast state}
\label{main:forecast-section}

For a lag state $r\in\mathbb R^{d_x}$, write
$z_i(r)=(c_i-r'\beta_i)/\sigma_i$ and $u_i(r)=\Phi\{z_i(r)\}$.
Let $C_i(\theta,u)=\partial C_\theta(u)/\partial u_i$.
The gradient in the normalized coordinates of $\Delta_T$ has blocks
\begin{align}\label{main:event-gradient}
 d_\theta(r)&=\partial_\theta C_{\theta_0}\{u(r)\},\notag\\
 d_{\beta_i}(r)&=-C_i(\theta_0,u(r))\phi\{z_i(r)\}\,r,\notag\\
 d_{v_i}(r)&=-\tfrac12 C_i(\theta_0,u(r))z_i(r)\phi\{z_i(r)\}.
\end{align}
Stack these columns in the same order as $\Delta_T$ and put
$\tau^2(r)=d(r)'\mathcal W_{\rm VAR}d(r)$.
The coefficient gradient includes the entire state $r$.
Let $X$ have the stationary law of $x_t$.

\begin{ass}[Event smoothness and nondegeneracy]\label{main:event-regularity}
\textup{(i)} The map $(\theta,u)\mapsto C_\theta(u)$ is twice
continuously differentiable on $\mathcal N_0\times(0,1)^n$, for an
open neighborhood $\mathcal N_0$ of $\theta_0$.
\textup{(ii)} For the fixed event under consideration,
$P\{\tau^2(X)>0\}=1$.
\end{ass}
Part (i) permits a Taylor expansion of the event probability. Part (ii)
is needed for studentization and concerns the variance of this event
alone, so the full joint covariance may be singular. No positive lower
bound over all states is required. As in the prediction results of
\citet{nagler_kruger_min_2022}, estimation regularity and smoothness of
the prediction functional are separate requirements. The remaining
issue here is the joint behavior of estimation error and the terminal
state, resolved by Lemma~\ref{dfe:state-separation}.

\begin{thm}[Conditional event inference]\label{main:event-inference}\label{dfe:event-clt}
Under Theorem~\ref{main:dynamic} and
Assumption~\ref{main:event-regularity}(i), let
$G\sim N(0,\mathcal W_{\rm VAR})$ be independent of $X$. Then
\begin{equation}\label{main:event-limit}
 \bigl(\sqrt T\{\widehat p_{T+1}(c)-p_{T+1}(c)\},x_{T+1}\bigr)
 \Longrightarrow (d(X)'G,X).
\end{equation}
Let $\widehat d_T$ substitute the fitted parameters and terminal
standardized thresholds in \eqref{main:event-gradient}, and define
$\widehat\tau_T^2=\widehat d_T(x_{T+1})'
\widehat{\mathcal W}_{\rm VAR}\widehat d_T(x_{T+1})$.
Then $\widehat\tau_T^2-\tau^2(x_{T+1})\to_p0$.
If Assumption~\ref{main:event-regularity}(ii) also holds, then for every
fixed $0<\alpha<1$,
\begin{align}
 \frac{\sqrt T\{\widehat p_{T+1}(c)-p_{T+1}(c)\}}{\widehat\tau_T}
 &\Longrightarrow N(0,1),\label{main:pivot}\\
 P\left\{p_{T+1}(c)\in
 [\widehat p_{T+1}(c)\pm z_{1-\alpha/2}\widehat\tau_T/\sqrt T]
                   \cap[0,1]\right\}&\longrightarrow1-\alpha.
 \label{main:interval}
\end{align}
Set the statistic to zero when its estimated variance is zero. That
event has vanishing probability under the positivity condition.
\end{thm}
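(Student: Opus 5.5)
The plan is to combine a first-order Taylor expansion of the event map at the random state $x_{T+1}$ with a state-separation argument. That argument shows the leading estimation error is asymptotically independent of $x_{T+1}$; this is the role of Lemma~\ref{dfe:state-separation}.

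\emph{Step 1: linearization.} Write $\widehat p_{T+1}(c)-p_{T+1}(c)=\Psi(\widehat\vartheta_T,x_{T+1})-\Psi(\vartheta_0,x_{T+1})$, where $\Psi(\vartheta,r)=C_\theta(\Phi\{(c_i-r'\beta_i)/\sqrt{v_i}\}:i\le n)$. By Assumption~\ref{main:event-regularity}(i) and the chain rule, in the normalized coordinates of $\Delta_T$ we get
\[
\sqrt T\{\widehat p_{T+1}(c)-p_{T+1}(c)\}=d(x_{T+1})'\sqrt T\Delta_T+R_T .
\]
The remainder $R_T$ is bounded by $\sqrt T\|\Delta_T\|^2$ times a modulus of the second derivatives on a shrinking neighborhood. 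Since $\sqrt T\Delta_T=O_p(1)$ by Theorem~\ref{main:dynamic}, and $x_{T+1}$ is tight by stationarity, $R_T=o_p(1)$. Two points need care.
\begin{itemize}
\item The second derivatives must be bounded over compacts in $(0,1)^n$. With probability approaching one we may restrict to $\|x_{T+1}\|\le M$, where $u(x_{T+1})$ lies in a fixed compact subset of $(0,1)^n$.
\item The coefficient gradient carries the factor $r$. This is harmless on $\{\|r\|\le M\}$.
\end{itemize}

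\emph{Step 2: separation and the joint limit.} By \eqref{main:sequential-expansion}–\eqref{main:joint-limit}, $\sqrt T\Delta_T=T^{-1/2}\sum_{s\le T}\zeta_{T,s}+o_p(1)$ with $\zeta_{T,s}$ a fixed function of $Z_s$. Fix $k_T\to\infty$ with $k_T/\sqrt T\to0$, and split the sum at $T-k_T$.
\begin{itemize}
\item The last $k_T$ summands have finite second moments and weights bounded by $h_{T,s}\le h_{T,b_T}=O(\log T)$. On $s>T-k_T$ the weights are in fact $O(k_T/T)$, so this block contributes $o_p(1)$.
\item Stability gives $x_{T+1}=x^{(k_T)}_{T+1}+o_p(1)$, where $x^{(k_T)}_{T+1}$ depends only on $Z_{T-k_T+1},\dots,Z_T$, because the companion powers decay geometrically.
\end{itemize}
The truncated sum is then exactly independent of $x^{(k_T)}_{T+1}$. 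Its limit $N(0,\mathcal W_{\rm VAR})$ is unchanged, because removing $k_T=o(T)$ summands does not alter the CLT of Theorem~\ref{main:dynamic}. Hence the pair converges jointly to $(G,X)$ with $G$ independent of $X$. Step 1 and the continuous mapping theorem, applied to $(g,r)\mapsto d(r)'g$ (continuous by part (i)), give \eqref{main:event-limit}. I expect this to be the main obstacle. One must check that the influence representation is additive in $Z_s$ with controllable tail weights, and that the $o_p(1)$ remainders from Theorem~\ref{main:dynamic}, which involve recursive fits up to $T$, survive the split. If they do not, I would redo the expansion with the truncated sample and use the uniform first-stage bounds of Lemma~\ref{dyn:first-stage}.

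\emph{Step 3: variance consistency and the pivot.} $\widehat d_T(x_{T+1})-d(x_{T+1})\to_p0$ follows from consistency of $(\widehat\theta,\widehat\beta_{i,T},\widehat v_{i,T})$, continuity of $d(\cdot)$ in parameters and state, and tightness of $x_{T+1}$. Combined with consistency of $\widehat{\mathcal W}_{\rm VAR}$ from Theorem~\ref{main:dynamic}, this gives $\widehat\tau_T^2-\tau^2(x_{T+1})\to_p0$. Joint convergence then yields
\[
\bigl(\sqrt T\{\widehat p-p\},\widehat\tau_T\bigr)\Longrightarrow\bigl(d(X)'G,\tau(X)\bigr).
\]
Conditional on $X$, the ratio $d(X)'G/\tau(X)$ is $N(0,1)$ on $\{\tau(X)>0\}$, and this set has probability one under part (ii). The map $(a,b)\mapsto a/b$ is continuous off $\{b=0\}$, which is a null set for the limit, so \eqref{main:pivot} follows; the convention of setting the statistic to zero matters only on an event of vanishing probability. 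Finally, \eqref{main:interval} follows because intersecting with $[0,1]$ does not change coverage, since $p_{T+1}(c)\in[0,1]$, and the standard normal limit is continuous at $\pm z_{1-\alpha/2}$.
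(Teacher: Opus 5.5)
Your proposal is correct and follows essentially the paper's route: a compact-localized Taylor expansion at the random state, then separation of the leading influence array from $x_{T+1}$ by deleting its last $k_T$ summands and truncating the stable filter, then the continuous mapping theorem, and finally studentization through conditional normality given $X$; the paper uses $k_T=\lceil\log T\rceil$ with martingale orthogonality and localizes small variances, while your $k_T=o(\sqrt T)$ with the triangle inequality and the continuous mapping theorem off $\{\tau=0\}$ work equally well. One correction: the summands are not functions of $Z_s$ alone, since the coefficient influence $Q^{-1}x_sZ_{is}$ involves the lag state, but the truncated sum is still $\mathcal F_{T-k_T}$-measurable and hence independent of the truncated state, and the $o_p(1)$ remainder from Theorem~\ref{main:dynamic} never needs to be split, because it can be added after the joint limit of the leading array is established.
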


The estimates and terminal state share observations, yet their leading
limits depend on separate parts of the innovation history. Estimation
error accumulates over the full sample, so deleting a slowly growing
block of recent innovations changes its normalized influence negligibly.
Stability allows the terminal state to be approximated by those same
recent innovations, since the distant past has a geometrically decaying
effect. The two approximations are independent, and both approximation
errors vanish. This gives the independence of $G$ and $X$ in
\eqref{main:event-limit} without requiring finite-sample independence.

For a given limiting state $X$, the leading forecast error is the
Gaussian projection $d(X)'G$, with variance $\tau^2(X)$. Different
states change the sensitivity of the forecast to parameter error,
producing a mixed normal distribution across samples. The feasible standard
error adapts to the realized state, so studentization removes this
random scale and yields \eqref{main:pivot}. Almost-sure positivity of
$\tau^2(X)$ suffices. States with arbitrarily small positive variance
can be handled by localization, without imposing a uniform lower bound.
Appendix~\ref{dfe:section} establishes the separation and studentization
arguments. The resulting coverage statement concerns a conditional
probability evaluated at a random realized state. It does not guarantee
coverage conditional on every training history or provide a prediction
set for the next binary outcome. Section~\ref{sec:sim-coverage} examines
this repeated-sample coverage in finite samples.

For the family verifications and rectangle extension, a sufficient
smoothness condition on the full log density
$\ell^*(\theta,z)=\sum_e\ell_e(\theta_{\le e},z)$ is
\begin{equation}\label{main:event-envelope}
 \sup_{\theta\in\mathcal N_0}
 \|\partial_\theta^\alpha\partial_z^\gamma\ell^*(\theta,z)\|
 \le C(1+\|z\|^r),\qquad |\alpha|+|\gamma|\le3,
\end{equation}
with continuous derivatives on a neighborhood of a compact parameter
set containing $\theta_0$ in its interior. This includes pure third
derivatives in $z$, which Assumption~\ref{main:regularity} does not
require. Lemma~\ref{pred:cdf-smooth} shows that it implies
Assumption~\ref{main:event-regularity}(i). Gaussian log densities are
quadratic in $z$. Lemma~\ref{tail:propagation} verifies the bound for
the positive Clayton vines, including the ancestor transformations.

For a fixed rectangle $A=\prod_i(\ell_i,u_i]$, replace the CDF by
its probability over $A$ and use the corresponding gradient.
Corollary~\ref{dfe:rectangle-clt} proves the same interval, treating
infinite endpoints as inactive boundaries, under
\eqref{main:event-envelope} and the event-specific condition
$P\{\tau_A^2(X)>0\}=1$. This covers fixed mixed directions and target subsets without
deleting factors from a larger fitted vine. Thresholds estimated from
the same observations or moving into the tail require additional
influence or uniformity arguments.

\subsection{Dynamic refitting and finite training weights}\label{main:dynamic-comparisons}

A terminal refit provides a feasible benchmark that keeps the copula
family and the population target fixed. Keep the observations and
terminal marginal estimates fixed, recompute each standardized residual,
and fit the same sequential vine:
\begin{equation}
 \widetilde Z_{it}=\frac{y_{it}-x_t'\widehat\beta_{i,T}}
                            {\sqrt{\widehat v_{i,T}}},\quad 1\le t\le T,
 \qquad
 \widetilde\theta_e\in\arg\max_{\theta_e\in\Theta_e}
 \frac1T\sum_{t=1}^T
       \ell_e(\widetilde\theta_{<e},\theta_e,\widetilde Z_t).
 \label{dfs:estimator}
\end{equation}
All $T$ residuals enter this comparator, with the same parameter spaces
and measurable tie-breaking as in \eqref{eq:sequential-fit}.
Proposition~\ref{dfs:theorem} proves the Gaussian result, and Online
Proposition~\ref{refit:theorem} establishes the same expansion for
positive Clayton vines and for a refit using only the archive's retained
dates. The joint influence vector has copula block $-J^{-1}\{g_t+B_g\varphi_t\}$ and terminal
block $e_t$, together with the covariance ordering
\begin{equation}\label{main:dynamic-archive-cost}
 \mathcal W_{\rm VAR}-\mathcal W_{\rm fs}
 =\begin{pmatrix}J^{-1}B_g\Sigma_\varphi B_g'J^{-\prime}&0\\0&0\end{pmatrix}
 \succeq0.
\end{equation}
Evaluate $\widehat g_t,\widehat\varphi_t,\widehat J,\widehat B_g$
at $(\widetilde\theta,\widetilde Z_t)$ using all $T$ observations,
and form $\widehat e_t$ from the same residuals and $\widehat Q_T^{-1}$.
The feasible full-sample covariance is
\begin{equation}
 \widehat\xi_t^{\rm fs}=
 \begin{pmatrix}-\widehat J^{-1}
       (\widehat g_t+\widehat B_g\widehat\varphi_t)\\\widehat e_t\end{pmatrix},
 \qquad
 \widehat{\mathcal W}_{\rm fs}
       =T^{-1}\sum_{t=1}^T\widehat\xi_t^{\rm fs}
                              \widehat\xi_t^{{\rm fs}\prime}.
 \label{dfs:feasible}
\end{equation}
Use the identity-matrix convention when a required inverse is absent.
The refit covariance retains every terminal slope and scale. Terminal
marginal errors are common across residuals. Averaging their score
derivatives and using the intercept identity gives the correction
$B_g\varphi_t$, with unit weights rather than harmonic weights.
Equation~\eqref{main:dynamic-archive-cost}
then compares limiting covariances at the same population parameters.
The extra archive term measures the variance cost of training the copula
on successive out-of-sample marginal forecasts while updating the
terminal margins.
The ordering need not extend to finite-sample standard errors evaluated at separately
fitted parameters. Nor does it compare estimators trained on different
data vintages or forecasting laws.

To isolate the transformation rule in a finite sample, also fit the
terminal residuals on the same $N=T-b_T$ dates as the archive. Marginal
estimation still uses all $T$ observations. The covariance of its leading
array is
\begin{equation}\label{main:matched-refit-covariance}
 \mathcal W_{{\rm rf},T}^{(b_T)}
 =\mathcal W_{\rm fs}+\left(\frac TN-1\right)
 \begin{pmatrix}J^{-1}SJ^{-\prime}&0\\0&0\end{pmatrix}.
\end{equation}
The additional score term reflects the shorter copula sample. Online
Section~\ref{supp:matched-refit} gives the influence expansion, a positive
semidefinite feasible estimator, and the event-inference proof.
The matched-date and all-observation refits have the same limiting
covariance because $b_T/T\to0$. The simulations report both to distinguish
the use of extra dates from the use of revised transforms.

The initial training period may occupy a substantial fraction of a
practical sample, even though $b_T/T\to0$. Retaining that fraction in
the leading array gives a finite-weight covariance. Write
$b=b_T$, $N=T-b$, and $H_T=\sum_{m=b}^{T-1}m^{-1}$.
Keeping the exact $T/N$ normalization replaces the copula-score
covariance by
\begin{align}\label{main:finite-weights}
 \Omega_T^{\rm fin}
 &=k_{11}S+k_{12}(\Gamma B_g'+B_g\Gamma')+
                         k_{22}B_g\Sigma_\varphi B_g',\notag\\
 k_{11}&=T/N,\qquad k_{12}=T(N-bH_T)/N^2,\qquad
 k_{22}=T\{2N-(2b-1)H_T\}/N^2.
\end{align}
The copula--terminal cross coefficients remain one, and the terminal
block is unchanged. To use one common empirical moment matrix, set
$a_{T,s}=(T/N)\mathbf1\{s>b\}$ and $d_{T,s}=(T/N)h_{T,s}$ and form
\begin{align}
 \widehat\zeta_t&=(\widehat g_t',\widehat\varphi_t',\widehat e_t')',&
 \widehat C_T&=\frac1N\sum_{t>b}\widehat\zeta_t\widehat\zeta_t',\notag\\
 \widehat L_{T,s}
 &=\begin{pmatrix}
 -a_{T,s}\widehat J^{-1}&
 -d_{T,s}\widehat J^{-1}\widehat B_g&0\\
 0&0&I_{n d_x+n}
 \end{pmatrix},&
 \widehat{\mathcal W}_T^{\rm fin}
 &=\frac1T\sum_{s=1}^T
       \widehat L_{T,s}\widehat C_T\widehat L_{T,s}'.
 \label{fdw:feasible}
\end{align}
The score, nuisance and terminal quantities use the same retained
observations as \eqref{main:joint-feasible}. Use its identity-matrix convention.
In computation, partition $\widehat C_T$ into score, nuisance and
terminal blocks and apply the coefficients in \eqref{main:finite-weights},
without evaluating the displayed sum date by date.
Corollary~\ref{fdw:feasibility} proves that this estimator is positive
semidefinite and consistent and can replace
\eqref{main:joint-feasible} in the event interval. At $T=1000$, the
training rule $b_T=\lceil T^{0.7}\rceil$ gives $b=126$ and coefficients
$1.1442$, $0.8019$, and $1.6065$, instead of $1,1,2$.
These coefficients are exact for the leading array. Since the
linearization remainder is omitted, they provide neither an exact
finite-sample variance nor a coverage improvement theorem.
Section~\ref{sec:sim-small} compares the resulting intervals with
limiting-weight and terminal-refit intervals.

The terminal slopes also explain why one-sided event intervals are
nondegenerate. Centering the lag design isolates a coefficient influence
orthogonal to every function of the current innovation, including the
copula scores and scale influences. Its projection on the event gradient
is positive except at the mean lag state, which has probability zero.
Proposition~\ref{ond:proposition} therefore verifies the required
almost-sure positivity for Gaussian and positive Clayton archives.
Gaussian sign symmetry gives the stronger conclusion of positivity at
every finite state for both archive and refit
(Proposition~\ref{gnd:proposition}). General bounded rectangles retain
the event-specific condition.

\subsection{Logit intervals and the variance cost of the archive}
\label{main:archive-interpretation}

Near zero or one, a symmetric probability-scale interval can have a
different finite-sample behavior from an interval constructed after
transformation. Applying the normal approximation to log odds gives
asymmetric probability endpoints and preserves the first-order coverage
result. Write $\widehat p_T$ for the fitted event probability
at $x_{T+1}$, and put
$\ell(p)=\log\{p/(1-p)\}$, $\Lambda(u)=(1+e^{-u})^{-1}$, and
$\widehat s_T=\widehat\tau_T/\sqrt T$. For an interior fitted probability,
the logit interval is
\begin{equation}\label{main:logit-interval}
 I_{\ell,T}=\left[
 \Lambda\left\{\ell(\widehat p_T)-
       \frac{z_{1-\alpha/2}\widehat s_T}{\widehat p_T(1-\widehat p_T)}\right\},\quad
 \Lambda\left\{\ell(\widehat p_T)+
       \frac{z_{1-\alpha/2}\widehat s_T}{\widehat p_T(1-\widehat p_T)}\right\}
 \right].
\end{equation}
Proposition~\ref{rli:logit} proves its repeated-sample coverage under the
corresponding event theorem, $0<p_A(X)<1$ and $\tau_A^2(X)>0$ almost
surely. Its proof localizes away from probability boundaries and small
projected variances, so no global lower bound over all lag states is
imposed. A deterministic clipping threshold tending to zero defines the formula at numerical endpoints. Section~\ref{sec:sim-logit} examines finite-sample coverage.

The size of the archive correction depends on the event as well as the
forecast state. Partition the event gradient
into dependence and marginal components, and set
\[
 q_A(r)=\|\Sigma_\varphi^{1/2}B_g'J^{-\prime}d_{A,\theta}(r)\|^2.
\]
Proposition~\ref{rli:cost} gives
$\tau_{\rm ar,A}^2(r)=\tau_{\rm fs,A}^2(r)+q_A(r)$ for either maintained
family in the terminal-refit comparison. Thus $q_A(r)/\tau_{\rm ar,A}^2(r)$ measures the
share of event variance due specifically to preserving issued margins.
Corollary~\ref{rli:cost-coverage} quantifies limiting undercoverage when a refit
variance is used at the archived probability estimate. Holding that
estimate fixed also permits an exact calculation of when omitting the
archive term changes exclusion of a reference probability. These
comparisons isolate a covariance omission from changes in the fitted
forecast. The application reports both the archive share and the effect
of removing terminal-margin uncertainty.

The bivariate Gaussian case explains what makes this share small or
large. Let $\rho$ denote the estimated pair's true correlation,
$z_i(r)=(c_i-r'\beta_i)/\sigma_i$, and
$h(r)=\phi_2\{z_1(r),z_2(r);\rho\}$, where $\phi_2$ is the
standard bivariate normal density. Proposition~\ref{rli:gaussian-primitives}
gives
\begin{equation}\label{main:bivariate-archive-cost}
 q_A(r)=h(r)^2\frac{\rho^2(1-\rho^2)^2}{1+\rho^2}.
\end{equation}
The second factor measures the additional correlation-estimation
variance caused by dated scale estimates. Historical location errors
make no first-order contribution in the Gaussian model. The first
factor measures the event's squared sensitivity to correlation.
The first-order archive cost is therefore zero at independence, even
when correlation is freely estimated. Away from independence, weak
event sensitivity can still make the cost small.

The archive share also depends on the variance common to both estimators.
To isolate this denominator, consider the stationary submodel $y_t=Z_t$
while estimating every coefficient and scale of the full bivariate
VAR($p$). For the fixed event $y_{1,T+1}\le0$, $y_{2,T+1}\le0$,
the true standardized thresholds are zero at every lag state.

With state leverage $\lambda_x(r)=r'Q^{-1}r\ge1$, the variances satisfy
\begin{equation}
\begin{aligned}
 q_A(r)&=\frac{\rho^2(1-\rho^2)}{4\pi^2(1+\rho^2)},\qquad
 \tau_{{\rm fs},A}^2(r)=\frac{1-\rho^2}{4\pi^2}
                         +\frac{\lambda_x(r)(1+\rho)}{4\pi},\\
 \frac{q_A(r)}{\tau_{{\rm ar},A}^2(r)}
 &=\frac{\rho^2/(1+\rho^2)}
 {1+\rho^2/(1+\rho^2)+\pi\lambda_x(r)/(1-\rho)}.
\end{aligned}
 \label{main:bivariate-archive-share}
\end{equation}

Increasing leverage raises the common terminal-coefficient variance
without changing $q_A(r)$. With $\rho=0.5$, the archive share falls from
$2.67\%$ at $\lambda_x=1$ to $1.00\%$ at $\lambda_x=3$. At
$\rho=-0.5$ and the same two leverage values, it is $6.07\%$ and
$2.67\%$, despite the same added variance. For nonnegative correlation,
Corollary~\ref{rli:gaussian-zero} bounds the share by
$4/(27\pi\lambda_x+4)$ in this submodel. A modest share can therefore
reflect substantial common marginal uncertainty, even when the archive
correction is nonzero. These population calculations explain a mechanism
within the maintained class. The three-variable simulation and
application magnitudes require their own event gradients and covariance
blocks.

\subsection{Numerical implementation}\label{main:implementation}

Implementation follows the same sequence as the theory. Fit the vine
to the dated transforms, then use their stored normal scores and lag
rows to form \eqref{main:joint-feasible}. Evaluate the event probability
and its gradient at $x_{T+1}$ using the terminal marginal estimates.
The resulting standard error gives either
\eqref{main:interval} or \eqref{main:logit-interval}.
\citet{stoeber_schepsmeier_2013} give algorithms for analytic scores
and observed information in regular-vine models. Here, the covariance
calculation also requires the derivatives with respect to the dated
marginal inputs and the terminal forecast parameters. Analytical and
numerical derivatives are both admissible under the accuracy conditions
below.

Propositions~\ref{op:numerical-score}--\ref{op:numerical-event} justify
a consistent local numerical solution whose average score residual is
$o_p(T^{-1/2})$, probability error is $o_p(T^{-1/2})$, and gradient
and covariance errors are $o_p(1)$. A small residual alone does not
select the consistent local solution. Consistency or a vanishing
stagewise objective gap is also required. The bound in \eqref{op:finite-difference-bound} also justifies a forward
difference with step $s_T=T^{-1/4}$ when probability-evaluation error
at the required points is $o_p(T^{-1/2})$.
For event and gradient evaluation, these bounds hold uniformly on
each compact set of lag states and a neighborhood of the true
parameters. Stationarity then permits evaluation at the random
terminal state.

\section{Monte Carlo Simulations}\label{sec:simulation}

The simulations ask how retaining dated transforms affects event
probability accuracy and interval coverage. Same-family refits isolate
that timing choice. Comparisons between Gaussian and Clayton pairs
under a fixed timing rule isolate dependence specification. Stationary
vines provide a published alternative for modeling serial and
cross-variable dependence. We then examine which estimation errors
matter for inference and how the procedure performs in larger systems.
The results show a small accuracy cost of retaining dated transforms,
large downside gains from the correct copula family, and substantial
coverage gains from accounting for current marginal estimation.

\subsection{Design and comparisons}\label{sec:sim-design}\label{replication:sim_dgp}

The baseline data-generating process is
\begin{equation}
 y_t=r y_{t-1}+\sqrt{1-r^2}\,Z_t,\qquad
 Z_{it}=\Phi^{-1}(U_{it}),\qquad U_t\overset{\mathrm{iid}}{\sim}C_{\theta_0},
 \label{sim:approved-dgp}
\end{equation}
with three variables, $r\in\{0.25,0.75\}$, and
$T\in\{250,1000,4000\}$. The stationary variance of each variable is
one. The innovation copula is a D-vine with pairs $(1,2)$, $(2,3)$,
and $(1,3\mid2)$. All pairs are Gaussian or positive Clayton, with
Kendall's tau $(0.30,0.30,0.20)$. The corresponding parameters are
$\sin(\pi\tau_e/2)$ for Gaussian pairs and $2\tau_e/(1-\tau_e)$
for Clayton pairs. Estimating the second-tree pair allows estimation
error in the first-tree pairs to affect subsequent estimates.

Each design uses 4,000 independent Monte Carlo replications after
discarding 1,000 initial observations generated from zero. Estimation
fits an intercept and all lag coefficients in each VAR equation,
including coefficients that are zero in the data-generating process.
The initial training sample has $b_T=\lceil T^{0.7}\rceil$ observations.
The proposed estimator, labeled Archive in the tables, retains each
historical transform at the marginal estimates available when it was
formed. It uses the terminal marginal estimates to forecast the next
observation.

Zero denotes the event $y_{i,T+1}\leq0$ for all $i=1,2,3$, and
Downside denotes $y_{i,T+1}\leq-1$ for all three variables. The
supplement also reports Mixed, the event $y_{1,T+1}\leq0$ and
$y_{3,T+1}>0$, with variable 2 unrestricted. In each replication, the
true probability is computed from \eqref{sim:approved-dgp} at the
observed terminal state $y_T$. Root mean squared error (RMSE) measures
the difference between the estimated and true probabilities. Smaller
RMSE indicates greater forecast accuracy. Coverage is the proportion
of replications whose confidence interval contains the true probability,
with unavailable intervals counted as noncoverage. All intervals have
nominal coverage 95\%, so coverage closer to 95\% is better. Mean
length uses available intervals and measures precision. Monte Carlo
standard errors measure simulation uncertainty, with paired calculations
when methods use the same simulated samples.

Three interval variants hold the proposed point estimate fixed. Wald
constructs symmetric endpoints on the probability scale and intersects
the interval with $[0,1]$. Logit constructs endpoints on the log-odds
scale and transforms them back to probabilities, as in
Proposition~\ref{rli:logit}. Both use the limiting-weight covariance.
Finite logit instead uses the actual training weights in
\eqref{fdw:feasible}. These comparisons assess the effects of the
interval transformation and the treatment of the initial training sample.

The same-family comparisons use the first 2,000 replications of every
design. Matched refit recalculates the residuals at the terminal marginal
estimates but keeps the same $T-b_T$ copula training dates. Full refit
uses all $T$ residuals. Both fit the same Gaussian or Clayton family as
Archive, and both estimate margins on all $T$ observations. Their
covariance follows Online Proposition~\ref{refit:theorem}. Comparing
Archive with Matched refit isolates transform timing. Comparing the
two refits isolates the additional dates. In Clayton designs we also
fit Gaussian pairs under each of the three timing rules, holding the
remaining specification fixed. This separates copula-family effects
from the effects of retaining dated transforms.

The known-margin oracle receives the true VAR coefficients and
innovation scales but estimates the copula on the same $T-b_T$ dates.
Its supplementary comparison uses the original 4,000 replications and
measures the cost of estimating the marginal model.

The stationary-vine benchmark follows \citet{nagler_kruger_min_2022}.
It models temporal and cross-variable dependence directly in observed
levels. We use order one, normal unconditional margins, Gaussian pairs,
and variable order $(1,2,3)$, which represents the Gaussian VAR law.
Each Gaussian design uses the same 2,000 replications as the refit
comparisons. Every scalar pair likelihood is maximized over its full
admissible correlation interval. Two percentile-interval constructions
start from that same point fit and use 999 draws. S-vine J uses the full
Jacobian of the sequential score equations and Gaussian multipliers.
S-vine N uses the native \texttt{svines} bootstrap, including its Hessian
and multiplier rule. Both retain valid original draws without replacement.
Online Section~\ref{supp:sim-benchmark} gives the construction and
availability rule. Comparing both algorithms assesses the sensitivity
of inference for the stationary-vine forecast.

\subsection{Coverage and precision}
\label{sec:sim-coverage}\label{sec:sim-small}\label{sec:sim-logit}

Logit intervals improve downside coverage relative to Wald intervals
(Table~\ref{tab:simulation-coverage}). Across the baseline designs,
logit coverage for Zero and Downside ranges from 94.1\% to 95.5\%,
and interval lengths decrease as $T$ increases. The largest improvement
occurs for Downside at $T=250$ and $r=0.75$. Gaussian coverage rises
from 84.8\% to 94.2\%, and Clayton coverage rises from 85.6\% to
94.2\%, with small increases in mean length. Thus the logit
transformation gives more accurate coverage for this tail event while
using the same point estimate and covariance. These findings support
the interval approximation in Theorem~\ref{main:event-inference} and
Proposition~\ref{rli:logit}.

Finite training weights change baseline coverage by at most 0.3
percentage points and leave interval lengths nearly unchanged. They
therefore provide similar performance to the limiting weights here,
consistent with their first-order equivalence in
Corollary~\ref{fdw:feasibility}. The supplement examines other training
lengths and the Mixed event, where logit intervals do not consistently
improve on Wald intervals.

\begin{table}[!htbp]
\centering
\caption{Coverage and length of archive intervals}\label{tab:simulation-coverage}\label{tab:small-sample-coverage}
\small
\setlength{\tabcolsep}{4pt}
\renewcommand{\arraystretch}{1.06}
\begin{tabularx}{\linewidth}{@{\extracolsep{\fill}}lrrrrr}
\toprule
Copula & $r$ & $T$ & Wald & Logit & Finite logit\\
\midrule
\midrule
\multicolumn{6}{@{\extracolsep{\fill}}l}{\textit{Panel A: Zero event}}\\
Gaussian & 0.25 & 250 & 93.8 [0.145] & 95.3 [0.144] & 95.3 [0.145]\\
Gaussian & 0.25 & 1000 & 94.5 [0.071] & 94.8 [0.071] & 94.8 [0.071]\\
Gaussian & 0.25 & 4000 & 95.2 [0.035] & 95.4 [0.035] & 95.3 [0.035]\\
Gaussian & 0.75 & 250 & 91.6 [0.099] & 94.7 [0.100] & 94.7 [0.100]\\
Gaussian & 0.75 & 1000 & 94.0 [0.049] & 95.1 [0.049] & 95.1 [0.049]\\
Gaussian & 0.75 & 4000 & 94.7 [0.024] & 95.2 [0.024] & 95.2 [0.024]\\
Clayton & 0.25 & 250 & 93.6 [0.136] & 95.3 [0.136] & 95.3 [0.136]\\
Clayton & 0.25 & 1000 & 94.1 [0.065] & 94.6 [0.065] & 94.5 [0.065]\\
Clayton & 0.25 & 4000 & 94.9 [0.033] & 95.2 [0.033] & 95.1 [0.032]\\
Clayton & 0.75 & 250 & 91.6 [0.098] & 94.6 [0.099] & 94.5 [0.099]\\
Clayton & 0.75 & 1000 & 94.5 [0.048] & 94.8 [0.048] & 94.7 [0.048]\\
Clayton & 0.75 & 4000 & 94.5 [0.024] & 94.7 [0.024] & 94.6 [0.024]\\
\midrule
\multicolumn{6}{@{\extracolsep{\fill}}l}{\textit{Panel B: Downside event}}\\
Gaussian & 0.25 & 250 & 88.8 [0.036] & 95.0 [0.039] & 95.1 [0.039]\\
Gaussian & 0.25 & 1000 & 92.7 [0.018] & 94.8 [0.018] & 94.8 [0.018]\\
Gaussian & 0.25 & 4000 & 94.7 [0.009] & 95.3 [0.009] & 95.3 [0.009]\\
Gaussian & 0.75 & 250 & 84.8 [0.027] & 94.2 [0.028] & 94.4 [0.028]\\
Gaussian & 0.75 & 1000 & 91.2 [0.013] & 94.9 [0.013] & 94.9 [0.013]\\
Gaussian & 0.75 & 4000 & 94.2 [0.006] & 95.5 [0.006] & 95.5 [0.006]\\
Clayton & 0.25 & 250 & 89.4 [0.051] & 94.6 [0.053] & 94.6 [0.053]\\
Clayton & 0.25 & 1000 & 92.9 [0.025] & 94.5 [0.025] & 94.4 [0.025]\\
Clayton & 0.25 & 4000 & 94.7 [0.013] & 95.3 [0.013] & 95.2 [0.013]\\
Clayton & 0.75 & 250 & 85.6 [0.033] & 94.2 [0.034] & 94.2 [0.034]\\
Clayton & 0.75 & 1000 & 92.7 [0.015] & 94.1 [0.015] & 94.1 [0.015]\\
Clayton & 0.75 & 4000 & 94.5 [0.008] & 95.0 [0.008] & 94.9 [0.008]\\
\bottomrule
\end{tabularx}
\par\smallskip\begin{minipage}{\linewidth}\footnotesize
\textit{Note:} Coverage is in percent for nominal 95\% intervals, with mean length in brackets. Coverage closer to 95\% is better, and shorter intervals indicate greater precision at comparable coverage. Unavailable intervals count as noncoverage, while mean length uses available intervals. Wald and Logit use the proposed point estimate and limiting-weight covariance, with endpoints constructed on the probability and log-odds scales, respectively. Finite logit uses log-odds endpoints and the finite-training-weight covariance in \eqref{fdw:feasible}. These variants compare interval construction while holding the point forecast fixed. Each row uses 4,000 common Monte Carlo replications. Zero and Downside have thresholds zero and $-1$ for all three variables. The largest coverage Monte Carlo standard error is 0.57 percentage points.
\end{minipage}
\end{table}

\subsection{Benchmark comparisons}\label{sec:sim-accuracy}

Same-family refitting generally improves probability accuracy by a
small amount (Table~\ref{tab:sim-benchmark-rmse}). Across both families,
the six sample-size and persistence settings, and the three events,
Matched refit has lower RMSE than Archive in 33 of 36 comparisons.
Full refit has lower RMSE in 35. Both have shorter mean intervals
in every comparison, while coverage rankings vary. Thus the proposed
method does not outperform same-family refitting on accuracy or
interval precision. This pattern is consistent with the additional
limiting covariance in \eqref{main:dynamic-archive-cost}.

Holding the copula dates fixed is informative because it removes the
refit's advantage from using more observations. The matched and full
refits differ in RMSE by at most 0.00023 over the entire grid.
For Clayton Downside at $r=0.25$, $T=1000$, RMSE is 0.00688
for Archive, 0.00670 for Matched refit, and 0.00663 for Full refit.
The Full-refit minus Archive difference is $-0.00025$, with paired
Monte Carlo standard error 0.00005. Most of that improvement remains
when the copula training dates are held fixed. Online
Section~\ref{supp:refit-comparisons} gives all event comparisons,
coverage, interval lengths, and paired Monte Carlo uncertainty.

\begin{table}[!htbp]
\centering\small
\caption{Downside Probability Accuracy: Dated Transforms and Same-Family Refits}\label{tab:sim-benchmark-rmse}
\begin{tabularx}{\linewidth}{@{}rrRRRR@{}}
\toprule
$r$ & $T$ & Archive & Matched refit & Full refit & Full minus archive\\
\midrule
\multicolumn{6}{@{\extracolsep{\fill}}l}{\textit{Panel A: Gaussian copula}}\\
0.25 & 250 & 1.026 & 1.047 & 1.037 & 0.011 (0.009)\\
0.25 & 1000 & 0.527 & 0.523 & 0.522 & -0.005 (0.003)\\
0.25 & 4000 & 0.262 & 0.260 & 0.260 & -0.002 (0.001)\\
0.75 & 250 & 1.454 & 1.406 & 1.401 & -0.053 (0.013)\\
0.75 & 1000 & 0.718 & 0.719 & 0.718 & -0.001 (0.004)\\
0.75 & 4000 & 0.347 & 0.343 & 0.343 & -0.004 (0.002)\\
\midrule
\multicolumn{6}{@{\extracolsep{\fill}}l}{\textit{Panel B: Clayton copula}}\\
0.25 & 250 & 1.437 & 1.421 & 1.410 & -0.027 (0.015)\\
0.25 & 1000 & 0.688 & 0.670 & 0.663 & -0.025 (0.005)\\
0.25 & 4000 & 0.345 & 0.339 & 0.336 & -0.009 (0.002)\\
0.75 & 250 & 1.584 & 1.523 & 1.528 & -0.056 (0.017)\\
0.75 & 1000 & 0.704 & 0.704 & 0.703 & -0.001 (0.005)\\
0.75 & 4000 & 0.443 & 0.440 & 0.440 & -0.003 (0.002)\\
\bottomrule
\end{tabularx}
\par\smallskip
\begin{minipage}{\linewidth}\footnotesize\textit{Note:} Entries are root mean squared error (RMSE) multiplied by 100. Lower values indicate greater probability accuracy. Archive is the proposed estimator. Matched refit recomputes residuals at the terminal marginal estimates and fits the same copula family on the original $T-b_T$ dates. Full refit uses all $T$ dates. All margins use the same observations. The last column gives the paired RMSE difference and its Monte Carlo standard error in parentheses, so a negative difference favors the full refit. Each row uses 2,000 common replications. Downside requires all three outcomes to be at most $-1$. Online Section~\ref{supp:refit-comparisons} reports both refit differences, the other events, coverage, and interval length.
\end{minipage}
\end{table}

The larger gains under Clayton innovations arise from the copula
family. Clayton fits have lower RMSE than Gaussian fits in all 54
combinations of design, event, and timing rule.
For Downside at $r=0.75$, $T=4000$, a full Clayton refit has
RMSE 0.00440 and coverage 94.5\%, compared with 0.01162 and
about 12\% for the full Gaussian refit. The dated and matched-date
comparisons give the same substantive conclusion. Allowing Clayton
dependence improves forecasts of this joint downside event even when
all historical transforms are recomputed. Table~\ref{tab:family-contrast}
online isolates this family effect from the timing choice.

The stationary-vine point estimates are slightly more accurate than
Archive in 17 of the 18 Gaussian comparisons across designs and events. For
Downside at $r=0.75$, $T=250$, RMSE is 0.01383 for the stationary
vine and 0.01454 for Archive. The paired difference is $-0.00071$
with Monte Carlo standard error 0.00018. Both stationary-vine bootstrap
versions use this same point forecast, so their accuracy rankings
coincide.

Their interval comparisons give different conclusions
(Table~\ref{tab:sim-benchmark-coverage}). At $r=0.75$, $T=250$,
Downside coverage is 94.1\% for Archive, 90.8\% for S-vine J, and
95.2\% for S-vine N. The native bootstrap therefore has coverage
closer to 95\% than Archive, with a shorter mean interval, 0.0243
against 0.0279. Its coverage exceeds Archive by 1.05 percentage
points, with paired Monte Carlo standard error 0.53 points.
Archive instead improves on the full-Jacobian bootstrap in this
comparison, whose mean interval length is 0.0224.

At $r=0.25$, native Downside coverage is 89.4\%--90.0\%, compared
with 94.9\%--95.1\% for Archive. At $r=0.75$, its coverage rises
to 96.8\% and 97.3\% at $T=1000$ and $T=4000$, where Archive is
closer to nominal. The proposed method thus gives better calibration
in these designs but does not uniformly outperform stationary-vine
inference. The two bootstrap algorithms differ in their derivative
matrix, multiplier distribution, and block rule. Their differences
cannot be attributed to the Jacobian alone. All 999 draws are valid
for both algorithms in every replication. Online
Section~\ref{supp:sim-benchmark} reports the Mixed event and paired
Monte Carlo uncertainty for all comparisons.

\begin{table}[!htbp]
\centering\small
\caption{Gaussian Designs: Coverage and Precision against Stationary-Vine Benchmarks}\label{tab:sim-benchmark-coverage}
\begin{tabularx}{\linewidth}{@{}rrRRRR@{}}
\toprule
$r$ & $T$ & Archive & Full refit & S-vine J & S-vine N\\
\midrule
\multicolumn{6}{@{\extracolsep{\fill}}l}{\textit{Panel A: Zero event}}\\
0.25 & 250 & 95.5 [0.1437] & 94.7 [0.1375] & 93.8 [0.1347] & 91.8 [0.1263]\\
0.25 & 1000 & 94.8 [0.0710] & 95.0 [0.0696] & 94.4 [0.0686] & 91.9 [0.0641]\\
0.25 & 4000 & 95.5 [0.0350] & 95.1 [0.0345] & 94.7 [0.0342] & 92.7 [0.0316]\\
0.75 & 250 & 95.0 [0.1016] & 94.9 [0.0984] & 91.6 [0.0919] & 95.2 [0.1029]\\
0.75 & 1000 & 94.9 [0.0486] & 94.4 [0.0477] & 92.5 [0.0447] & 96.2 [0.0506]\\
0.75 & 4000 & 94.5 [0.0244] & 94.3 [0.0241] & 92.4 [0.0229] & 96.3 [0.0256]\\
\midrule
\multicolumn{6}{@{\extracolsep{\fill}}l}{\textit{Panel B: Downside event}}\\
0.25 & 250 & 94.9 [0.0384] & 94.8 [0.0372] & 93.4 [0.0344] & 89.4 [0.0293]\\
0.25 & 1000 & 94.9 [0.0184] & 95.0 [0.0181] & 93.8 [0.0176] & 89.6 [0.0149]\\
0.25 & 4000 & 95.1 [0.0091] & 95.2 [0.0090] & 94.7 [0.0088] & 90.0 [0.0074]\\
0.75 & 250 & 94.1 [0.0279] & 94.4 [0.0260] & 90.8 [0.0224] & 95.2 [0.0243]\\
0.75 & 1000 & 94.9 [0.0132] & 94.7 [0.0129] & 92.8 [0.0118] & 96.8 [0.0124]\\
0.75 & 4000 & 95.4 [0.0064] & 95.0 [0.0063] & 93.6 [0.0059] & 97.3 [0.0061]\\
\bottomrule
\end{tabularx}
\par\smallskip
\begin{minipage}{\linewidth}\footnotesize\textit{Note:} Coverage is in percent for nominal 95\% intervals, with mean length in brackets. Coverage closer to 95\% indicates better calibration. Archive and the same-family Full refit use logit intervals. S-vine is the Gaussian stationary vine of \citet{nagler_kruger_min_2022}, which models serial and cross-variable dependence directly in levels. J uses the full score Jacobian and Gaussian multipliers. N uses the native \texttt{svines} bootstrap, including its Hessian and multiplier rule. Both use the same corrected point fit, 999 original draws, valid-draw filtering, and percentile endpoints. These variants assess the sensitivity of inference for the same stationary-vine forecast. Each row uses 2,000 common replications. Unavailable intervals count as noncoverage, mean length uses available intervals, and a subscript denotes availability below 2,000. Online Section~\ref{supp:sim-benchmark} defines both algorithms and reports paired Monte Carlo uncertainty. Zero and Downside use thresholds zero and $-1$ for all three outcomes.
\end{minipage}
\end{table}

\subsection{Historical and current marginal uncertainty}\label{sec:sim-components}

To examine the additional variance from retaining dated transforms,
we simulate a bivariate independent and identically distributed
Gaussian process with correlation $\rho\in\{-0.5,0,0.5\}$, using
$T=1000,4000$ and 4,000 replications per design. Estimation still fits
an unrestricted bivariate VAR(1). The probability that both variables
are at most zero is $1/4+\arcsin(\rho)/(2\pi)$. We group results
by terminal-state leverage, $\lambda_x=1+y_T'R_Z^{-1}y_T$, which
measures how far the forecast state is from the mean after adjusting
for its covariance.

Figure~\ref{fig:sim-archive-mechanism} shows that estimated archive
variance shares follow the pattern in Corollary~\ref{rli:gaussian-zero}.
The additional variance is zero at independence, and its estimated
share moves toward zero as $T$ increases. At equal positive and
negative correlations, the additional variance is the same. Its share
is larger under negative correlation because current marginal-mean
uncertainty is smaller. Higher leverage reduces the share in both
cases. The proposed estimator has slightly higher empirical variance
than Gaussian refitting in these designs, as reported in Online
Table~\ref{tab:sim-mechanism-variance}. The variance formula accounts
for this measured cost of retaining dated transforms.

\begin{figure}[!htbp]
\centering
\includegraphics[width=\linewidth]{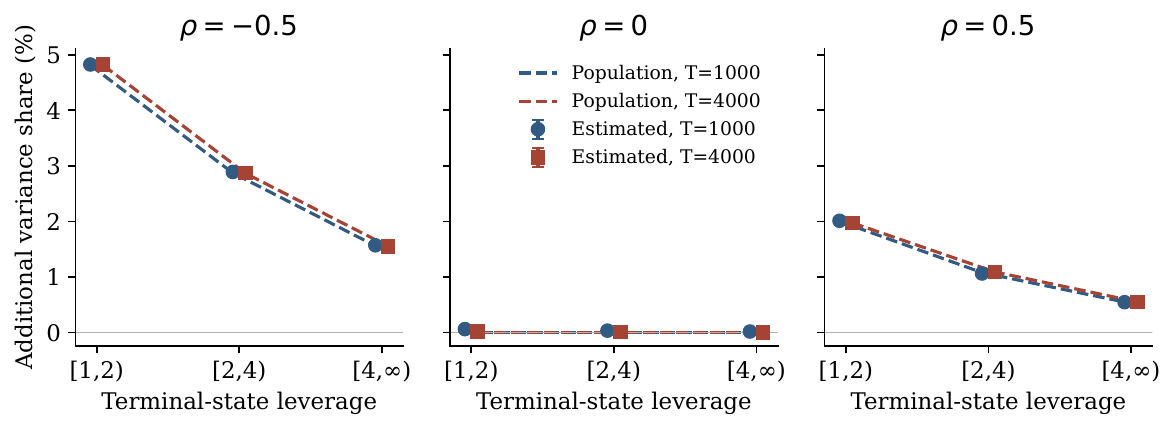}
\caption{The archive variance share across dependence and forecast states}
\label{fig:sim-archive-mechanism}
\par\smallskip\begin{minipage}{\linewidth}\footnotesize
\textit{Note:} Each series averages the replication-specific ratio of
additional archive variance to total archive variance within a fixed
leverage bin. Dashed lines use the population formulas evaluated at
the simulated states. Markers use fitted covariances, with bars equal
to 1.96 Monte Carlo standard errors of the within-bin mean. The
simulation uses 4,000 Monte Carlo replications per correlation and sample size.
The VAR intercept, slopes, scales, and pair parameter are all estimated.
\end{minipage}
\end{figure}

Accounting for current marginal estimation substantially improves
coverage (Table~\ref{tab:sim-components}). At $T=1000$ and $r=0.75$,
removing all current marginal uncertainty lowers Downside coverage
from 94.9\% to 50.6\% under Gaussian innovations and from 94.1\%
to 35.8\% under Clayton innovations. Removing only current slope
uncertainty gives 87.1\% and 88.3\%. The full covariance therefore
outperforms both omissions on coverage. This result supports the role
of the terminal marginal estimates in Theorem~\ref{main:dynamic}:
current slope estimation affects the forecast even though historical
slope errors average out of the leading copula expansion.

Removing only the additional archive variance gives Downside coverage
of 94.6\% and 93.9\% in the same designs. Its inclusion produces a
much smaller improvement. The supplement also omits the covariance
between copula and current marginal estimates, or the effect of earlier
pair estimates on later pairs. Those comparisons show small coverage
changes in either direction, with no consistent advantage from either
omission. The largest gains come from retaining current marginal
uncertainty.

\begin{table}[!htbp]
\centering
\caption{Coverage when sources of estimation uncertainty are omitted}\label{tab:sim-components}
\small
\setlength{\tabcolsep}{4pt}
\renewcommand{\arraystretch}{1.06}
\begin{tabularx}{\linewidth}{@{\extracolsep{\fill}}lrrrrr}
\toprule
Copula & Event & Full & No history & No slopes & No terminal\\
\midrule
Gaussian & Zero & 95.1 [0.049] & 94.9 [0.049] & 80.2 [0.031] & 38.3 [0.011]\\
Gaussian & Downside & 94.9 [0.013] & 94.6 [0.013] & 87.1 [0.008] & 50.6 [0.003]\\
Gaussian & Mixed & 94.5 [0.041] & 94.2 [0.041] & 80.3 [0.026] & 42.6 [0.008]\\
Clayton & Zero & 94.8 [0.048] & 94.6 [0.048] & 82.2 [0.032] & 35.4 [0.011]\\
Clayton & Downside & 94.1 [0.015] & 93.9 [0.015] & 88.3 [0.010] & 35.8 [0.004]\\
Clayton & Mixed & 94.5 [0.041] & 94.0 [0.040] & 79.5 [0.025] & 40.3 [0.008]\\
\bottomrule
\end{tabularx}
\par\smallskip\begin{minipage}{\linewidth}\footnotesize
\textit{Note:} Coverage is in percent for nominal 95\% intervals, with mean length in brackets. Coverage closer to 95\% is better, and shorter intervals indicate greater precision at comparable coverage. Unavailable intervals count as noncoverage, while mean length uses available intervals. All columns use the same proposed point estimate and logit transformation at $r=0.75$, $T=1000$, with 4,000 replications per row. Full includes every covariance component. No history removes the additional archive variance. No slopes removes uncertainty in current slopes. No terminal removes uncertainty in all current marginal coefficients and scales. These comparisons measure the contribution of each source of estimation uncertainty. Zero and Downside require variables 1--3 to be at most zero and $-1$, respectively. Mixed requires $y_{1,T+1}\leq0$ and $y_{3,T+1}>0$, with variable 2 unrestricted.
\end{minipage}
\end{table}

\FloatBarrier
\subsection{Additional simulation results}

With nonzero cross-variable lag coefficients and with five variables,
logit coverage for Zero and Downside is 94.1\%--95.5\%
(Online Table~\ref{tab:sim-extensions}). Logit intervals generally
improve on Wald intervals, particularly for Downside, while finite
training weights produce almost identical results.

Under weak Clayton dependence, Downside logit coverage is 62.8\% at
$T=250$ and 92.9\% at $T=1000$, compared with Wald coverage of
58.1\% and 87.1\% (Online Table~\ref{tab:sim-stress}). Intervals
are available in 64.6\% and 97.8\% of replications, respectively,
and coverage uses all replications. Logit intervals improve coverage
in this comparison. With standardized Student $t_5$ innovations and
fitted normal margins, logit coverage is 58.9\% at $T=1000$ and
33.3\% at $T=4000$, compared with 55.7\% and 33.8\% for Wald.
Neither interval construction consistently outperforms the other in
the Student design.

The twenty- and fifty-variable simulations in Online
Section~\ref{supp:scalability} assess larger systems. Under Clayton
dependence, the proposed model has lower event RMSE and coverage
closer to 95\% than the Gaussian refit. In the Gaussian designs,
event accuracy is similar to the fixed stationary vine, which performs
better at fifty variables and 4,000 observations. The proposed model
has lower full-system density loss than the fixed and sparse stationary
vines in all four matched Gaussian designs. Same-family refitting has
slightly lower density loss still. All fifty-variable fits and analytic
intervals complete. The timing comparisons assess computational
feasibility for these fixed-depth vines, with the dense Gaussian VAR
remaining faster for point forecasting.

\FloatBarrier

\providecommand{\calibrationproposition}{Proposition~\ref{cal:proposition}}

\section{Empirical Application: Real-time Forecasting of Joint Contractions}\label{sec:evaluation}

We forecast joint contractions in U.S. employment and production using
fourteen macroeconomic variables. The application illustrates the
paper's Gaussian special case: dated marginal forecasts are combined
through a fixed vine, and their estimation error is carried into an
interval for the current event probability. Nested three- and
eleven-variable systems assess the predictive contribution of a broader
information set. All comparisons use the same 231 monthly forecast
origins and locked release outcomes.

\subsection{Economic information and the release target}

The application builds on the macroeconomic copula-VAR setting of
\citet{tsionas_izzeldin_trapani_2022}, with a different information set,
release protocol, and target. Payroll employment, industrial production,
and unemployment form the three-variable core. Manufacturing hours,
labor-force participation, manufacturing employment, and capacity
utilization measure labor input and capacity pressure. Consumer inflation,
policy-rate changes, the term spread, and real money growth complete the
eleven-variable system. Real business-credit growth, core inflation, and
the dollar--sterling exchange rate add credit, underlying price pressure,
and external-price information to the fourteen-variable system.

The dimension follows economic roles and usable vintage histories.
Variables with inadequate initialization histories, discontinued series,
or conflicts with the forecast release calendar are excluded. We do not
replace them with modern splices or retrospectively constructed histories.
The retained system consequently omits housing and household expenditure.
Its selection precedes forecast scoring and does not establish that every
included variable is necessary. Online Table~\ref{emp:panel} gives the
series, transformations, and nesting.

Historical vintages come from ALFRED (Archival FRED), maintained by the
Federal Reserve Bank of St.~Louis.\footnote{\url{https://alfred.stlouisfed.org/}.}
The information available at an issue date and the release defining an
outcome must be distinguished
\citep{croushore_stark_2001,koenig_dolmas_piger_2003}.
Let $\tau_m$ denote the first vintage containing the current and previous
levels of all three core series for reference month $m$. Define
\begin{align*}
 E_m&=100\log\frac{\mathrm{PAYEMS}_m^{[\tau_m]}}
                        {\mathrm{PAYEMS}_{m-1}^{[\tau_m]}},&
 I_m&=100\log\frac{\mathrm{INDPRO}_m^{[\tau_m]}}
                        {\mathrm{INDPRO}_{m-1}^{[\tau_m]}},\\
 U_m&=\mathrm{UNRATE}_m^{[\tau_m]}-\mathrm{UNRATE}_{m-1}^{[\tau_m]}.
\end{align*}
The core record $(10E_m,I_m,-10U_m)'$ is locked at $\tau_m$.
This is a first common release, so a component published earlier may
already have been revised. Additional predictors are recorded using
vintages available at the common panel issue date. That date precedes
the next core outcome release. The same panel issue dates are used for
all three dimensions, even when a smaller system could be issued earlier.
These are retrospective reconstructions of a specified information set.
Locking an observation at its release and retaining the marginal forecast
used to transform it are separate choices. The terminal-refit benchmark
keeps the same locked observations but recalculates their transforms.
This comparison changes marginal estimation timing without introducing
later data revisions.

The initial vintage, 1 July 1997, supplies 240 monthly regression
observations from June 1977 through May 1997 and one presample lag.
After 100 subsequent archive observations, the first scored target is
October 2005. The primary evaluation contains 171 targets through
December 2019. All 60 targets from January 2020 through December 2024
form a separate stress period. Training samples grow from 340 to 570
observations. The headline event is
$A_{EP}=\{E<0,I<0\}$. It avoids the rounded unemployment boundary while
retaining unemployment as a predictor. Joint labor contraction,
three-variable contraction, and a more severe downside event provide
supporting comparisons online.

\subsection{Forecast construction and benchmark comparisons}

The main specification is an unrestricted VAR(1) with an intercept,
normal innovation margins, and a Gaussian D-vine in the order of
Online Table~\ref{emp:panel}, truncated after tree two. Every retained
pair and every VAR coefficient is estimated. Gaussian pairs give a
structured Gaussian innovation distribution. The additional predictors
affect the event through its conditional marginal forecasts. Historical PITs retain
the marginal fit preceding each outcome. Terminal margins use the
current sample. Dependence estimation starts after
$b_T=\max\{\lceil T^{0.7}\rceil,5(n+1),240\}$ observations.
The safeguards center and scale regressors using the initial sample,
then freeze those constants. They preserve the OLS objective and
original outcome units whenever inactive. Online
Section~\ref{emp:safeguards} states the revised rules and proves their
asymptotic equivalence to the original safeguards under the maintained
fixed-dimension assumptions. No retained Gram or variance safeguard
activates in the fourteen-variable primary forecasts.

The Gaussian refit recomputes historical transforms from terminal
margins on the same locked observations. Independence keeps those
margins and removes innovation dependence. A dense Gaussian VAR and a
shrinkage VAR supply conventional forecasting benchmarks. Fixed and
sparse stationary vines model serial and cross-sectional dependence
together, following the specification class of
\citet{nagler_kruger_min_2022}. The fixed benchmark uses Gaussian pairs.
The sparse benchmark selects Gaussian or independence pairs. Both
use full-range Gaussian pair optimization on the complete forecast
calendar. Their dynamic parameterization differs from a VAR with an
innovation vine, so their comparisons do not isolate the effect of
retaining dated transforms.

We evaluate probabilities by Brier loss, the squared difference from
the recorded binary outcome \citep{gneiting_raftery_2007}.
Table~\ref{emp:alfred:scores} reports the common-calendar comparisons.
\begin{table}[!htbp]
\centering\small
\caption{Joint Employment--Production Contraction: Brier Loss}
\label{emp:alfred:scores}
\begin{tabularx}{\linewidth}{@{}Lrrr@{}}
\toprule
Method & Variables & 2005:10--2019:12 & 2020:01--2024:12 \\
\midrule
Dated-transform vine & 3 & 0.0495 & 0.1059 \\
Dated-transform vine & 11 & 0.0465 & 0.1259 \\
Dated-transform vine & 14 & 0.0446 & 0.1146 \\
Gaussian refit / dense Gaussian VAR & 14 & 0.0454 & 0.1424 \\
Independent innovations & 14 & 0.0408 & 0.1019 \\
Shrinkage VAR & 14 & 0.0482 & 0.1355 \\
Fixed stationary vine & 14 & 0.0455 & 0.1420 \\
Sparse stationary vine & 14 & 0.0471 & 0.0940 \\
\bottomrule
\end{tabularx}
\par\smallskip
\begin{minipage}{\linewidth}\footnotesize\textit{Note:} Mean Brier loss, the squared difference between the event probability and binary outcome, over 171 primary and 60 stress forecasts. Lower loss is better. All methods use the same issue dates and release outcomes, with complete point-forecast availability. Dated-transform vine is the proposed method. Gaussian refit recomputes past transforms at current marginal estimates, and the dense Gaussian vector autoregression (VAR) estimates an unrestricted innovation covariance. Independence removes innovation dependence, and shrinkage reduces coefficient estimation variability through a normal--inverse-Wishart prior. The stationary vines of \citet{nagler_kruger_min_2022} model serial and cross-variable dependence jointly, with Gaussian pairs fixed in one specification and selected against independence in the sparse specification. Gaussian refit and the dense Gaussian VAR give identical probabilities for this event. The nested dimensions compare additional forecasting information. Implementation details are in Online Section~\ref{emp:benchmarks}.
\end{minipage}
\end{table}

Before 2020, the dated-transform loss falls from $0.0495$ with three
variables to $0.0465$ with eleven and $0.0446$ with fourteen. The full
system has lower observed loss than Gaussian refitting, shrinkage, and
either stationary vine. Paired monthly block comparisons qualify these
rankings. The fourteen-minus-three-variable loss difference is
$-0.0049$, with Holm-adjusted $p=0.0520$ across the three prespecified
primary comparisons. The smaller improvement over Gaussian refitting,
$-0.0007$, has adjusted $p=0.0495$. The shrinkage comparison has
adjusted $p=0.0520$. Neither stationary-vine contrast excludes zero
in its supplementary unadjusted 95\% interval. These results support
competitive performance, with limited evidence of an incremental gain
from the broader information set.

The stress period changes the ranking. The fourteen-variable archive
has loss $0.1146$, below the fixed stationary vine's $0.1420$ but above
the sparse vine's $0.0940$. The three-variable archive also has lower
loss than the fourteen-variable specification. Independence outperforms
the latter in both periods. Thus neither additional variables nor
estimated innovation dependence improves this event forecast uniformly.
Online Section~\ref{emp:benchmarks} reports the paired uncertainty,
other events, and further benchmarks. The stochastic-volatility VAR
supplies only 70 primary forecasts and no stress forecasts under its
computing and convergence limits, so its available-date score cannot
be ranked as a full-calendar comparison.

\subsection{Estimation uncertainty and computational scalability}

The analytic covariance accounts jointly for dated transforms, terminal
margins, and their cross covariances. For employment--production
contraction in the primary period, the additional archive block has a
median share of $2.32\%$ of the limiting-weight event variance. Removing
all terminal-margin gradients instead reduces the median standard-error
ratio to $31.1\%$ of the complete finite-weight value. These calculations
hold the fitted probability fixed. They illustrate why the theoretical
archive cost can be modest even when current marginal estimation
matters substantially. The Gaussian covariance ordering concerns a
common population law and does not impose a finite-sample Brier ranking.

All 171 primary employment--production intervals are available, with
mean length $0.1454$. During 2020--2024, 59 of 60 are available and
their median length is $0.9999$, so those intervals provide little
precision about the event probability. The application evaluates
forecasts against observed outcomes. Coverage of the unobserved true
conditional probability is assessed in the simulations, rather than
inferred from empirical interval availability. Online
Section~\ref{emp:uncertainty} reports residual dispersion and dependence
alongside the uncertainty calculations.

Computation remains practical at fourteen variables. On twelve selected
dates, the median recurring point update takes $0.0027$ seconds, followed
by $0.0123$ seconds for analytic inference. A full-history archive fit
takes $0.0418$ seconds. Corrected fixed and sparse stationary-vine point
fits take $0.3600$ and $0.3700$ seconds. Their fixed-family bootstrap
procedures produce intervals at all twelve checked dates, with median
inference times of $85.97$ and $94.26$ seconds. These are different
inferential procedures, and the comparison does not establish equal
coverage. The dense Gaussian VAR remains faster for point forecasting,
at $0.0009$ seconds. Online Section~\ref{emp:computing} separates initial
fitting, recurring updates, and inference costs.

The application demonstrates inexpensive sequential updates and analytic
uncertainty calculation in an economically broader system. The empirical
events directly restrict two or three innovation coordinates, while the
additional variables enter their conditional means. The twenty- and
fifty-variable simulations in Online Section~\ref{supp:scalability}
assess the same computational tasks at longer sample sizes. Together,
the results support practical scalability for fixed-depth vines and
these event targets. The forecast rankings determine when the broader
information set improves prediction.

\section{Conclusion}\label{sec:conclusion}

Inference for joint-event probabilities must track both the dated
marginal fits used to estimate dependence and the terminal fit used
to forecast. In the Vine Copula VAR, averaging reduces the historical
regression correction to harmonically weighted innovation moments.
The terminal forecast retains uncertainty in every coefficient.
Their joint influence yields a feasible covariance estimator and
repeated-sample intervals at the random forecast state. Fixed one-sided
events have positive limiting variance under the maintained Gaussian
and positive Clayton specifications.

The same-family terminal-refit comparison quantifies the variance cost
of retaining dated transforms. In the bivariate Gaussian specialization,
dated scale sensitivity generates this cost, which vanishes at
independence. Its share can remain small away from independence when
terminal-mean uncertainty is substantial. The simulations confirm a
modest accuracy cost of retaining the transforms. Same-family refits
usually have lower probability RMSE and always have shorter mean
intervals in the core comparisons. Clayton fits outperform Gaussian
fits under Clayton dependence with either transform rule. The separate
covariance-omission comparisons show that current marginal uncertainty
accounts for the largest coverage gains. Stationary-vine coverage
depends materially on the bootstrap construction and persistence.
These findings assign distinct roles to the timing correction, the
dependence specification, and the terminal marginal estimates.

The fourteen-variable application implements the Gaussian special case
with economically broader information and inexpensive recurring
forecasts. Its observed pre-2020 employment--production score improves
on the three-variable model, while independence has lower loss in both
evaluation periods and the sparse stationary vine performs better
during 2020--2024. The twenty- and fifty-variable simulations provide
complementary evidence on feasible estimation and analytic inference
for fixed-depth vines and specified events. Together, these results
show how the joint covariance calculation supports uncertainty
assessment for a forecasting rule that preserves its dated marginal
inputs.




\clearpage
\appendix
\numberwithin{table}{section}
\begingroup
\setstretch{1.15}

{\noindent\LARGE\bfseries Appendices\par}
\section{Joint inference for the copula and terminal VAR estimates}
\label{dji:section}

We expand the historical score correction and terminal regression
errors on the same probability space. The recursive score expansion
first removes the centered lag-design contribution. We then calculate
the covariance with the terminal influences and prove the joint CLT
by finite-lag approximation. Online Appendix~\ref{supp:joint-proofs}
provides the bounds for recursive estimation and moment-consistency results.

Throughout this appendix, Assumptions~\ref{main:var-assumption}--\ref{main:regularity}
and the estimators in Section~\ref{sec:method} are maintained. Write
$N_T=T-b_T$, $A_Q=Q^{-1}$, $\varphi_t=\varphi(Z_t)$,
$w_t=(w_{1,t}',\ldots,w_{n,t}')'$, and
$L_\beta=(M_{\mu,1}e_1',\ldots,M_{\mu,n}e_1')$.
The terminal errors and influences are those in
\eqref{main:terminal-influences}. The covariance is
\eqref{main:joint-covariance}. Norms are Euclidean or operator norms.
Constants may depend on the fixed model but not on $T$.

\begin{equation}
 U_T=\frac1{\sqrt T}\sum_{t=1}^T
 \begin{pmatrix}
 -J^{-1}\{\mathbf1\{t>b_T\}g(\theta_0,Z_t)+h_{T,t}B_g\varphi_t\}\\ e_t
 \end{pmatrix},\qquad
 h_{T,t}=\sum_{m=\max(b_T,t)}^{T-1}m^{-1}.
 \label{dji:joint-expansion}
\end{equation}
Our task is to show that $U_T$ is the leading joint influence, including
its off-diagonal covariance blocks. The coefficients $h_{T,t}$ encode
the repeated use of observations in the dated regression fits.

\subsection{Recursive score expansion}\label{dyn:section}

Historical location error is the product of a regressor vector and a recursive
regression error. Both depend on past innovations, so replacing the
regressor vector by its expectation requires justification. The next lemma
shows that the centered contribution vanishes after aggregation,
leaving innovation moments in the historical correction.

Write $\mu_x=Ex_t$, $M_{i,m}=\sum_{s\le m}x_sZ_{is}$ and $s_{i,m}=M_{i,m}/m$.
Lemmas~\ref{dyn:moments}--\ref{dyn:first-stage} establish the bounds for
recursive estimation and the predictable localization used below.
For $t=m+1$, the nuisance vector in the next statement is
$a_t=((x_t'd_{i,m})_{i\le n}',(a^v_{i,m})_{i\le n}')'$, with
$d_{i,m}=(\widehat\beta_{i,m}-\beta_i)/\sigma_i$ and
$a^v_{i,m}=\widehat v_{i,m}/\sigma_i^2-1$, as in
Lemma~\ref{dyn:first-stage}. Its coordinates are the standardized
conditional-mean errors and relative-variance errors used to form the
dated residual at $t$.

\begin{lemma}[Centered lag-design cancellation]
\label{dyn:interaction}
Under the conditions of Lemma~\ref{dyn:moments}, for each $i\le n$,
\begin{equation}
 \frac{1}{\sqrt T}\sum_{m=b_T}^{T-1}
       (x_{m+1}-\mu_x)'Q^{-1}s_{i,m}=o_p(1).
 \label{dyn:centered-interaction}
\end{equation}
Consequently the actual marginal nuisance vectors satisfy
\begin{equation}
 \frac{1}{\sqrt T}\sum_{t>b_T}a_t
 =\frac{1}{\sqrt T}\sum_{s=1}^T h_{T,s}\varphi(Z_s)+o_p(1).
 \label{dyn:averaged-nuisance}
\end{equation}
\end{lemma}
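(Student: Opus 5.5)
The plan is to prove \eqref{dyn:centered-interaction} by a direct second-moment calculation. We first replace the lag design by a finite-lag linear filter of the innovations. Under Assumption~\ref{main:var-assumption}, $x_{m+1}-\mu_x=(0,\,(L_{m+1}-EL_{m+1})')'=\sum_{j\ge0}\Psi_jZ_{m-j}$ with $\|\Psi_j\|\le C\varrho^j$ for some $\varrho<1$. The regressor $x_s$ inside $s_{i,m}$ has the same representation. Truncating both filters at lag $k_T=\lceil \kappa\log T\rceil$ changes the normalized sum by $O_p(T^{1/2}\varrho^{k_T}\log T)=o_p(1)$ for $\kappa$ large. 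This uses $\sum_m m^{-1}\le\log T$ and the fact that all moments of $Z_t$ are finite, since the coordinate margins are normal.

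After truncation the summand is $m^{-1}\sum_{s\le m}\sum_{j<k_T}Z_{m-j}'\Psi_j'Q^{-1}x_sZ_{is}$, a trilinear form in iid vectors. We split it according to whether the innovation index $m-j$ coincides with $s$ (or with an index inside $x_s$) or not.

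\emph{Coincident indices.} These produce the mean. Because $Z_{is}$ is independent of $x_s$ and of all earlier innovations, a nonzero expectation requires $m-j$ to match $s$ or a lag inside $x_s$. Only $O(k_T)$ values of $s$ near $m$ contribute, each with a geometrically weighted bounded moment. The mean is therefore $O(1/m)$, and after summing, $T^{-1/2}\sum_m O(m^{-1})=O(T^{-1/2}\log T)$.

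\emph{Centered part.} For the remainder we compute
\[
\mathrm{Var}\Bigl(T^{-1/2}\textstyle\sum_m\cdots\Bigr)
=T^{-1}\sum_{m,m'}(mm')^{-1}\sum_{s\le m,\,s'\le m'}\operatorname{Cov}(\cdot,\cdot).
\]
The covariances are nonzero only when the indices pair up. Pairing $Z_{is}$ with $Z_{is'}$ forces $s=s'$, which leaves at most $\min(m,m')$ terms. Pairing the two $\Psi$-innovations forces $|m-m'|\lesssim k_T$, or else the pairing is penalized by a factor $\varrho^{|m-m'|}$. Fourth- and sixth-moment cumulant terms only add terms with fewer free indices. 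This gives the bound $T^{-1}\sum_{m,m'}(mm')^{-1}\min(m,m')\,\mathrm{polylog}(T)\,\varrho^{|m-m'|/k_T}\lesssim T^{-1}\,\mathrm{polylog}(T)\to0$.

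I expect this index-pairing bookkeeping to be the main obstacle. The naive bound $\|s_{i,m}\|=O_p(m^{-1/2})$ only gives $O_p(1)$. The proof therefore has to use the orthogonality between the centered design at date $m+1$ and the martingale-difference terms $x_sZ_{is}$ with $s$ far from $m$. My first attempt would enumerate the partitions of the four (or six) innovation indices explicitly, as is done for Gaussian-type quadratic forms.

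For \eqref{dyn:averaged-nuisance}, we work on the event of Lemma~\ref{dyn:first-stage} on which both safeguards are inactive for all $m\ge b_T$; its probability tends to one. On that event $d_{i,m}=\widehat Q_m^{-1}s_{i,m}$. Replacing $\widehat Q_m^{-1}$ by $Q^{-1}$ costs $T^{-1/2}\sum_m\|x_{m+1}\|\,\|\widehat Q_m^{-1}-Q^{-1}\|\,\|s_{i,m}\|$. The uniform rates of Lemma~\ref{dyn:first-stage}, $O_p(m^{-1/2}\log T)$ for each factor, bound this by $O_p(T^{-1/2}\log^2T)$; if only a weaker uniform rate is available there, a maximal inequality over the dyadic blocks of $m$ would be used instead. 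Next we write $x_{m+1}'Q^{-1}s_{i,m}=\mu_x'Q^{-1}s_{i,m}+(x_{m+1}-\mu_x)'Q^{-1}s_{i,m}$. The second term vanishes after aggregation by \eqref{dyn:centered-interaction}. For the first, the intercept identity $\mu_x'Q^{-1}=e_1'$ gives $\mu_x'Q^{-1}s_{i,m}=m^{-1}\sum_{s\le m}Z_{is}$.

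For the scale coordinates we expand $a^v_{i,m}=m^{-1}\sum_{s\le m}(Z_{is}^2-1)-d_{i,m}'\widehat Q_md_{i,m}$. The quadratic remainder is $O_p(m^{-1}\log T)$ uniformly, so its normalized sum is $O_p(T^{-1/2}\log^2T)$. Finally, interchanging the order of summation,
\[
\sum_{m=b_T}^{T-1}m^{-1}\sum_{s\le m}\varphi(Z_s)=\sum_{s=1}^Th_{T,s}\varphi(Z_s),
\]
which gives exactly the harmonic weights $h_{T,s}=\sum_{m=\max(b_T,s)}^{T-1}m^{-1}$ in \eqref{dyn:averaged-nuisance}.
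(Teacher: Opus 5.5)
Your proposal is correct. For \eqref{dyn:averaged-nuisance} it follows the paper's argument:
\begin{itemize}
\item localize to the event on which the safeguards are inactive, and replace $\widehat Q_m^{-1}$ by $Q^{-1}$;
\item use $\mu_x'Q^{-1}x_s=1$ and the exact residual-variance identity;
\item reverse the double sum to obtain $h_{T,s}$.
\end{itemize}
The paper bounds these remainders in expectation through localized predictable versions rather than uniform rates. Your powers of $\log T$ are slightly optimistic, e.g.\ $T^{-1/2}\sum_m\|x_{m+1}\|m^{-1}\log^2T=O_p(T^{-1/2}\log^3T)$, but this is harmless.

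For \eqref{dyn:centered-interaction} your route differs from the paper's. The paper does not truncate. The stable filter gives the exact coboundary identity $x_{m+1}-\mu_x=H\varepsilon_m+v_{m-1}-v_m$, which splits the sum into three parts:
\begin{itemize}
\item a martingale term $\sum_m(H\varepsilon_m)'Q^{-1}M_{i,m-1}/m$ with variance $O(\log T)$;
\item a diagonal term $\sum_m(H\varepsilon_m)'Q^{-1}x_mZ_{im}/m$ with $L^1$ norm $O(\log T)$;
\item a telescoping term, which summation by parts moves onto the small increments of $Q^{-1}s_{i,m}$.
\end{itemize}
That argument is exact and needs only low-order martingale moment bounds.

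Your truncation-plus-covariance route also works, and the partition enumeration you expect to be the main obstacle is unnecessary. Set $Y_m=(x_{m+1}-\mu_x)'Q^{-1}s_{i,m}$ after truncation at $k_T$, and write $s_{i,m'}=(m/m')s_{i,m}+(1/m')\sum_{m<s\le m'}x_sZ_{is}$. Suppose $m'-m>2k_T$. Then:
\begin{itemize}
\item the truncated $x_{m'+1}-\mu_x$ is centered and independent of $\mathcal F_m$;
\item each new term whose date $s$ lies outside the support of $x_{m'+1}$ contains an innovation $Z_s$ that is independent of all the other factors;
\item each new term with $s$ inside that support depends only on dates after $m$.
\end{itemize}
Hence $\operatorname{Cov}(Y_m,Y_{m'})=0$ exactly. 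On the band $|m-m'|\le 2k_T$, use Cauchy--Schwarz with $\operatorname{Var}(Y_m)\le C/m$, which follows from a Burkholder bound on $E\|s_{i,m}\|^4$. This gives total variance $O(k_T\log T/T)$.

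Your mean term is in fact identically zero. Only $s=m-j$ can contribute, and there $\Psi_j'Q^{-1}\mu_x=\Psi_j'e_1=0$ while $E(x_s-\mu_x)=0$. This is the same intercept identity you use in the second part.

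Your route is more elementary and reuses the truncation device of Lemma~\ref{dji:block-clt}. The paper's route avoids truncation error and any higher-order moment bookkeeping.
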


\begin{proof}
The centered design is correlated with the accumulated regression error.
The stable filter separates that dependence into a current-innovation
term and a telescoping difference. Put $D_j=C_{j+1}$, $H=\sum_{j\ge0}D_j$,
$V_j=\sum_{k>j}D_k$, and
$v_m=\sum_{j\ge0}V_j\varepsilon_{m-j}$. The geometric coefficient bound
gives finite moments of every order for $v_m$, uniformly in $m$.
Indeed, $H-V_0=D_0$ and $V_{j-1}-V_j=D_j$ for $j\ge1$, so
\[
 H\varepsilon_m+v_{m-1}-v_m
 =(H-V_0)\varepsilon_m+
       \sum_{j\ge1}(V_{j-1}-V_j)\varepsilon_{m-j}
 =\sum_{j\ge0}D_j\varepsilon_{m-j}.
\]
This proves the exact identity
\begin{equation}
 x_{m+1}-\mu_x=H\varepsilon_m+v_{m-1}-v_m.
 \label{dyn:coboundary}
\end{equation}
Write $M_m=M_{i,m}$ and $K=Q^{-1}$. The contribution of $H\varepsilon_m$
to the unnormalized sum in \eqref{dyn:centered-interaction} is
\begin{align}
 \mathsf{U}_T+\mathsf{D}_T
 &=\sum_{m=b}^{T-1}\frac{(H\varepsilon_m)'K M_{m-1}}{m}
   +\sum_{m=b}^{T-1}\frac{(H\varepsilon_m)'K x_m Z_{im}}{m}.
 \label{dyn:interaction-split}
\end{align}
The summands in $\mathsf{U}_T$ are martingale differences, because $M_{m-1}$ is
$\mathcal F_{m-1}$-measurable and $E\varepsilon_m=0$. Their variances
satisfy
\begin{align}
 E|\mathsf{U}_T|^2
 &\le C\sum_{m=b}^{T-1}m^{-2}E\|M_{m-1}\|^2
 \le C\sum_{m=b}^{T-1}m^{-1}\le C(1+\log T),\\
 E|\mathsf{D}_T|
 &\le C\sum_{m=b}^{T-1}m^{-1}
                 E(\|\varepsilon_m\||Z_{im}|)E\|x_m\|
 \le C(1+\log T).
 \label{dyn:interaction-leading-bounds}
\end{align}
The factorization in the second line uses independence of the current
innovation vector from the predictable design, and does not separate
coordinates of that innovation vector.

The telescoping difference can be transferred to the slowly changing
recursive sample average. Set $w_m=K M_m/m$. Summation by parts gives
\begin{align}
 \sum_{m=b}^{T-1}(v_{m-1}-v_m)'w_m
 &=v_{b-1}'w_b-v_{T-1}'w_{T-1}
       +\sum_{m=b}^{T-2}v_m'(w_{m+1}-w_m),\\
 w_{m+1}-w_m
 &=K\left\{\frac{x_{m+1}Z_{i,m+1}}{m+1}
                       -\frac{M_m}{m(m+1)}\right\}.
 \label{dyn:summation-by-parts}
\end{align}
No independence is needed for the following bounds. The endpoint bounds
follow by Cauchy--Schwarz and $E\|M_m\|^2\le Cm$:
\begin{align}
 E|v_{b-1}'w_b|+E|v_{T-1}'w_{T-1}|
 &\le C(b^{-1/2}+T^{-1/2}),\\
 \sum_{m=b}^{T-2}\frac{E|v_m'K M_m|}{m(m+1)}
 &\le C\sum_{m=b}^{T-2}m^{-3/2}\le Cb^{-1/2}.
\end{align}
Both $v_m$ and $x_{m+1}$ are $\mathcal F_m$-measurable. Their finite
fourth moments and independence of $Z_{i,m+1}$ from $\mathcal F_m$ give
\begin{align}
 \sum_{m=b}^{T-2}\frac{E|v_m'K x_{m+1}Z_{i,m+1}|}{m+1}
 &\le C\sum_{m=b}^{T-2}(m+1)^{-1}
                \{E\|v_m\|^2E\|x_{m+1}\|^2\}^{1/2}E|Z_{i1}|\\
 &\le C(1+\log T).
 \label{dyn:coboundary-bound}
\end{align}
Combining these displays proves that the left side of
\eqref{dyn:centered-interaction} is
$O_p\{\sqrt{\log T/T}+(1+\log T)/\sqrt T\}=o_p(1)$.

To apply this result to the estimated conditional means, use
\eqref{dyn:beta-expansion}. On $\mathcal E_T$ their normalized remainder
is bounded by
\[
 \frac{L\sqrt{\log T}}{\sqrt T}
      \sum_{m=b}^{T-1}\|r_{i,m}^{\beta}\|.
\]
Its expectation on the recursive estimation events is at most
$C(\log T)^{3/2}/\sqrt T$ by \eqref{dyn:remainder-moments}. Hence the
bound is $o_p(1)$ by Markov's inequality and
$P(\mathcal E_T^c)\to0$. Since the first coordinate of $x_s$ is one,
\begin{equation}
 Qe_1=E(x_tx_t'e_1)=Ex_t=\mu_x,
 \qquad \mu_x'Q^{-1}x_s=e_1'x_s=1.
 \label{dyn:intercept-identity}
\end{equation}
Thus \eqref{dyn:centered-interaction} gives
\begin{align}
 \frac{1}{\sqrt T}\sum_{t>b}a^\mu_{it}
 &=\frac{1}{\sqrt T}\sum_{m=b}^{T-1}\mu_x'Q^{-1}s_{i,m}+o_p(1)\\
 &=\frac{1}{\sqrt T}\sum_{m=b}^{T-1}\frac{1}{m}
                          \sum_{s=1}^m Z_{is}+o_p(1).
\end{align}
For the variance components, \eqref{dyn:variance-expansion} and
\eqref{dyn:remainder-moments} give a normalized remainder with expectation
at most $C(1+\log T)/\sqrt T=o(1)$ on the events controlling the recursive estimates. Reversing the
finite sums for the mean and variance components proves
\eqref{dyn:averaged-nuisance}. The resulting harmonic weights are exactly
$h_{T,s}=\sum_{m=\max(b_T,s)}^{T-1}m^{-1}$.
\end{proof}

\begin{proposition}[Recursive VAR expansion]
\label{dyn:theorem}
Under the model and estimator conditions in Section~\ref{sec:method}
and Assumptions~\ref{main:identification}--\ref{main:regularity}, the sequential estimator
\eqref{eq:sequential-fit} is consistent and satisfies
\begin{equation}
 \sqrt T(\widehat\theta-\theta_0)
 =-J^{-1}\frac1{\sqrt T}\sum_{s=1}^T
 \left[\mathbf1\{s>b_T\}g(\theta_0,Z_s)
       +h_{T,s}B_g\varphi(Z_s)\right]+o_p(1)
 \ \Longrightarrow N(0,V_{\rm seq}).
 \label{dyn:influence}
\end{equation}
Here the score moments and covariance are those in
\eqref{main:score-matrices} and \eqref{main:sequential-covariance}.
The upper-left block of \eqref{main:joint-feasible} consistently
estimates $V_{\rm seq}$. Every fixed contrast with positive limiting
variance has an asymptotically standard normal studentized statistic.
\end{proposition}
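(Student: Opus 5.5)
The plan is to treat \eqref{eq:sequential-fit} as a triangular system of estimating equations with generated regressors, expand it edge by edge, and then substitute the averaged nuisance expansion from Lemma~\ref{dyn:interaction}.

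\emph{Consistency.} This proceeds by induction over the tree order.
\begin{itemize}
\item By Lemma~\ref{dyn:first-stage}, with probability tending to one the safeguards are inactive on all retained dates. The dated residuals also satisfy $\widehat Z_{it}=T_{a_t}(Z_t)_i$ with $\max_{t>b_T}\|a_t\|=O_p(\sqrt{\log T/b_T})=o_p(1)$.
\item The envelope in Assumption~\ref{main:regularity}, applied with one $z$-derivative and Gaussian moments of $\|Z_t\|^r$, gives a uniform bound on the difference between the feasible edge objective and its infeasible counterpart at the true $Z_t$. The bound holds over $\theta\in\mathcal N$ and is of order $\max_t\|a_t\|\cdot N_T^{-1}\sum_t(1+\|Z_t\|^r)=o_p(1)$.
\item The infeasible objective obeys a uniform law of large numbers on the compact $\Theta$ by ergodicity of the iid $Z_t$, continuity, and the integrable envelope.
\item Given $\widehat\theta_{<e}\to_p\theta_{<e,0}$, continuity in the ancestor argument and Assumption~\ref{main:identification}(i) yield $\widehat\theta_e\to_p\theta_{e,0}$ by the standard argmax argument.
\end{itemize}

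\emph{Score expansion.} Since $\theta_0$ is interior, the stacked first-order conditions hold with probability tending to one. A mean-value expansion in $\theta$ gives
\[
0=N_T^{-1}\sum_{t>b_T}g(\widehat\theta,\widehat Z_t)=N_T^{-1}\sum_{t>b_T}g(\theta_0,\widehat Z_t)+\widehat J_T(\widehat\theta-\theta_0).
\]
Here $\widehat J_T\to_p J$ by the third-derivative envelope and consistency, and $J$ is block lower triangular with invertible diagonal blocks $-I_e$, so it is invertible.

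Next, expand $g(\theta_0,T_{a_t}(Z_t))$ to second order in $a_t$:
\[
g(\theta_0,Z_t)+\partial_a g(\theta_0,Z_t)\,a_t+O\bigl((1+\|Z_t\|^r)\|a_t\|^2\bigr).
\]
The quadratic remainder normalized by $\sqrt T$ is bounded by $\sqrt T\,N_T^{-1}\sum_t\|a_t\|^2(1+\|Z_t\|^r)$. Using $E\|a_t\|^2\lesssim 1/t$, this is $O_p(\log T/\sqrt T)=o_p(1)$.

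For the linear term, write $\partial_a g(\theta_0,Z_t)=B_g+\{\partial_a g(\theta_0,Z_t)-B_g\}$. The centered part multiplies $a_t$, which is $\mathcal F_{t-1}$-measurable, while $Z_t$ is independent of $\mathcal F_{t-1}$. The sum $\sum_t\{\partial_ag-B_g\}a_t/\sqrt T$ is therefore a martingale with variance $O(T^{-1}\sum_t E\|a_t\|^2)=O(\log T/T)$, which is negligible.

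What remains is $B_g\,T^{-1/2}\sum_{t>b_T}a_t$. Lemma~\ref{dyn:interaction} converts this to $B_g\,T^{-1/2}\sum_s h_{T,s}\varphi(Z_s)+o_p(1)$. Replacing $N_T$ by $T$ costs $o_p(1)$ because $b_T/T\to0$. Inverting $\widehat J_T$ gives \eqref{dyn:influence}.

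\emph{Limit and variance.} The leading term is $T^{-1/2}\sum_s\psi_{T,s}(Z_s)$, a weighted sum of iid mean-zero vectors with deterministic weights, so it is a triangular array. Lindeberg holds because $\max_s h_{T,s}^2/T\le(\log T)^2/T\to0$ and all moments are finite. The variance is
\[
J^{-1}\Bigl\{\tfrac{N_T}{T}S+\bigl(\tfrac1T\textstyle\sum_{s>b_T}h_{T,s}\bigr)(\Gamma B_g'+B_g\Gamma')+\bigl(\tfrac1T\textstyle\sum_s h_{T,s}^2\bigr)B_g\Sigma_\varphi B_g'\Bigr\}J^{-\prime}.
\]
With $h_{T,s}\approx\log(T/s)$ for $s>b_T$, a Riemann sum gives $\int_0^1\log(1/u)\,du=1$ and $\int_0^1\log^2(1/u)\,du=2$. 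The contribution of $s\le b_T$, where $h_{T,s}\approx\log(T/b_T)$, is $O(b_T\log^2T/T)\to0$. This yields $V_{\rm seq}$.

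Consistency of the upper-left block of \eqref{main:joint-feasible} follows by the same uniform envelope and ULLN arguments applied to $\widehat g_t$, $\widehat\varphi_t$, $\widehat J$ and $\widehat B_g$, evaluated at $(\widehat\theta,\widehat Z_t)$. The studentized contrast result then follows by Slutsky's lemma.

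\emph{Main obstacle.} The delicate step is uniform control of $a_t$ over all retained dates together with the moment bound $E\|a_t\|^2\lesssim1/t$ on the localization events. Unbounded polynomial envelopes in $Z_t$ must be combined with a supremum over $t$. I would localize on $\mathcal E_T$ from Lemma~\ref{dyn:first-stage}, where the $a_t$ are predictable and bounded by $L\sqrt{\log T/t}$, and use Hölder's inequality with the Gaussian moments of $Z_t$.
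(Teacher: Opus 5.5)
Your proposal is correct and follows essentially the same route as the paper: inductive argmax consistency with an envelope bound on the generated-input perturbation, and a second-order expansion of the score in the nuisance vector $a_t$ with a quadratic remainder and a centered martingale term. It then uses Lemma~\ref{dyn:interaction} for the harmonic term, inverts the integrated Jacobian with $T/N_T\to1$, and applies an independent-array CLT whose harmonic-sum coefficients tend to $1,1,2$. The differences are minor. In the consistency step you bound the perturbation by $\max_t\|a_t\|$ times an average envelope, whereas the paper takes expectations against the predictable localization $a_t^*$. That localization is also what your martingale step needs: multiplying $a_t$ by the indicator of $\mathcal E_T$ itself would destroy predictability.
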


\begin{proof}
Put $N_T=T-b_T$. We first prove consistency to place the estimate in a
neighborhood with an invertible Jacobian and controlled generated-margin
remainders. We then expand the exact sequential score.

\paragraph{Consistency of the sequential estimator.}
For $t=m+1$, the standardized residual has the exact representation
\[
 \widehat Z_{it}=(Z_{it}-a^\mu_{it})(1+a^v_{i,m})^{-1/2}.
\]
On $\mathcal E_T$, Lemma~\ref{dyn:first-stage} makes $\|a_t\|$ uniformly
small. The mean-value theorem therefore implies
\begin{equation}
 \|\widehat Z_t-Z_t\|\le C(1+\|Z_t\|)\|a_t\|.
 \label{dyn:z-perturbation}
\end{equation}
The derivative envelopes give a fixed polynomial $W$ such that the
uniform edge-criterion perturbation is bounded on this event by
$N_T^{-1}\sum_{t>b_T}W(Z_t)\|a_t\|$.
Predictability of $a_t^*$ and independence of $Z_t$ from the past imply
\begin{align}
 E\left\{\frac{1}{N_T}\sum_{t>b_T}W(Z_t)\|a_t^*\|\right\}
 &=\frac{EW(Z_1)}{N_T}\sum_{t>b_T}E\|a_t^*\|\\
 &\le \frac{C\sqrt{\log T}}{N_T}
                  \sum_{m=b_T}^{T-1}m^{-1/2}
 =O\{\sqrt{\log T/T}\}=o(1).
 \label{dyn:criterion-bound}
\end{align}
All moments of $W(Z_1)$ exist because each coordinate is Gaussian and
the dimension is fixed. Markov's inequality and
$P(\mathcal E_T^c)\to0$ prove the required perturbation in probability.
Here is the uniform law and the separation argument explicitly.
Write $\mathbb P_Tf=N_T^{-1}\sum_{t>b_T}f(Z_t)$ and $Pf=Ef(Z_1)$.
For each sufficiently small $\delta>0$, choose a finite $\delta$-net
$\mathcal N_\delta$ of the compact parameter set. The derivative
envelope on its neighborhood gives
\begin{align*}
 \sup_\theta|(\mathbb P_T-P)\ell_e(\theta)|
 &\le\max_{v\in\mathcal N_\delta}|(\mathbb P_T-P)\ell_e(v)|
       +\delta(\mathbb P_TW+PW),\\
 P\left\{\max_{v\in\mathcal N_\delta}
       |(\mathbb P_T-P)\ell_e(v)|>\epsilon\right\}
 &\le\frac{|\mathcal N_\delta|\,PW^2}{N_T\epsilon^2}\longrightarrow0.
\end{align*}
Since $\mathbb P_TW\to_pPW$, first taking $T\to\infty$ and then
$\delta\downarrow0$ proves the uniform law used in
\citet[Lemma~2.4]{newey_mcfadden_1994}.
For $Q_e(v)=P\ell_e(\theta_{<e,0},v)$ and fixed $\epsilon>0$, compactness,
continuity, and uniqueness give
\[
 \Delta_e(\epsilon)=Q_e(\theta_{e,0})-
 \sup_{\|v-\theta_{e,0}\|\ge\epsilon}Q_e(v)>0
\]
whenever the set in the supremum is nonempty. Let
$\widehat Q_{e,T}(v)=N_T^{-1}\sum_{t>b_T}
\ell_e(\widehat\theta_{<e},v,\widehat Z_t)$ and
$u_{e,T}=\sup_v|\widehat Q_{e,T}(v)-Q_e(v)|$.
Maximization then implies
\begin{align*}
 \|\widehat\theta_e-\theta_{e,0}\|\ge\epsilon
 &\ \Longrightarrow\ 
 \Delta_e(\epsilon)\le Q_e(\theta_{e,0})-Q_e(\widehat\theta_e)\\
 &\hspace{34mm}\le2u_{e,T},\\
 P(\|\widehat\theta_e-\theta_{e,0}\|\ge\epsilon)
 &\le P\{2u_{e,T}\ge\Delta_e(\epsilon)\}.
\end{align*}
At the first stage there is no ancestor error. The uniform law and
\eqref{dyn:criterion-bound} give $u_{1,T}=o_p(1)$, hence
$\widehat\theta_1\to_p\theta_{1,0}$. This is the extremum-estimator
argument of \citet[Theorem~2.1]{newey_mcfadden_1994}.

Suppose the earlier blocks have been proved consistent. On a
neighborhood of their true values, the mean-value theorem gives
\begin{align}
 &\sup_{\theta_e}\left|\frac1{N_T}\sum_{t>b_T}
 \{\ell_e(\widehat\theta_{<e},\theta_e,Z_t)
       -\ell_e(\theta_{<e,0},\theta_e,Z_t)\}\right|\notag\\
 &\qquad\le\|\widehat\theta_{<e}-\theta_{<e,0}\|
       \frac1{N_T}\sum_{t>b_T}W(Z_t)
 =o_p(1)O_p(1)=o_p(1).
 \label{eq:sequential-ancestor-perturbation}
\end{align}
The true ancestor vector has a neighborhood inside the regularity
domain. Consistency makes the relevant parameter segments lie there
with probability tending to one. The same uniform law,
\eqref{dyn:criterion-bound}, and identification at each stage now prove
consistency at edge $e$. Induction ends after the fixed number of
edges. Because each true block is interior, all fitted maximizers
are interior with probability tending to one. Their exact
first-order conditions combine to give
\begin{equation}
 \frac1{N_T}\sum_{t>b_T}g(\widehat\theta,\widehat Z_t)=0.
 \label{eq:sequential-stacked-foc}
\end{equation}
Later parameter estimates do not change an earlier score.

\paragraph{Accumulated generated-score error.}
Put $g_t=g(\theta_0,Z_t)$ and
$A_t=A_g(\theta_0,Z_t)$ from \eqref{score:A-g}.
Taylor expansion in the $2n$ nuisance coordinates gives, on
$\mathcal E_T$,
\begin{equation}
 g(\theta_0,\widehat Z_t)=g_t+A_ta_t+R_t,
 \qquad \|R_t\|\le W_2(Z_t)\|a_t\|^2,
 \label{dyn:score-taylor}
\end{equation}
with another fixed polynomial envelope. Localization and
\eqref{dyn:localized-moments} give
\begin{align}
 E\left\{\frac{1}{\sqrt T}\sum_{t>b_T}W_2(Z_t)\|a_t^*\|^2\right\}
 &\le \frac{C\log T}{\sqrt T}\sum_{m=b_T}^{T-1}m^{-1}
 =O\{(\log T)^2/\sqrt T\}=o(1).
 \label{dyn:quadratic-score-bound}
\end{align}
Since $E(A_t\mid\mathcal F_{t-1})=B_g$, the variables
$(A_t-B_g)a_t^*$ are martingale differences. Orthogonality and current
innovation independence give
\begin{align}
 E\left\|\frac{1}{\sqrt T}\sum_{t>b_T}(A_t-B_g)a_t^*\right\|^2
 &\le \frac{C}{T}\sum_{t>b_T}E\|a_t^*\|^2\\
 &\le \frac{C\log T}{T}\sum_{m=b_T}^{T-1}m^{-1}
 =O\{(\log T)^2/T\}=o(1).
 \label{dyn:centered-score-bound}
\end{align}
On $\mathcal E_T$, the localization equals the actual nuisance vector.
Equations \eqref{dyn:score-taylor}--\eqref{dyn:centered-score-bound}
and Lemma~\ref{dyn:interaction} therefore yield
\begin{equation}
 \frac{1}{\sqrt T}\sum_{t>b_T}g(\theta_0,\widehat Z_t)
 =\frac{1}{\sqrt T}\sum_{s=1}^T
       [\mathbf1\{s>b_T\}g_s+h_{T,s}B_g\varphi(Z_s)]+o_p(1)
 =:L_T+o_p(1).
 \label{dyn:score-expansion}
\end{equation}
More explicitly, the remainder $r_T$ in this equality is
\begin{align*}
 r_T&=T^{-1/2}\sum_{t>b_T}R_t
 +T^{-1/2}\sum_{t>b_T}(A_t-B_g)a_t\\
 &\quad+B_g\left\{T^{-1/2}\sum_{t>b_T}a_t
            -T^{-1/2}\sum_{s=1}^Th_{T,s}\varphi(Z_s)\right\},\\
 \|r_T\|&=O_p\{(\log T)^2/\sqrt T\}
       +O_p(\log T/\sqrt T)
       +O_p\{(1+\log T)^{3/2}/\sqrt T\}=o_p(1).
\end{align*}
The first term uses Markov's inequality in
\eqref{dyn:quadratic-score-bound}, the second uses the second-moment
bound in \eqref{dyn:centered-score-bound}, and the third uses the
normalized bounds in Lemma~\ref{dyn:interaction}. Each calculation
holds on $\mathcal E_T$, whose complement has vanishing probability.

\paragraph{Solving the score system.}
By Lemma~\ref{score:centering}, the matrix $J$ is block lower
triangular with diagonal blocks $-I_e$ and is invertible. The ideal
Jacobian satisfies the same uniform law as before.
Its generated-data perturbation is bounded by
\eqref{dyn:criterion-bound}, using the mixed derivative envelope. Its
parameter perturbation is bounded by
$\|\widehat\theta-\theta_0\|N_T^{-1}\sum W(Z_t)=o_p(1)$.
Thus the integrated Jacobian
\[
 \overline J_T=\frac1{N_T}\sum_{t>b_T}\int_0^1
 \partial_\theta g\{\theta_0+u(\widehat\theta-\theta_0),
                    \widehat Z_t\}\,du
\]
satisfies, on the common localization event,
\begin{align*}
 \|\overline J_T-J\|
 &\le\left\|N_T^{-1}\sum_{t>b_T}\partial_\theta g(\theta_0,Z_t)-J\right\|
 +C\|\widehat\theta-\theta_0\|\mathbb P_TW
 +C\mathbb P_T\{W(Z_t)\|a_t\|\}\\
 &=O_p(N_T^{-1/2})+o_p(1)O_p(1)
       +O_p\{\sqrt{\log T/T}\}=o_p(1).
\end{align*}
The first term uses iid finite second moments, and the last uses
\eqref{dyn:criterion-bound}.
Writing $\kappa_J=\|J^{-1}\|^{-1}$, on
$\{\|\overline J_T-J\|\le\kappa_J/2\}$ the Neumann series and inverse
identity give
\begin{align}
 \|\overline J_T^{-1}\|&\le2/\kappa_J,&
 \|\overline J_T^{-1}-J^{-1}\|
 &\le2\kappa_J^{-2}\|\overline J_T-J\|=o_p(1).
 \label{dyn:jacobian-inverse}
\end{align}
The event has probability tending to one. The exact stacked equation is
\[
 0=N_T^{-1}\sum_{t>b_T}g(\theta_0,\widehat Z_t)
                              +\overline J_T(\widehat\theta-\theta_0).
\]
The leading term $L_T$ in \eqref{dyn:score-expansion} is the
weighted innovation-score sum in Lemma~\ref{score:array-limit}.
That lemma proves $L_T\Rightarrow N(0,\Omega_{\rm seq})$ and
$L_T=O_p(1)$ using the centering, harmonic sums, and finite
third moments established above.
Put $c_T=T/N_T$, so $c_T-1=b_T/N_T=o(1)$. The exact equation gives
$\sqrt T(\widehat\theta-\theta_0)=-c_T\overline J_T^{-1}(L_T+r_T)$.
Consequently,
\begin{align}
 \|\sqrt T(\widehat\theta-\theta_0)+J^{-1}L_T\|
 &\le\{\|\overline J_T^{-1}-J^{-1}\|
        +|c_T-1|\|\overline J_T^{-1}\|\}\|L_T\|\notag\\
 &\quad+c_T\|\overline J_T^{-1}\|\|r_T\|\notag\\
 &=\{o_p(1)+o(1)O_p(1)\}O_p(1)+O_p(1)o_p(1)=o_p(1).
 \label{dyn:final-normalization}
\end{align}
This proves \eqref{dyn:influence}.

\paragraph{Fitted score moments.}
For each scalar entry $G$ of $\partial_\theta g$, $A_g$, $gg'$,
$g\varphi'$, and $\varphi\varphi'$ used in the upper-left block
of \eqref{main:joint-feasible}, the polynomial derivative
envelopes and \eqref{dyn:z-perturbation} give, on $\mathcal E_T$,
\begin{align}
 \frac{1}{N_T}\sum_{t>b_T}
 |G(\widehat\theta,\widehat Z_t)-G(\theta_0,Z_t)|
 &\le\|\widehat\theta-\theta_0\|\frac{1}{N_T}\sum_{t>b_T}W_3(Z_t)
       +\frac{1}{N_T}\sum_{t>b_T}W_3(Z_t)\|a_t\|\\
 &=o_p(1)+O_p\{\sqrt{\log T/T}\}=o_p(1).
\end{align}
For entries without parameter dependence, the first term is zero.
The ideal empirical moments obey iid laws of large numbers. Therefore
$\widehat J\to_pJ$, $\widehat B_g\to_pB_g$, and all score and marginal
cross moments converge to the quantities in
\eqref{main:sequential-covariance}. The inverse identity in
\eqref{dyn:jacobian-inverse} then proves covariance consistency.
The contrast statement follows by Slutsky's theorem when its limiting
variance is positive.
\end{proof}

\begin{corollary}[Uniform and terminal margin errors]\label{dyn:margin-rates}
Under Proposition~\ref{dyn:theorem}, the fitted conditional margins satisfy
\begin{align}
 \max_{b_T<t\le T}\max_{i\le n}\sup_y
 |\widehat F_{i,t|t-1}(y)-F^0_{i,t}(y)|
 &=O_p\{(\log T/b_T)^{1/2}\},\label{dyn:uniform-cdf}\\
 \max_{i\le n}\sup_y
 |\widehat F_{i,T+1|T}(y)-F^0_{i,T+1}(y)|
 &=O_p(T^{-1/2}).\label{dyn:terminal-cdf}
\end{align}
\end{corollary}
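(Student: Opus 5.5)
The plan is to reduce both bounds to a Lipschitz property of the Gaussian location--scale family. With probability tending to one we work on the recursive estimation event $\mathcal E_T$ of Lemma~\ref{dyn:first-stage}, on which the Gram and variance safeguards are inactive for every retained $m\ge b_T$. Write $\mu_{i,t}=x_t'\beta_i$ and $\widehat\mu_{i,t}=x_t'\widehat\beta_{i,t-1}$, and put $\widehat s_{i,t}=\widehat v_{i,t-1}^{1/2}$. For two normal CDFs with means $\mu,\widehat\mu$ and scales $\sigma,\widehat s$ bounded away from zero, an elementary mean-value argument gives
\[
 \sup_y|\Phi\{(y-\widehat\mu)/\widehat s\}-\Phi\{(y-\mu)/\sigma\}|
 \le C\{|\widehat\mu-\mu|/\sigma+|\widehat s/\sigma-1|\}.
\]
Here $C$ is a constant such as $\sup_z|z|\phi(z)+\sup_z\phi(z)$ divided by a lower bound on $\min(\widehat s/\sigma,1)$. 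The proof goes through the intermediate CDF $\Phi\{(y-\mu)/\widehat s\}$, so that the supremum over $y$ is absorbed by the boundedness of $\phi$ and of $z\phi(z)$. On $\mathcal E_T$ the ratio $\widehat s/\sigma$ is uniformly near one. Therefore
\[
 \sup_y|\widehat F_{i,t|t-1}(y)-F^0_{i,t}(y)|\le C\{|x_t'd_{i,t-1}|+|a^v_{i,t-1}|\},
\]
which is $C$ times the size of the nuisance vector $a_t$ already used in Proposition~\ref{dyn:theorem}.

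For the terminal rate \eqref{dyn:terminal-cdf}, apply this bound at $t=T+1$. We have $|x_{T+1}'d_{i,T}|\le\|x_{T+1}\|\,\|d_{i,T}\|$. Stationarity gives $\|x_{T+1}\|=O_p(1)$, because all moments of $x_t$ are uniformly bounded (Lemma~\ref{dyn:moments}). Theorem~\ref{main:dynamic} gives $\sqrt T\delta_T=T^{-1/2}\sum_s e_s+o_p(1)=O_p(1)$, so $\|d_{i,T}\|+|a^v_{i,T}|=O_p(T^{-1/2})$. The maximum over the fixed $n$ coordinates preserves the rate.

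For the uniform bound \eqref{dyn:uniform-cdf}, we need $\max_{b_T\le m<T}\bigl(\|x_{m+1}\|\,\|d_{i,m}\|+|a^v_{i,m}|\bigr)=O_p\{(\log T/b_T)^{1/2}\}$. This is the main obstacle: a uniform recursive rate combined with a maximal design bound. I would take the uniform first-stage rate from Lemma~\ref{dyn:first-stage}, which underlies the localization bounds $E\|a_t^*\|\le C\sqrt{\log T}\,m^{-1/2}$ used in \eqref{dyn:criterion-bound}. If that lemma gives $\max_{m\ge b_T}\sqrt{m/\log T}\,(\|d_{i,m}\|+|a^v_{i,m}|)=O_p(1)$, then monotonicity in $m$ yields the factor $(\log T/b_T)^{1/2}$ for the parameter errors. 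Otherwise I would prove that rate directly. The argument would use a maximal inequality for the martingale $M_{i,m}=\sum_{s\le m}x_sZ_{is}$ over dyadic blocks $m\in[2^k,2^{k+1})$, via Doob's inequality with high moments. A union bound over the $O(\log T)$ blocks gives the $\sqrt{\log T}$ factor, and $\widehat Q_m^{-1}\to Q^{-1}$ holds uniformly on $\mathcal E_T$.

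The design factor is the delicate part, because $\max_{t\le T+1}\|x_t\|$ is not bounded. Its linear-Gaussian tail only gives $\max_t\|x_t\|=O_p(\sqrt{\log T})$. That extra factor would spoil the stated rate if it were multiplied naively by the coefficient error. To avoid this, I would split $x_t'd_{i,t-1}$ into the intercept component and $(x_t-\mu_x)'$ times the slope error. I would then use the sharper uniform slope rate available from the moment bounds in Lemma~\ref{dyn:first-stage}, or absorb one $\sqrt{\log T}$ factor there, whichever that lemma's statement allows. Combining the two pieces and taking the maximum over $i\le n$ completes the corollary.
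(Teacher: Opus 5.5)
Your reduction to a Lipschitz bound for the Gaussian location--scale family is correct and agrees with the paper, which integrates the $(a,v)$-derivatives of $\Phi\{(z-a)/\sqrt{1+v}\}$ along a segment. Your terminal-rate argument is also correct. For that step, cite the fixed-moment and inverse bounds or Lemma~\ref{dji:endpoint-lemma} directly, because the proof of Theorem~\ref{main:dynamic} itself invokes this corollary for its margin conclusion.

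The gap is in \eqref{dyn:uniform-cdf}. You leave it contingent on what Lemma~\ref{dyn:first-stage} ``allows,'' and the route you sketch would give only $(\log T)/\sqrt{b_T}$. The missing ingredient is that the recursive parameter errors are uniformly of order $b_T^{-1/2}$, with no logarithm:
\[
 \sup_{b_T\le m\le T}\|d_{i,m}\|=O_p(b_T^{-1/2}),\qquad
 \sup_{b_T\le m\le T}|a^v_{i,m}|=O_p(b_T^{-1/2}).
\]
The single $\sqrt{\log T}$ in the stated rate then comes entirely from $\max_{t\le T+1}\|x_t\|=O_p(\sqrt{\log T})$, which gives $\max_{b_T<t\le T}|x_t'd_{i,t-1}|=O_p\{(\log T/b_T)^{1/2}\}$. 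This is exactly the paper's argument. The factor $\sqrt{\log T}$ in the localization bound $E\|a_t^*\|\le C\sqrt{\log T}\,m^{-1/2}$ is this design cap, not a logarithm in the coefficient rate. Reading it as $\max_m\sqrt{m/\log T}\,\|d_{i,m}\|=O_p(1)$ and then multiplying by the design maximum double counts it.

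Your direct dyadic argument also does not actually lose a logarithm. Doob's $L^2$ inequality on $[2^k,2^{k+1})$ gives
\[
 P\Bigl(\max_{2^k\le m<2^{k+1}}\|M_{i,m}\|/m>\epsilon\Bigr)
 \le\frac{E\|M_{i,2^{k+1}}\|^2}{\epsilon^2 4^k}
 \le\frac{C}{\epsilon^2 2^k}.
\]
These bounds decrease geometrically in $k$, so over the blocks meeting $[b_T,T]$ they sum to at most $C'/(\epsilon^2 b_T)$. This is the H\'ajek--R\'enyi inequality, and it gives $\sup_{m\ge b_T}\|s_{i,m}\|=O_p(b_T^{-1/2})$. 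Combined with uniform control of $\widehat Q_m^{-1}$ over $m\ge b_T$, this yields the coefficient bound.

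For the scale bound, use the exact identity $a^v_{i,m}=u_{i,m}-s_{i,m}'\widehat Q_m^{-1}s_{i,m}$, valid on the event where both safeguards are inactive. Here $u_{i,m}=m^{-1}\sum_{s\le m}(Z_{is}^2-1)$ is handled by the same maximal inequality.

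Your fallback of splitting $x_t'd_{i,t-1}$ into an intercept term and $(x_t-\mu_x)'$ times the slope error would not rescue a log-inflated coefficient rate. The maximum $\max_t\|x_t-\mu_x\|$ is still of order $\sqrt{\log T}$, and the slope errors are of the same order as the full coefficient error.
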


\begin{proof}
The maximal bounds \eqref{dyn:prefix-max}, the inverse bound
\eqref{dyn:inverse}, and the exact residual-variance identity imply
\begin{align}
 \sup_{m\ge b_T,\,m\le T}\|d_{i,m}\|&=O_p(b_T^{-1/2}),&
 \sup_{m\ge b_T,\,m\le T}|a^v_{i,m}|&=O_p(b_T^{-1/2}),\\
 \max_{b_T<t\le T}|a^\mu_{it}|&=O_p\{(\log T/b_T)^{1/2}\}.
\end{align}
For $|v|\le1/2$, the derivatives of
$\Phi\{(z-a)/\sqrt{1+v}\}$ with respect to $a,v$ have uniformly bounded
absolute values over $z\in\R$ and $|a|\le1$:
with $w=(z-a)/\sqrt{1+v}$, they are
\[
 \partial_a\Phi(w)=-\frac{\phi(w)}{\sqrt{1+v}},\qquad
 \partial_v\Phi(w)=-\frac{w\phi(w)}{2(1+v)}.
\]
Their suprema are finite because $1+v\ge1/2$,
$\sup_w\phi(w)<\infty$, and $\sup_w|w|\phi(w)<\infty$.
Integrating these derivatives along $(sa,sv)$, $0\le s\le1$, gives
\[
 \sup_y|\widehat F_{i,t|t-1}(y)-F^0_{i,t}(y)|
 \le C(|a^\mu_{it}|+|a^v_{i,t-1}|)
\]
on an event with probability tending to one. This proves
\eqref{dyn:uniform-cdf}. At the terminal estimation date,
\eqref{dyn:fixed-moments} and \eqref{dyn:inverse} give
$d_{i,T}=O_p(T^{-1/2})$ and $a^v_{i,T}=O_p(T^{-1/2})$.
Stationarity gives $x_{T+1}=O_p(1)$. Their product is
$a^\mu_{i,T+1}=O_p(T^{-1/2})$ without any independence assumption.
The same CDF bound proves \eqref{dyn:terminal-cdf}.
\end{proof}

\subsection{Terminal influences and the joint array}

The terminal OLS normal equations and minimized residual sum of squares
give the coefficient and scale influences. Predictability and the
intercept identity determine their covariance with the innovation score.
We retain mixed third moments, which can be nonzero under asymmetric
copula dependence despite normal coordinate margins.

\begin{lemma}[Terminal linearization and moment identities]
\label{dji:endpoint-lemma}
Under the conditions of Theorem~\ref{dji:joint},
the terminal estimation errors in \eqref{main:terminal-influences} satisfy
\begin{equation}
 \sqrt T\delta_T=\frac1{\sqrt T}\sum_{t=1}^Te_t+o_p(1).
 \label{dji:endpoint-expansion}
\end{equation}
The array formed by the summands of $U_T$ is a martingale-difference
array for $\mathcal F_t=\sigma(Z_s:s\le t)$. Its covariance converges
to \eqref{main:joint-covariance}.
\end{lemma}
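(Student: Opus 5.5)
The plan has three steps: linearize the terminal estimators from their exact OLS formulas, check the martingale-difference property date by date, and compute the second moments block by block.

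\textbf{Step 1: terminal linearization.} Work on the event where the Gram safeguard at $m=T$ is inactive; its probability tends to one by Lemma~\ref{dyn:first-stage}. On that event the exact OLS identity gives
\[
 d_{i,T}=\widehat Q_T^{-1}\,T^{-1}\sum_{t\le T}x_tZ_{it}.
\]
Write $\widehat Q_T^{-1}=Q^{-1}+(\widehat Q_T^{-1}-Q^{-1})$. By \eqref{dyn:fixed-moments} and \eqref{dyn:inverse}, $\widehat Q_T^{-1}-Q^{-1}=O_p(T^{-1/2})$, while $T^{-1/2}\sum_t x_tZ_{it}=O_p(1)$ because its summands are square-integrable martingale differences. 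So $\sqrt T d_{i,T}=T^{-1/2}\sum_t w_{i,t}+o_p(1)$.

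For the scale, the variance floor is inactive with probability tending to one. Expand the residual sum of squares:
\[
 \widehat v_{i,T}/\sigma_i^2=T^{-1}\sum_t Z_{it}^2-\Bigl(T^{-1}\sum_t x_tZ_{it}\Bigr)'\widehat Q_T^{-1}\Bigl(T^{-1}\sum_t x_tZ_{it}\Bigr).
\]
The quadratic term is $O_p(T^{-1})$, which gives the $v_t$ block. This proves \eqref{dji:endpoint-expansion}.

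\textbf{Step 2: martingale-difference property.} Each summand of $U_T$ has deterministic weights $\mathbf 1\{t>b_T\}$ and $h_{T,t}$ multiplying $g(\theta_0,Z_t)$, $\varphi(Z_t)$, $x_tZ_{it}$ and $v_t$. Here $x_t$ is $\mathcal F_{t-1}$-measurable and $Z_t$ is independent of $\mathcal F_{t-1}$. It therefore suffices to show $Eg(\theta_0,Z_t)=0$ (Lemma~\ref{score:centering}, i.e.\ correct specification with interior $\theta_0$), $E\varphi(Z_t)=0$ (standard normal margins), and $E(Z_{it})=0$.

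\textbf{Step 3: covariance, block by block.} Throughout, conditional expectations given $\mathcal F_{t-1}$ factor into $x_t$-moments times innovation moments.

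\emph{Terminal block.} $E(w_{i,t}w_{j,t}')=Q^{-1}E(x_tx_t')Q^{-1}E(Z_{it}Z_{jt})=R_{Z,ij}Q^{-1}$. For the coefficient--scale cross term, $E(w_{i,t}v_t')=Q^{-1}\mu_xE(Z_{it}v_t')=e_1M_{3,i\cdot}$ by the intercept identity \eqref{dyn:intercept-identity}. This gives the Kronecker forms in $W_\delta$, and $S_v$ is immediate.

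\emph{Copula block.} Cross terms enter through $T^{-1}\sum_t\mathbf 1\{t>b_T\}h_{T,t}\to 1$ and $T^{-1}\sum_t h_{T,t}^2\to 2$. To verify these, approximate $h_{T,t}\approx\log(T/\max(b_T,t))$, use $\int_0^1\log(1/u)\,du=1$ and $\int_0^1\log^2(1/u)\,du=2$, and note that the contribution of $t\le b_T$ is $O(b_T\log^2T/T)\to0$. Together with $S$, $\Gamma$ and $\Sigma_\varphi$ this yields $V_{\rm seq}$ in \eqref{main:sequential-covariance}.

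\emph{Copula--terminal cross block.} This uses the weights $T^{-1}\sum_t h_{T,t}\to1$ and $T^{-1}\sum_t\mathbf 1\{t>b_T\}=N_T/T\to1$. By the intercept identity, $E\{(g_t+B_g\varphi_t)w_{i,t}'\}=(\Gamma+B_g\Sigma_\varphi)_{\cdot i}e_1'Q^{-1}\cdot Q=M_{\mu,i}e_1'$, with $M_v$ obtained analogously. Premultiplying by $-J^{-1}$ gives $D_{\rm VAR}$.

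\textbf{Main obstacle.} The delicate points are the harmonic-weight limits (the $\log^2$ integral and negligibility of the initial block) and the placement of $e_1$ in the mixed blocks. Both reduce to the intercept identity applied under conditional expectation, which is why the cross blocks retain only intercept coordinates.
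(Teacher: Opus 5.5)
Your proposal is correct and follows essentially the paper's proof: exact normal-equation and residual-sum-of-squares identities on the event where the safeguards are inactive, predictability of $x_t$ together with current-innovation independence for the martingale-difference property, the intercept identity $\mu_x'Q^{-1}=e_1'$ for the mixed blocks, and harmonic-weight limits $1,1,2$. The differences are cosmetic: the paper uses the exact identity $\sum_t h_{T,t}=N_T$, so the copula--terminal cross block equals $(N_T/T)\{E(g_te_t')+B_gE(\varphi_te_t')\}$ exactly rather than only in the limit, and it bounds the probability that a variance floor binds explicitly, where you assert it.
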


\begin{proof}
Put $s_{i,T}=T^{-1}\sum_{t\le T}x_tZ_{it}$ and
$u_{i,T}=T^{-1}\sum_{t\le T}(Z_{it}^2-1)$.
Lemma~\ref{dyn:moments} proves $Q>0$ and
\begin{equation}
 \|\widehat Q_T-Q\|=O_p(T^{-1/2}),\qquad
 \|s_{i,T}\|+|u_{i,T}|=O_p(T^{-1/2}).
 \label{dji:terminal-inputs}
\end{equation}
For $\kappa=\lambda_{\min}(Q)>0$, on
$\mathcal A_T=\{\|\widehat Q_T-Q\|\le\kappa/2\}$,
\begin{align}
 \lambda_{\min}(\widehat Q_T)&\ge\kappa/2,&
 \|\widehat A_Q\|&\le2/\kappa,\notag\\
 \widehat A_Q-A_Q&=-\widehat A_Q(\widehat Q_T-Q)A_Q,&
 \|\widehat A_Q-A_Q\|&\le2\kappa^{-2}\|\widehat Q_T-Q\|.
 \label{dji:inverse}
\end{align}
By the second-moment bound in \eqref{dyn:fixed-moments},
$P(\mathcal A_T^c)\le C/(T\kappa^2)\to0$.
For large $T$, $\kappa/2>T^{-1/4}$, so the coefficient threshold
is inactive on this event. The exact normal equation and minimized
residual-sum-of-squares identity are
\begin{align}
 d_{i,T}&=\widehat A_Q s_{i,T},\notag\\
 \frac1{T\sigma_i^2}\sum_{t\le T}
       (y_{it}-x_t'\widehat\beta_{i,T})^2
 &=1+u_{i,T}-s_{i,T}'\widehat A_Q s_{i,T}.
 \label{dji:terminal-exact}
\end{align}
For large $T$, $T^{-1/4}<\min_i\sigma_i^2/2$. On $\mathcal A_T$,
a variance floor can bind only if
$|u_{i,T}|+(2/\kappa)\|s_{i,T}\|^2\ge1/2$ for some $i$. Hence
\begin{align*}
 P(\text{some variance floor binds})
 &\le P(\mathcal A_T^c)+\sum_i
   \left[P(|u_{i,T}|\ge1/4)+P(\|s_{i,T}\|^2\ge\kappa/8)\right]\\
 &\le\frac CT+16\sum_iE|u_{i,T}|^2
                   +\frac8\kappa\sum_iE\|s_{i,T}\|^2
 \le\frac{C'}T\longrightarrow0.
\end{align*}
On the common event where neither thresholds nor floors bind,
\begin{align}
 \sqrt T\|d_{i,T}-A_Qs_{i,T}\|
 &\le2\kappa^{-2}\sqrt T\|\widehat Q_T-Q\|\|s_{i,T}\|\notag\\
 &=O_p(\sqrt T\,T^{-1/2}T^{-1/2})=O_p(T^{-1/2}),\notag\\
 \sqrt T|a_{i,T}^v-u_{i,T}|
 &\le(2/\kappa)\sqrt T\|s_{i,T}\|^2
 =O_p(\sqrt T\,T^{-1})=O_p(T^{-1/2}).
 \label{dji:terminal-remainders}
\end{align}
The complement of the common event has probability tending to zero.
Stacking the finitely many remainders proves
\eqref{dji:endpoint-expansion}.

Write $g_t=g(\theta_0,Z_t)$ and
$q_{T,t}=\mathbf1\{t>b_T\}g_t+h_{T,t}B_g\varphi_t$.
The interior population maximum at each stage gives $Eg_t=0$.
$EZ_t=0$ and $Ev_t=0$ follow from the normal coordinate margins.
The vector $x_t$ is $\mathcal F_{t-1}$-measurable and $Z_t$ is
independent of that field. Thus
\begin{equation}
 E\left[\begin{pmatrix}q_{T,t}\\w_t\\v_t\end{pmatrix}
           \,\middle|\,\mathcal F_{t-1}\right]=0,
 \label{dji:martingale}
\end{equation}
Distinct dates are therefore orthogonal, although the coefficient
influences share lagged innovations. All required moments follow from
Lemma~\ref{dyn:moments} and the score envelopes.

For any column-valued function $f(Z_t)$ with the needed moments,
current-innovation independence and $A_Q\mu_x=e_1$ give
\begin{align}
 E\{f(Z_t)w_{i,t}'\}
 &=E\{f(Z_t)Z_{it}\}\mu_x'A_Q
 =E\{f(Z_t)Z_{it}\}e_1',\notag\\
 E(w_{i,t}w_{j,t}')
 &=E(Z_{it}Z_{jt})A_Q E(x_tx_t')A_Q
 =(R_Z)_{ij}A_Q,\notag\\
 E\{w_{i,t}(v_t)_j\}
 &=E\{Z_{it}(Z_{jt}^2-1)\}A_Q\mu_x
 =(M_3)_{ij}e_1.
 \label{dji:cross-identities}
\end{align}
The identity $A_Q\mu_x=e_1$ is \eqref{dyn:intercept-identity}.
It uses the actual intercept in the design. These equalities
prove $E(e_te_t')=W_\delta$ including every coefficient--variance
cross block, with no assumption of sign symmetry of the copula.

Let $A_{ge}=E(g_te_t')$ and $A_{\varphi e}=E(\varphi_te_t')$.
Orthogonality at distinct dates gives the exact finite-sample block
\begin{align}
 \operatorname{Cov}\left(T^{-1/2}\sum_tq_{T,t},
                         T^{-1/2}\sum_te_t\right)
 &=\frac{N_T}{T}A_{ge}
      +\frac1T\sum_th_{T,t}B_g A_{\varphi e}\notag\\
 &=\frac{N_T}{T}(A_{ge}+B_gA_{\varphi e}).
 \label{dji:cross-finite}
\end{align}
Indeed, reversing the finite sum gives
$\sum_th_{T,t}=\sum_{m=b_T}^{T-1}m/m=N_T$.
Applying the first equality in \eqref{dji:cross-identities} to
$f=g+B_g\varphi$ proves
$A_{ge}+B_gA_{\varphi e}=(L_\beta,M_v)$.
Multiplication by $-J^{-1}$ gives the sign and ordering of
$D_{\rm VAR}$ in \eqref{main:joint-covariance}.
The score block has exact covariance
\begin{equation}
 \frac{N_T}{T}S+
 \left(\frac1T\sum_{t>b_T}h_{T,t}\right)
       (\Gamma B_g'+B_g\Gamma')+
 \left(\frac1T\sum_th_{T,t}^2\right)B_g\Sigma_\varphi B_g'.
 \label{dji:score-covariance}
\end{equation}
The harmonic identities \eqref{eq:prefix-harmonic-limits} and
\eqref{eq:sequential-cross-weight} give coefficient limits $1,1,2$.
Their burn-in error tends to zero since
$(b_T/T)\log(T/b_T)\to0$. The lower block equals $W_\delta$
exactly. This proves the asserted covariance limit.
\end{proof}

\begin{lemma}[Separated-block proof for the joint influence array]
\label{dji:block-clt}
Under the conditions of Theorem~\ref{dji:joint}, the array $U_T$ in
\eqref{dji:joint-expansion} converges to
$N(0,\mathcal W_{\rm VAR})$.
\end{lemma}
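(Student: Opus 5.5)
We would prove Lemma~\ref{dji:block-clt} with the Cram\'er--Wold device and a martingale central limit theorem for triangular arrays. The previous lemma already gives the two key inputs: the summands of $U_T$ form a martingale-difference array for $\mathcal F_t$, and their covariance converges to $\mathcal W_{\rm VAR}$. Fix $\lambda$ and write
\[
\lambda'U_T=\sum_{t=1}^T\zeta_{T,t},\qquad
\zeta_{T,t}=T^{-1/2}\lambda'\begin{pmatrix}-J^{-1}q_{T,t}\\ e_t\end{pmatrix}.
\]
By \eqref{dji:martingale}, $E(\zeta_{T,t}\mid\mathcal F_{t-1})=0$. Two conditions then remain, which we would check in turn.

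\emph{Conditional Lindeberg condition.} The weights satisfy $0\le h_{T,t}\le H_T=O(\log T)$. The summands are polynomials in $(Z_t,x_t)$ with bounded coefficients, and all moments of these polynomials are uniformly bounded by Lemma~\ref{dyn:moments} and the normal coordinate margins. A Lyapunov bound therefore gives
\[
\sum_tE|\zeta_{T,t}|^{2+\delta}\le C(\log T)^{2+\delta}T^{-\delta/2}\to0,
\]
and this implies the conditional Lindeberg condition.

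\emph{Convergence of the conditional variance.} This is the main obstacle. We need
\[
\sum_tE(\zeta_{T,t}^2\mid\mathcal F_{t-1})\to_p\lambda'\mathcal W_{\rm VAR}\lambda .
\]
Because $Z_t$ is independent of $\mathcal F_{t-1}$, each conditional variance is an explicit function. It is a quadratic form in $(h_{T,t},\mathbf 1\{t>b_T\})$ and in the design $x_t$, whose coefficients are fixed innovation moments. The copula--copula terms are deterministic: they equal the exact expressions in \eqref{dji:score-covariance}, which converge to the $1,1,2$ limits. The cross terms and the terminal terms involve the averages
\[
T^{-1}\sum_th_{T,t}x_t,\qquad T^{-1}\sum_tx_t,\qquad T^{-1}\sum_tx_tx_t'.
\]
Their limits are $\mu_x$ and $Q$, by the identities \eqref{dji:cross-identities} together with ergodicity of the stable VAR.

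The only delicate term is the weighted average $T^{-1}\sum_th_{T,t}(x_t-\mu_x)$, since the weights grow logarithmically. We would write it by summation by parts as $T^{-1}\sum_{m}m^{-1}\sum_{t\le m}(x_t-\mu_x)$. Stability gives $E\|\sum_{t\le m}(x_t-\mu_x)\|^2\le Cm$, so the expression has $L^1$ norm at most $CT^{-1}\sum_m m^{-1/2}=O(T^{-1/2})$. The same device handles $T^{-1}\sum_t h_{T,t}(x_tx_t'-Q)$, with $Q$ in place of $\mu_x$ and using that the process has summable autocovariances.

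\emph{Fallback.} If a direct ergodic argument for these weighted averages is not convenient, we would follow the finite-lag route described in the paper. Truncate the MA($\infty$) representation of $x_t$ at lag $k$; the truncated design is $k$-dependent, so laws of large numbers apply directly. Bound the truncation error uniformly in $T$ by $C\rho^k$ with $\rho<1$, using the geometric decay. Finally let $k\to\infty$ after $T\to\infty$.

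With both conditions verified, the martingale CLT gives $\lambda'U_T\Rightarrow N(0,\lambda'\mathcal W_{\rm VAR}\lambda)$ for every $\lambda$, and the Cram\'er--Wold device yields the stated joint limit.
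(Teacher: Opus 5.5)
Your proposal is correct, but it takes a different route from the paper's.

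\textbf{Your route.} You apply a martingale CLT for triangular arrays directly to $\lambda'U_T$. This obliges you to prove that the conditional variance converges in probability, which in turn requires laws of large numbers for $T^{-1}\sum_t x_t$, $T^{-1}\sum_t x_tx_t'$, and the harmonically weighted $T^{-1}\sum_t h_{T,t}x_t$. Your key steps hold up:
\begin{itemize}
\item The reversal $\sum_t h_{T,t}(x_t-\mu_x)=\sum_{m=b_T}^{T-1}m^{-1}\sum_{t\le m}(x_t-\mu_x)$, combined with $E\|\sum_{t\le m}(x_t-\mu_x)\|^2\le Cm$, correctly handles the weighted average.
\item The Lyapunov bound with $h_{T,t}=O(\log T)$ is fine.
\end{itemize}

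\textbf{A superfluous step.} The weighted term $T^{-1}\sum_t h_{T,t}(x_tx_t'-Q)$ never arises. The weight $h_{T,t}$ multiplies only $\varphi(Z_t)$, so the design enters the conditional variance in only three ways. It enters linearly, with weights $\mathbf 1\{t>b_T\}$ and $h_{T,t}$, in the copula--coefficient cross term. It enters linearly and unweighted in the coefficient--scale term. It enters quadratically only through the unweighted coefficient block.

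\textbf{The paper's route.} The paper never computes a conditional variance. Instead it proceeds as follows.
\begin{itemize}
\item It replaces $x_t$ by the lag-$q_T$ truncation \eqref{dji:truncated-design}, with $q_T=\lceil c\log T\rceil$.
\item It groups dates into blocks of length $\lfloor T^{1/8}\rfloor$, separated by gaps of $q_T$ omitted dates, so that the retained block sums are exactly independent.
\item It shows by martingale orthogonality that the truncation and gap contributions are $o(1)$ in $L^2$.
\item It applies the Lindeberg--Feller theorem for independent arrays. The variance limit comes only from the unconditional covariance convergence in Lemma~\ref{dji:endpoint-lemma}.
\end{itemize}

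\textbf{Comparison.} Your argument is shorter and avoids choosing block and gap lengths, but it depends on a martingale CLT and on explicit laws of large numbers for the design. The paper's argument uses only unconditional second moments and the classical independent-array CLT. Its truncation device is also the one reused to separate the estimation error from the terminal state in Lemma~\ref{dfe:state-separation}.
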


\begin{proof}
The coefficient influences share lagged innovations across dates.
Truncating the stable filter lets us separate them into independent
blocks while bounding the discarded part of the array.
Fix any deterministic vector $a$ of the joint dimension. Write
$a'U_T=T^{-1/2}\sum_{t=1}^TY_{T,t}$.
By \eqref{dyn:linear-design}, for each integer $q\ge1$ define
\begin{equation}
 x_t^{(q)}=\mu_x+\sum_{j=1}^qC_j\varepsilon_{t-j}.
 \label{dji:truncated-design}
\end{equation}
For every fixed $r\ge1$, Minkowski's inequality gives
\begin{align}
 \|x_t-x_t^{(q)}\|_{L^r}
 &\le\sum_{j>q}\|C_j\|\|\varepsilon_0\|_{L^r}
 \le C_r\sum_{j>q}\rho^j\le C_r\rho^q,\notag\\
 \sup_{q,t}\|x_t^{(q)}\|_{L^r}
 &\le\|\mu_x\|+\sum_{j\ge1}C\rho^j\|\varepsilon_0\|_{L^r}
 <\infty.
 \label{dji:lag-approximation}
\end{align}
Replace only $x_t$ in $Y_{T,t}$ by $x_t^{(q)}$, obtaining
$Y_{T,t}^{(q)}$. This new variable is a function of
$(Z_{t-q},\ldots,Z_t)$. It is centered conditionally on
$\mathcal F_{t-1}$, because $x_t^{(q)}$ is predictable and
$EZ_t=0$. The difference is also a martingale difference.
Partition $a$ conformably into $(a_\theta,(a_{\beta_i})_{i\le n},a_v)$.
Only its coefficient terms change, so
\begin{align*}
 Y_{T,t}-Y_{T,t}^{(q)}
 &=\sum_i a_{\beta_i}'Q^{-1}(x_t-x_t^{(q)})Z_{it},\\
 E|Y_{T,t}-Y_{T,t}^{(q)}|^2
 &\le n\sum_i\|a_{\beta_i}'Q^{-1}\|^2
                E\|x_t-x_t^{(q)}\|^2 EZ_{it}^2
 \le C_a\rho^{2q}.
\end{align*}
The factorization uses independence of the current innovation from
both designs. Martingale orthogonality therefore yields
\begin{align}
 E\left|T^{-1/2}\sum_t(Y_{T,t}-Y_{T,t}^{(q)})\right|^2
 &=T^{-1}\sum_tE|Y_{T,t}-Y_{T,t}^{(q)}|^2
 \le C_a\rho^{2q}.
 \label{dji:sum-approximation}
\end{align}
The coefficient weights have not been enlarged by $h_{T,t}$:
that weight multiplies only the unchanged innovation function.

Set $q=q_T=\lceil c\log T\rceil$ for any fixed $c>0$, and
$\ell=\ell_T=\lfloor T^{1/8}\rfloor$. Starting at date one,
retain consecutive blocks $I_j$ of length $\ell$, separate them
by $q$ omitted dates, and omit the final incomplete block. Let
$D_T$ denote all omitted dates, and $m_T$ the number of retained
blocks. Their deterministic sizes satisfy
\begin{equation}
 m_T\le T/\ell,\qquad
 |D_T|\le Tq/\ell+\ell+q.
 \label{dji:block-count}
\end{equation}
The retained block sums depend on disjoint innovation sets:
a block $[r,s]$ uses only dates $[r-q,s]$, and the next retained
block starts at $s+q+1$. Thus these block sums are independent,
even for the block whose innovation support extends to dates
zero or below in the stationary initialization.

The polynomial score envelopes, \eqref{dji:lag-approximation},
and independence of current innovations from the design imply,
for $r=2,3$,
\begin{equation}
 \sup_t E|Y_{T,t}^{(q)}|^r\le C_a(1+\log T)^r.
 \label{dji:array-moments}
\end{equation}
Here $0\le h_{T,t}\le1+\log T$, and the remaining coefficients
are fixed. Orthogonality of the truncated martingale differences
also holds on any deterministic subset. Consequently
\begin{align}
 E\left|T^{-1/2}\sum_{t\in D_T}Y_{T,t}^{(q)}\right|^2
 &\le\frac{C_a|D_T|}{T}(1+\log T)^2\notag\\
 &\le C_a\left(\frac q\ell+\frac{\ell+q}{T}\right)
                         (1+\log T)^2\notag\\
 &=O\{T^{-1/8}(1+\log T)^3+
                  T^{-7/8}(1+\log T)^2\}=o(1).
 \label{dji:gap-bound}
\end{align}
Let $B_{T,j}=T^{-1/2}\sum_{t\in I_j}Y_{T,t}^{(q)}$ and
$S_T^B=\sum_jB_{T,j}$. The $B_{T,j}$ are independent and centered.
For arbitrary real numbers $z_t$, H\"older's inequality gives
$|\sum_{t\in I_j}z_t|^3\le\ell^2\sum_{t\in I_j}|z_t|^3$.
It follows that
\begin{align}
 \sum_jE|B_{T,j}|^3
 &\le\frac{m_T\ell^3C_a(1+\log T)^3}{T^{3/2}}
 \le\frac{C_a\ell^2(1+\log T)^3}{\sqrt T}\notag\\
 &=O\{T^{-1/4}(1+\log T)^3\}=o(1),\notag\\
 \max_j E B_{T,j}^2
 &\le C_a\ell(1+\log T)^2/T
 =O\{T^{-7/8}(1+\log T)^2\}=o(1).
 \label{dji:block-lyapunov}
\end{align}
The variance bound uses martingale orthogonality within each block.
By \eqref{dji:sum-approximation} and \eqref{dji:gap-bound},
$\|a'U_T-S_T^B\|_{L^2}\to0$. Lemma~\ref{dji:endpoint-lemma}
gives $E(a'U_T)^2\to s_a^2=a'\mathcal W_{\rm VAR}a$ and
bounded $L^2$ norms. The triangle inequality bounds
$\|S_T^B\|_{L^2}$. Cauchy--Schwarz then gives
\begin{align}
 |E(S_T^B)^2-E(a'U_T)^2|
 &\le\|S_T^B-a'U_T\|_{L^2}
       (\|S_T^B\|_{L^2}+\|a'U_T\|_{L^2})\to0.
 \label{dji:block-variance-limit}
\end{align}
Hence $\sum_jE B_{T,j}^2\to s_a^2$.

The retained blocks are independent and centered. For every
$\epsilon>0$, \eqref{dji:block-lyapunov} gives
\begin{equation}
 \sum_j E\{B_{T,j}^2\mathbf1(|B_{T,j}|>\epsilon)\}
 \le\epsilon^{-1}\sum_j E|B_{T,j}|^3=o(1).
 \label{dji:block-lindeberg}
\end{equation}
Together with \eqref{dji:block-variance-limit}, this verifies
\citet[Theorem~3.4.10]{durrett_2019} for $S_T^B$ when $s_a^2>0$.
If $s_a^2=0$, $S_T^B\to0$ in $L^2$. The approximation
$\|a'U_T-S_T^B\|_{L^2}\to0$ and the Cram\'er--Wold criterion
complete the vector limit, including singular covariances.
\end{proof}

\subsection{Completion of the joint inference theorem}

\begin{proof}[Proof of Theorem~\ref{dji:joint}]
The copula expansion is \eqref{dyn:influence}, whose generated-score
and predictable-design remainder bounds are proved in
Section~\ref{dyn:section}. Combining it with
\eqref{dji:endpoint-expansion} gives $\sqrt T\Delta_T=U_T+o_p(1)$ with
$U_T$ defined in \eqref{dji:joint-expansion}. No independence between the remainder
vectors is required: their dimensions are fixed and each is
$o_p(1)$. Lemma~\ref{dji:block-clt} supplies the joint limit of
the leading array, and addition of the vanishing vector proves
the asserted joint normal limit.

For the covariance estimator, define
\[
 \mathcal C=
 \begin{pmatrix}J^{-1}B_g\Sigma_\varphi B_g'J^{-\prime}&0\\0&0
 \end{pmatrix}.
\]
Direct expansion using \eqref{dji:ideal-xi} and
\eqref{dji:cross-identities} gives
\begin{align}
 E(\xi_t\xi_t')+\mathcal C
 &=\begin{pmatrix}
 J^{-1}\{S+\Gamma B_g'+B_g\Gamma'
                   +2B_g\Sigma_\varphi B_g'\}J^{-\prime}
 &-J^{-1}(L_\beta,M_v)\\
 -(L_\beta,M_v)'J^{-\prime}&W_\delta
 \end{pmatrix}\notag\\
 &=\mathcal W_{\rm VAR}.
 \label{dji:psd-identity}
\end{align}
The harmonic correction adds a second copy of the nuisance variance to
the copula block. The cross blocks already have coefficient one and
need no further correction.
Lemma~\ref{dji:moment-lemma} proves convergence of that outer
product. The innovation-moment consistency in
Proposition~\ref{dyn:theorem} and the inverse bound
\eqref{dji:J-inverse} prove convergence of the additional block
to $\mathcal C$. For the product calculation put
$K=J^{-1}B_g$, $\widehat K=\widehat J^{-1}\widehat B_g$,
$S_\varphi=\Sigma_\varphi$, and
$\widehat S_\varphi=\widehat\Sigma_\varphi$. Then
\begin{align*}
 \widehat K-K
 &=(\widehat J^{-1}-J^{-1})\widehat B_g
                       +J^{-1}(\widehat B_g-B_g),\\
 \widehat K\widehat S_\varphi\widehat K'-KS_\varphi K'
 &=(\widehat K-K)\widehat S_\varphi\widehat K'
       +K(\widehat S_\varphi-S_\varphi)\widehat K'
       +KS_\varphi(\widehat K'-K').
\end{align*}
On the common bounded-matrix event, these identities imply
\begin{align*}
 \|\widehat K-K\|
 &\le C(\|\widehat J^{-1}-J^{-1}\|+\|\widehat B_g-B_g\|)=o_p(1),\\
 \|\widehat{\mathcal W}_{\rm VAR}-\mathcal W_{\rm VAR}\|
 &\le\left\|N_T^{-1}\sum_{t>b_T}\widehat\xi_t\widehat\xi_t'
                              -E\xi_t\xi_t'\right\|
       +C(\|\widehat K-K\|+\|\widehat S_\varphi-S_\varphi\|)
 =o_p(1).
\end{align*} Both terms of \eqref{main:joint-feasible} are positive semidefinite:
the first is an average of outer products and the second is
a congruence of the empirical outer-product matrix
$\widehat\Sigma_\varphi$. The identity-matrix convention is positive
definite and is used with probability tending to zero, by
\eqref{dji:inverse} and \eqref{dji:J-inverse}. This proves
consistency and positive semidefiniteness for every sample.
Corollary~\ref{dyn:margin-rates} supplies the terminal-margin conclusion.
\end{proof}

\begin{remark}[Concrete family scope]
The Gaussian fixed-vine verification in
Proposition~\ref{score:gaussian-family} and the positive-parameter
Clayton verification in Proposition~\ref{tail:sequential-clayton}
verify the same conditions for sequential estimation used here. Consequently the
joint theorem applies to both families under the stated stable
VAR assumptions, including estimated higher-tree ancestors in
the Clayton case. It does not impose Gaussian joint innovations:
normal coordinate margins supply the moments, while $R_Z,M_3,S_v$
and the score cross moments retain the actual innovation copula.
\end{remark}

\section{Inference for the Conditional Event Probability at the Terminal VAR State}
\label{dfe:section}

The terminal lag state is random and belongs to the estimation sample,
so event inference requires more than an estimator CLT. We first
separate the state from the leading estimation error. We then expand
the event map on compact state sets and use stationarity to remove
the compact restriction. Localization also permits studentization
under almost-sure positivity of the projected variance. All
probabilities refer to the stationary law in Section~\ref{sec:method}.

\subsection{Separation from the terminal state}

Stability permits approximation of the terminal state by recent
innovations. Deleting those dates from the normalized influence sum
has a negligible effect. The two approximations depend on disjoint
innovation sets, which yields the independent joint limit required
by the forecast delta method.

Use $\Delta_T$ from \eqref{main:terminal-influences}, put
$d_\Delta=q_\theta+nd_x+n$, and write the joint influence array as
\begin{equation}
 q_{T,s}=\mathbf1\{s>b_T\}g(\theta_0,Z_s)+h_{T,s}B_g\varphi_s,
 \qquad L_{T,s}=(-q_{T,s}'J^{-\prime},e_s')',
 \qquad U_T=T^{-1/2}\sum_{s=1}^T L_{T,s}.
 \label{dfe:joint-array}
\end{equation}
Let $X\stackrel d=x_1$ and take $G\sim N(0,\mathcal W_{\rm VAR})$
independent of $X$.

\begin{lemma}[Separation from the terminal state]\label{dfe:state-separation}
Under the conditions of Theorem~\ref{dji:joint},
\begin{equation}
 (\sqrt T\Delta_T,x_{T+1})\ \Longrightarrow\ (G,X).
 \label{dfe:state-limit}
\end{equation}
The conclusion allows singular $\mathcal W_{\rm VAR}$.
\end{lemma}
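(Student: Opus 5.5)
The plan is to reduce the statement to the leading array $U_T$ and then split that array into two pieces driven by disjoint sets of innovations. By Theorem~\ref{dji:joint}, $\sqrt T\Delta_T=U_T+o_p(1)$, with $U_T$ as in \eqref{dfe:joint-array}. By Slutsky's theorem it therefore suffices to prove $(U_T,x_{T+1})\Rightarrow(G,X)$.

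\textbf{Approximating the terminal state.} Fix $k_T=\lceil c\log T\rceil$ with $c$ large, and define the truncation as in \eqref{dji:truncated-design}:
\[
 x_{T+1}^{(k)}=\mu_x+\sum_{j=1}^{k_T}C_j\varepsilon_{T+1-j}.
\]
This variable depends only on $(Z_{T+1-k_T},\ldots,Z_T)$. By \eqref{dji:lag-approximation}, $\|x_{T+1}-x_{T+1}^{(k)}\|_{L^2}\le C\rho^{k_T}\to0$. Stationarity gives $x_{T+1}^{(k)}\Rightarrow X$, since the truncation error vanishes and $x_{T+1}\stackrel d=x_1$.

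\textbf{Removing the recent dates from the influence sum.} Let $\ell_T=\lceil T^{1/4}\rceil\ge k_T$ and write $R_T=\{T-\ell_T+1,\ldots,T\}$. Split the array as
\[
 U_T=U_T^{\circ}+T^{-1/2}\sum_{s\in R_T}L_{T,s}.
\]
Martingale orthogonality and the moment bound \eqref{dji:array-moments}, with $h_{T,s}\le1+\log T$, bound the removed part by $E\|T^{-1/2}\sum_{s\in R_T}L_{T,s}\|^2\le C\ell_T(1+\log T)^2/T\to0$.

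The subtle point is that $U_T^\circ$ is not yet independent of the recent innovations. The coefficient influences $w_{i,s}=Q^{-1}x_sZ_{is}$ for $s\le T-\ell_T$ depend only on $Z_r$ with $r\le s$, so they do not involve dates in $R_T$. The harmonic weights are deterministic. Hence $U_T^\circ$ is $\sigma(Z_r:r\le T-\ell_T)$-measurable, and $x_{T+1}^{(k)}$ uses only dates in $R_T$ because $k_T\le\ell_T$. The two pieces are therefore exactly independent for each $T$, so no further truncation is needed on the $U_T^\circ$ side.

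\textbf{Combining the pieces.} Since $U_T^\circ=U_T+o_p(1)\Rightarrow G$ by Lemma~\ref{dji:block-clt}, independence gives
\[
 E\exp\{i(a'U_T^\circ+b'x^{(k)}_{T+1})\}=E e^{ia'U_T^\circ}\,E e^{ib'x^{(k)}_{T+1}}\to E e^{ia'G}\,E e^{ib'X}.
\]
This yields joint convergence to the product law $(G,X)$. Replacing $U_T^\circ$ by $\sqrt T\Delta_T$ and $x_{T+1}^{(k)}$ by $x_{T+1}$ costs only $o_p(1)$ terms. Singular $\mathcal W_{\rm VAR}$ is permitted because Lemma~\ref{dji:block-clt} already covers degenerate directions.

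\textbf{Main obstacle.} The key step is the measurability claim for $U_T^\circ$. It requires the influences $L_{T,s}$ to be functions of $Z_r$, $r\le s$, only, with no dependence on future dates or on estimated quantities. This is exactly why one works with the ideal array $U_T$ rather than the estimator itself, whose finite-sample estimation uses all dates including $R_T$. The remaining bounds are routine given Lemma~\ref{dji:block-clt} and \eqref{dji:lag-approximation}.
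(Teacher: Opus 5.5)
Your proposal is correct and follows essentially the same route as the paper. You delete the final block of influences, which is negligible by martingale orthogonality; you truncate the stable filter for $x_{T+1}$ to those same recent innovations; and you combine the exactly independent pieces through a characteristic-function argument. The differences are cosmetic. The paper uses a single cut $k_T=\lceil\log T\rceil$ for both steps, together with the sharper bound \eqref{dfe:last-weights} on the deleted harmonic weights. You instead delete a longer block $\ell_T\sim T^{1/4}$ and use the cruder bound $h_{T,s}\le 1+\log T$. The moment bound you actually need is the one for the untruncated increments, $E\|L_{T,s}\|^2\le C(1+h_{T,s}^2)$ as in \eqref{dfe:increment-moments}, not \eqref{dji:array-moments}.
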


\begin{proof}
By \eqref{dji:martingale}, $(L_{T,s})$ is a martingale-difference
array for $\mathcal F_s$.
The finite score and
innovation moments and Lemma~\ref{dyn:moments} imply
\begin{align}
 E\|q_{T,s}\|^2
 &\le2E\|g(\theta_0,Z_s)\|^2
          +2h_{T,s}^2\|B_g\|^2E\|\varphi(Z_s)\|^2
 \le C(1+h_{T,s}^2),\notag\\
 E\|Z_{is}Q^{-1}x_s\|^2
 &\le\|Q^{-1}\|^2E\|x_s\|^2EZ_{is}^2\le C,&
 E(Z_{is}^2-1)^2&=2,\notag\\
 E\|L_{T,s}\|^2&\le C(1+h_{T,s}^2).
 \label{dfe:increment-moments}
\end{align}
The factorization in the second line uses temporal independence, not
independence between coordinates. If $s<t$, conditional expectation at
$\mathcal F_{t-1}$ gives $E(L_{T,s}'L_{T,t})=0$.

Set $k_T=\lceil\log T\rceil$ for $T\ge3$ and define
\begin{align}
 U_T^-&=T^{-1/2}\sum_{s=1}^{T-k_T}L_{T,s},&
 X_T^+&=\mu_x+\sum_{j=1}^{k_T}C_j\varepsilon_{T+1-j}.
 \label{dfe:cut-objects}
\end{align}
For all sufficiently large $T$, $b_T<T-k_T$ and $k_T<T/2$.
For $T-k_T<s\le T$, the harmonic-weight definition gives
\begin{equation}
 0\le h_{T,s}=\sum_{m=s}^{T-1}m^{-1}
 \le\frac{T-s}{s}\le\frac{k_T}{T-k_T}\le\frac{2k_T}{T},
 \label{dfe:last-weights}
\end{equation}
where the sum is zero for $s=T$. Orthogonality and
\eqref{dfe:increment-moments} yield
\begin{align}
 E\|U_T-U_T^-\|^2
 &=T^{-1}\sum_{s=T-k_T+1}^TE\|L_{T,s}\|^2\notag\\
 &\le C\frac{k_T}{T}
           \left(1+\frac{4k_T^2}{T^2}\right)
 =O(\log T/T)\longrightarrow0.
 \label{dfe:deleted-block}
\end{align}
The causal filter \eqref{dyn:linear-design} and Minkowski's inequality
supply the other approximation:
\begin{align}
 \|x_{T+1}-X_T^+\|_{L^2}
 &\le\sum_{j>k_T}\|C_j\|\,\|\varepsilon_1\|_{L^2}
 \le C\sum_{j>k_T}\rho^j
 \le C\rho^{k_T}\longrightarrow0.
 \label{dfe:state-truncation}
\end{align}
Here $U_T^-$ is $\mathcal F_{T-k_T}$-measurable, whereas $X_T^+$
depends only on innovations with indices $T-k_T+1,\ldots,T$.
The iid innovation assumption makes these two objects independent.

For fixed $a\in\mathbb R^{d_\Delta}$ and $c\in\mathbb R^{d_x}$,
$|e^{iu}-e^{iv}|\le|u-v|$ and the two preceding bounds give
\begin{align}
 &\left|E e^{i(a'U_T+c'x_{T+1})}
                 -E e^{ia'U_T^-}E e^{ic'X_T^+}\right|\notag\\
 &\qquad\le\|a\|E\|U_T-U_T^-\|
             +\|c\|E\|x_{T+1}-X_T^+\|\notag\\
 &\qquad\le C\|a\|\sqrt{\log T/T}
             +C\|c\|\rho^{k_T}\longrightarrow0.
 \label{dfe:characteristic-separation}
\end{align}
By \eqref{dfe:deleted-block} and the joint estimator CLT,
$U_T^-\Longrightarrow G$. By stationarity, $x_{T+1}$ has the law of
$X$ for every $T$, so \eqref{dfe:state-truncation} implies
$X_T^+\Longrightarrow X$. The product on the first line of
\eqref{dfe:characteristic-separation} therefore converges to
$E e^{ia'G}E e^{ic'X}$, the characteristic function of an independent
pair. The characteristic-function convergence theorem proves
$(U_T,x_{T+1})\Longrightarrow(G,X)$.
Finally $\sqrt T\Delta_T-U_T=o_p(1)$ from
Theorem~\ref{dji:joint}. Adding this vanishing vector proves
\eqref{dfe:state-limit}.
\end{proof}

\subsection{Proof of Theorem~\ref{main:event-inference}}

Use $z_i(r),u_i(r)$ and $d(r)$ from \eqref{main:event-gradient},
$p(r)=C_{\theta_0}\{u(r)\}$, and
$p_{0,T+1}=p(x_{T+1})$. The fitted probability is
$\widehat p_{T+1}=\widehat p_{T+1}(c)$. Its estimated variance is
\begin{equation}
 \widehat\tau_T^2=\widehat d_T(x_{T+1})'
       \widehat{\mathcal W}_{\rm VAR}\widehat d_T(x_{T+1}).
 \label{dfe:feasible-variance}
\end{equation}

\begin{proof}[Proof of Theorem~\ref{main:event-inference}]
For coefficient perturbations $b_i\in\mathbb R^{d_x}$ and scale
perturbations $a_i>-1$, stack
\[
 \delta=(b_1',\ldots,b_n',a_1,\ldots,a_n)'.
\]
Define
\begin{equation}
 H(r,\theta,\delta)=C_\theta\left(
     \Phi\left\{\frac{z_i(r)-r'b_i}{\sqrt{1+a_i}}\right\}:i\le n\right).
 \label{dfe:exact-map}
\end{equation}
The normalization of $d_{i,T}$ and $a^v_{i,T}$ implies the exact identity
\[
 \widehat z_{i,T}(r)
       =\{z_i(r)-r'd_{i,T}\}/\sqrt{1+a^v_{i,T}}.
\]
Consequently $H(x_{T+1},\theta_0,0)=p_{0,T+1}$ and
$H(x_{T+1},\widehat\theta,\delta_T)=\widehat p_{T+1}$.
Differentiation at $(\theta_0,0)$ gives
\[
 \partial_{b_i}\{(z_i-r'b_i)(1+a_i)^{-1/2}\}=-r,
 \qquad
 \partial_{a_i}\{(z_i-r'b_i)(1+a_i)^{-1/2}\}=-z_i/2,
\]
which proves every sign and normalization in
\eqref{main:event-gradient} by the chain rule.

The random state need not lie in a fixed compact set. We bound the
Hessian on each such set first, then let its radius increase.
Fix $M<\infty$. For $\|r\|\le M$,
\begin{equation}
 |z_i(r)|\le K_M:={\max_{i\le n}
               (|c_i|+M\|\beta_i\|)/\sigma_i}<\infty.
 \label{dfe:compact-arguments}
\end{equation}
Choose $0<\epsilon_M\le1/2$ so the closed parameter ball of radius
$\epsilon_M$ around $\theta_0$ is contained in
the neighborhood on which the event CDF is $C^2$. When $\|\delta\|\le\epsilon_M$,
\begin{align}
 1+a_i&\ge1/2,&
 \left|\frac{z_i(r)-r'b_i}{\sqrt{1+a_i}}\right|
 &\le\sqrt2(K_M+M\epsilon_M)=:K_M^*.
 \label{dfe:compact-map}
\end{align}
All copula arguments then belong to the compact cube
$[\Phi(-K_M^*),\Phi(K_M^*)]^n\subset(0,1)^n$.
Assumption~\ref{main:event-regularity}(i) and the differentiable normal and scale maps imply
that the Hessian of $H$ with respect to $(\theta,\delta)$ is continuous
on the corresponding compact set. Its operator norm has a finite
maximum $A_M$. The gradient $d(r)$ also has a finite maximum norm
$D_M$ on $\|r\|\le M$.

Define the remainder globally by
$R_T=\widehat p_{T+1}-p_{0,T+1}-d(x_{T+1})'\Delta_T$.
On the event $\{\|x_{T+1}\|\le M,\|\Delta_T\|\le\epsilon_M\}$,
Taylor's formula with integral remainder gives
\begin{align}
 R_T&=\int_0^1(1-s)\Delta_T'
   \partial^2_{(\theta,\delta)}H
    (x_{T+1},\theta_0+s(\widehat\theta-\theta_0),s\delta_T)
                                     \Delta_T\,ds,\notag\\
 \sqrt T|R_T|&\le\tfrac12 A_M\sqrt T\|\Delta_T\|^2
                 =O_p(T^{-1/2}).
 \label{dfe:taylor-bound}
\end{align}
The last order uses $\sqrt T\Delta_T=O_p(1)$ from
Theorem~\ref{dji:joint}, with $M$ fixed. For every $\varepsilon>0$,
\begin{align}
 P(\sqrt T|R_T|>\varepsilon)
 &\le P(\|X\|>M)+P(\|\Delta_T\|>\epsilon_M)
       +P(A_M\sqrt T\|\Delta_T\|^2/2>\varepsilon),\notag\\
 \limsup_{T\to\infty}P(\sqrt T|R_T|>\varepsilon)
 &\le P(\|X\|>M)\longrightarrow0\quad(M\to\infty).
 \label{dfe:unlocalize-taylor}
\end{align}
Hence
\begin{equation}
 \sqrt T(\widehat p_{T+1}-p_{0,T+1})
 =d(x_{T+1})'\sqrt T\Delta_T+o_p(1).
 \label{dfe:event-expansion}
\end{equation}
Continuity of $d$,
Lemma~\ref{dfe:state-separation}, and the continuous mapping theorem
give \eqref{main:event-limit}. Independence of $G$ and $X$ identifies
its conditional Gaussian law.

The same localization controls the fitted gradient at the terminal
state. For $|a|\le1/2$, the derivative bound
$|\partial_a(1+a)^{-1/2}|\le\sqrt2$ yields
$|(1+a)^{-1/2}-1|\le\sqrt2|a|$.
Using the exact normal-argument identity, on
$\|r\|\le M$ and $\|\delta_T\|\le1/2$ we obtain
\begin{align}
 |\widehat z_{i,T}(r)-z_i(r)|
 &\le |z_i(r)|\,|(1+a^v_{i,T})^{-1/2}-1|
       +|r'd_{i,T}|(1+a^v_{i,T})^{-1/2}\notag\\
 &\le\sqrt2 K_M|a^v_{i,T}|+\sqrt2 M\|d_{i,T}\|
 \le C_M\|\delta_T\|.
 \label{dfe:plug-arguments}
\end{align}
The first derivatives of $C$ are locally Lipschitz on this compact
set because its second derivatives are continuous there. $\phi$, its
derivative, and multiplication by $z_i(r)$ and $r$ have bounded
derivatives on the same set. The mean-value inequality therefore gives
\begin{equation}
 \sup_{\|r\|\le M}\|\widehat d_T(r)-d(r)\|
       \le B_M\|\Delta_T\|=O_p(T^{-1/2})
 \label{dfe:plug-gradient}
\end{equation}
on a parameter event with probability tending to one, for a finite
constant $B_M$. Applying the probability decomposition in
\eqref{dfe:unlocalize-taylor}, now to this gradient difference, proves
\[
 \widehat d_T(x_{T+1})-d(x_{T+1})=o_p(1),\qquad
 d(x_{T+1})=O_p(1),\qquad \widehat d_T(x_{T+1})=O_p(1).
\]
The second assertion follows directly by stationarity and continuity
of $d$, and the third follows from the first two.
For $\widehat d=\widehat d_T(x_{T+1})$ and $d=d(x_{T+1})$, expand
and bound the variance difference as
\begin{align}
 |\widehat\tau_T^2-\tau^2(x_{T+1})|
 &\le\|\widehat{\mathcal W}_{\rm VAR}-\mathcal W_{\rm VAR}\|
                                      \|\widehat d\|^2\notag\\
 &\quad+\|\mathcal W_{\rm VAR}\|\,\|\widehat d-d\|
                                      (\|\widehat d\|+\|d\|)\notag\\
 &=o_p(1)O_p(1)+O(1)o_p(1)O_p(1)=o_p(1).
 \label{dfe:variance-consistency}
\end{align}
The covariance consistency here is precisely the feasible conclusion
of Theorem~\ref{dji:joint}.

Let $V_T=\tau^2(x_{T+1})$ and $V=\tau^2(X)$. Stationarity gives
$V_T\stackrel{d}=V$ for every $T$. If $P(V>0)=1$, then
$P(V\le\gamma)\downarrow0$ as $\gamma\downarrow0$. In particular,
for every $\gamma>0$,
\begin{align}
 P(\widehat\tau_T^2=0)
 &\le P(V\le\gamma)
         +P(|\widehat\tau_T^2-V_T|\ge\gamma),\notag\\
 P\left(\left|\frac{\widehat\tau_T^2-V_T}{V_T}\right|>\varepsilon\right)
 &\le P(V\le\gamma)
         +P(|\widehat\tau_T^2-V_T|>\varepsilon\gamma).
 \label{dfe:small-variance}
\end{align}
Taking first the limit superior in $T$, then $\gamma\downarrow0$,
proves that the zero-variance convention is asymptotically irrelevant
and that the relative variance error is $o_p(1)$.
Equations~\eqref{main:event-limit} and~\eqref{dfe:variance-consistency}
therefore give the joint limit
\[
 \left(\frac{\sqrt T(\widehat p_{T+1}-p_{0,T+1})}
                  {\widehat\tau_T},x_{T+1}\right)
 \Longrightarrow
 \left(\frac{d(X)'G}{\sqrt{d(X)'\mathcal W_{\rm VAR}d(X)}},X\right).
\]
For every state outside a set of $X$-probability zero, the first
coordinate on the right is a centered Gaussian divided by its positive
standard deviation. Its conditional distribution is $N(0,1)$ and
does not depend on $X$, hence its marginal distribution is $N(0,1)$.
This proves \eqref{main:pivot}. Continuity of the standard normal
CDF at $\pm z_{1-\alpha/2}$ proves \eqref{main:interval}. Intersecting
with $[0,1]$ preserves membership of $p_{0,T+1}\in[0,1]$.
\end{proof}

\subsection{Fixed rectangles and mixed directions}

For a fixed rectangle $A$ as in Lemma~\ref{dfe:rectangle-smooth}, define
\begin{align}
 p_{A,0,T+1}&=P(y_{T+1}\in A\mid\mathcal F_T)
              =\int_{\mathcal B(x_{T+1})}f_{\theta_0}(z)\,dz,
                 \notag\\
 \widehat p_{A,T+1}&=
     \int_{\widehat{\mathcal B}_T(x_{T+1})}f_{\widehat\theta}(z)\,dz,
 &\widehat z_{i,T}^- (r)&=
     \frac{\ell_i-r'\widehat\beta_{i,T}}{\sqrt{\widehat v_{i,T}}},\notag\\
 \widehat z_{i,T}^+ (r)&=
     \frac{u_i-r'\widehat\beta_{i,T}}{\sqrt{\widehat v_{i,T}}},&
 \widehat{\mathcal B}_T(r)&=
     \prod_i(\widehat z_{i,T}^-(r),\widehat z_{i,T}^+(r)].
 \label{dfe:rectangle-estimator}
\end{align}
Infinite endpoints remain unchanged. Set
$\tau_A^2(r)=d_A(r)'\mathcal W_{\rm VAR}d_A(r)$.
The fitted gradient substitutes $\widehat\theta$ and the finite
fitted endpoints in \eqref{dfe:rectangle-gradient}.

\begin{corollary}[Inference for a fixed rectangle]\label{dfe:rectangle-clt}
Under Theorem~\ref{dji:joint} and \eqref{main:event-envelope},
the expansion, joint mixed normal limit, and variance consistency of
Theorem~\ref{dfe:event-clt} hold for the rectangle estimates above.
If $P\{\tau_A^2(X)>0\}=1$, its studentized limit and coverage
conclusions also hold.
\end{corollary}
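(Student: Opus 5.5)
The plan is to reproduce the proof of Theorem~\ref{dfe:event-clt} with the orthant map $H$ replaced by the rectangle map
\[
 H_A(r,\theta,\delta)=\int_{\mathcal B(r,\delta)}f_\theta(z)\,dz,\qquad
 \mathcal B(r,\delta)=\prod_i\left(\frac{z_i^-(r)-r'b_i}{\sqrt{1+a_i}},
       \frac{z_i^+(r)-r'b_i}{\sqrt{1+a_i}}\right].
\]
Here $z_i^\pm(r)=(\ell_i\text{ or }u_i-r'\beta_i)/\sigma_i$, and infinite endpoints are left unchanged. The exact identity $\widehat z^\pm_{i,T}(r)=\{z_i^\pm(r)-r'd_{i,T}\}/\sqrt{1+a^v_{i,T}}$ gives $H_A(x_{T+1},\theta_0,0)=p_{A,0,T+1}$ and $H_A(x_{T+1},\widehat\theta,\delta_T)=\widehat p_{A,T+1}$. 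Everything except local $C^2$ smoothness of $H_A$ then carries over verbatim. This includes Lemma~\ref{dfe:state-separation}, the joint limit $\sqrt T\Delta_T\Rightarrow G$ and covariance consistency from Theorem~\ref{dji:joint}, the localization on $\{\|x_{T+1}\|\le M\}$, and the studentization argument \eqref{dfe:small-variance}.

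The first step is smoothness. I would invoke Lemma~\ref{dfe:rectangle-smooth}, which the statement presupposes, together with Lemma~\ref{pred:cdf-smooth}. From the envelope \eqref{main:event-envelope} they give that $(\theta,\zeta^-,\zeta^+)\mapsto\int_{\prod_i(\zeta_i^-,\zeta_i^+]}f_\theta$ is $C^2$ on a neighborhood of $\theta_0$ times compact sets of finite endpoints. The derivative with respect to a finite endpoint is $\pm$ a face integral of $f_\theta$. The derivative with respect to $\theta$ is $\int\partial_\theta f_\theta=\int f_\theta\,\partial_\theta\ell^*$, and dominated convergence applies because the polynomial envelopes $(1+\|z\|^r)$ are integrable against $f_\theta$. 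Integrability holds because every coordinate marginal is standard normal, uniformly for $\theta$ near $\theta_0$. Second derivatives use $\partial^2_\theta$, mixed face terms and edge terms, all covered by the order-three envelope.

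Composing with the smooth endpoint maps $(b_i,a_i)\mapsto(z_i^\pm-r'b_i)(1+a_i)^{-1/2}$ yields the gradient $d_A(r)$ of \eqref{dfe:rectangle-gradient} by the chain rule, and gives a Hessian bounded by some $A_M$ on $\{\|r\|\le M,\|(\theta-\theta_0,\delta)\|\le\epsilon_M\}$. The one point needing care is that on a compact set of $r$ the finite endpoints stay in a compact interval, while infinite endpoints contribute no derivative. Unlike the orthant case, the endpoints do not need to stay inside $(0,1)$ on the copula scale, because we work with densities on $z$-space.

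The remaining steps then follow Theorem~\ref{dfe:event-clt} directly. The Taylor remainder satisfies $\sqrt T|R_T|\le A_M\sqrt T\|\Delta_T\|^2/2$ and is unlocalized exactly as in \eqref{dfe:unlocalize-taylor}. Continuity of $d_A$ and Lemma~\ref{dfe:state-separation} then give $(\sqrt T(\widehat p_{A}-p_{A,0}),x_{T+1})\Rightarrow(d_A(X)'G,X)$. A Lipschitz bound $\sup_{\|r\|\le M}\|\widehat d_{A,T}(r)-d_A(r)\|\le B_M\|\Delta_T\|$, which uses the continuity of second derivatives, together with \eqref{dfe:variance-consistency} yields $\widehat\tau^2_{A,T}-\tau_A^2(x_{T+1})\to_p0$. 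Under $P\{\tau_A^2(X)>0\}=1$, the conditional-normality argument gives the pivot and the coverage statement.

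The main obstacle is the smoothness step, namely differentiating under the integral with moving boundaries and a parameter-dependent density with only polynomial envelopes. If Lemma~\ref{dfe:rectangle-smooth} does not deliver this directly, I would fall back on inclusion–exclusion. That writes the rectangle probability as a signed sum of $2^{k}$ orthant-type CDFs of sub-vectors with mixed finite and infinite arguments, and each term is $C^2$ by Lemma~\ref{pred:cdf-smooth} applied to marginal copulas. The difficulty is that sub-vector CDFs of a vine are integrals rather than vine factors, so the envelope argument for $\ell^*$ is still needed.
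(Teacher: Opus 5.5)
Your proposal is correct and follows the paper's proof essentially step for step. The exact endpoint identity identifies the target and estimator with $H_A(x_{T+1},\theta_0,0)$ and $H_A(x_{T+1},\widehat\theta,\delta_T)$, Lemma~\ref{dfe:rectangle-smooth} supplies the Hessian and gradient bounds on $\|r\|\le M$, and the localized Taylor remainder, the Lipschitz bound on the fitted gradient, Lemma~\ref{dfe:state-separation}, and the relative-variance localization then give the mixed normal limit, the pivot, and coverage. Your sketch of why the smoothness lemma holds, and the inclusion--exclusion fallback, go beyond the paper, which simply cites that lemma. If you proved the lemma yourself, ``dominated convergence'' would need a bound uniform in $\theta$; the uniform standard-normal marginal moments, combined with the integral form of the mean-value theorem and Fubini, supply it without a single dominating function.
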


\begin{proof}
Let $H_A(r,\theta,\delta)$ be the probability map in
Lemma~\ref{dfe:rectangle-smooth}. Each finite fitted endpoint satisfies
exactly
\[
 \widehat z_{i,T}^\pm(r)
       =\{z_i^\pm(r)-r'd_{i,T}\}/\sqrt{1+a^v_{i,T}}.
\]
The target and estimator therefore equal
$H_A(x_{T+1},\theta_0,0)$ and
$H_A(x_{T+1},\widehat\theta,\delta_T)$, respectively. For each fixed
$M$, Lemma~\ref{dfe:rectangle-smooth} supplies a radius
$\epsilon_{A,M}>0$, a Hessian bound $A_{A,M}<\infty$, and a gradient
bound $D_{A,M}<\infty$ on $\|r\|\le M$ and the corresponding
parameter ball. Its derivatives are continuous and its second
derivatives bounded there. Repeating the Taylor identity itself for
this map gives, on that event,
\begin{align}
 \widehat p_{A,T+1}-p_{A,0,T+1}
   &=d_A(x_{T+1})'\Delta_T+R_{A,T},\notag\\
 \sqrt T|R_{A,T}|&\le A_{A,M}\sqrt T\|\Delta_T\|^2/2
                         =O_p(T^{-1/2}),\notag\\
 P(\sqrt T|R_{A,T}|>\varepsilon)
   &\le P(\|X\|>M)+P(\|\Delta_T\|>\epsilon_{A,M})
           +P(A_{A,M}\sqrt T\|\Delta_T\|^2/2>\varepsilon).
 \label{dfe:rectangle-remainder}
\end{align}
The last two probabilities tend to zero for fixed $M$. Then
$P(\|X\|>M)\to0$ as $M\to\infty$.
The bound \eqref{dfe:plug-arguments} applies to every finite endpoint,
with $K_M$ replaced by the maximum absolute value of these finitely many
endpoints. The bounded Hessian of the endpoint-probability map then
gives a finite $B_{A,M}$ such that
\[
 \sup_{\|r\|\le M}\|\widehat d_{A,T}(r)-d_A(r)\|
       \le B_{A,M}\|\Delta_T\|=O_p(T^{-1/2})
\]
on the same type of parameter event. Inactive endpoints contribute no
terms. The probability decomposition in
\eqref{dfe:rectangle-remainder} consequently gives
$\widehat d_{A,T}(x_{T+1})-d_A(x_{T+1})=o_p(1)$.
Put $\widehat\tau_{A,T}^2=
\widehat d_{A,T}(x_{T+1})'\widehat{\mathcal W}_{\rm VAR}
\widehat d_{A,T}(x_{T+1})$ and $V_T=\tau_A^2(x_{T+1})$.
The gradient bound, tightness of $d_A(x_{T+1})$, and covariance
consistency give
\begin{align*}
 |\widehat\tau_{A,T}^2-V_T|
 &\le\|\widehat{\mathcal W}_{\rm VAR}\|
       \|\widehat d_{A,T}-d_A\|(\|\widehat d_{A,T}\|+\|d_A\|)
       +\|d_A\|^2\|\widehat{\mathcal W}_{\rm VAR}-\mathcal W_{\rm VAR}\|
 =o_p(1).
\end{align*}
All gradients in this inequality are evaluated at $x_{T+1}$.
Lemma~\ref{dfe:state-separation} and the Taylor expansion now imply
\[
 (\sqrt T(\widehat p_{A,T+1}-p_{A,0,T+1}),
                      \widehat\tau_{A,T}^2,x_{T+1})
 \Longrightarrow(d_A(X)'G,\tau_A^2(X),X).
\]
For $c,\epsilon>0$, stationarity gives
\[
 P\{|\widehat\tau_{A,T}^2/V_T-1|>\epsilon\}
 \le P\{\tau_A^2(X)\le c\}
       +P\{|\widehat\tau_{A,T}^2-V_T|>\epsilon c\}.
\]
Taking $T\to\infty$ and then $c\downarrow0$ proves relative variance
consistency and excludes zero denominators with probability tending
to one. Conditional on $X$,
$d_A(X)'G/\tau_A(X)\sim N(0,1)$, so
\[
 P\{d_A(X)'G/\tau_A(X)\le z\}
 =E\{\Phi(z)\}=\Phi(z).
\]
The continuous mapping theorem gives the normal pivot. Its continuous
limit gives coverage $1-\alpha$, and intersecting with $[0,1]$
preserves coverage of the true probability.
\end{proof}

\subsection{Nondegeneracy for one-sided events}

For a one-sided event, each active marginal mean changes the
probability with a fixed sign. The resulting nonzero boundary
derivatives project onto coefficient variation from the centered lag
design. This gives a positive variance bound away from the mean lag
state, without sign symmetry of the innovation copula.

For nonempty $I\subseteq\{1,\ldots,n\}$, signs $s_i\in\{-1,1\}$,
and finite thresholds $c_i$, consider
\begin{equation}
 A=\{y:s_i y_i\le s_i c_i\text{ for every }i\in I\}.
 \label{gnd:event}
\end{equation}
Other coordinates are unrestricted. Continuity makes strict and weak
boundaries equivalent in probability. Write
$\mu_x=(1,\mu_L')'$ and $\Sigma_L=\operatorname{Var}(L_t)$.
For $r=(1,l')'$, let $f_i(r)$ be the signed-face derivative in
\eqref{gnd:positive-face}, and set $k_i(r)=-s_if_i(r)$ on $I$ and
zero elsewhere.

\begin{proposition}[One-sided event nondegeneracy]\label{ond:proposition}
Under Theorem~\ref{main:dynamic}, suppose the innovation vine is
Gaussian or positive Clayton as in
Propositions~\ref{score:gaussian-family} and~\ref{tail:sequential-clayton}.
Then $f_i(r)>0$ for $i\in I$, and
\begin{equation}
 d_A(r)'\mathcal W d_A(r)
 \ge (l-\mu_L)'\Sigma_L^{-1}(l-\mu_L)\,k(r)'R_Zk(r)>0
 \quad\text{if }r\ne\mu_x,
 \label{ond:lower-bound}
\end{equation}
for $\mathcal W=\mathcal W_{\rm VAR}$ or $\mathcal W_T^{\rm fin}$,
$1\le b<T$. Since $P(X=\mu_x)=0$, the projected variance is positive
almost surely.
\end{proposition}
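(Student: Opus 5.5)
The plan is to read $d_A(r)'\mathcal W d_A(r)$ as the variance of a scalar linear combination of the influence vector, and then bound that variance below by the variance of one orthogonal component. For $\mathcal W_{\rm VAR}$, the decomposition \eqref{main:covariance-decomposition} gives $d_A'\mathcal W_{\rm VAR}d_A=\operatorname{Var}\{d_A(r)'\xi_t\}+d_{A,\theta}'J^{-1}B_g\Sigma_\varphi B_g'J^{-\prime}d_{A,\theta}$. The second term is nonnegative, so it suffices to bound $\operatorname{Var}\{d_A(r)'\xi_t\}$. For $\mathcal W_T^{\rm fin}$, I would use the population analogue of \eqref{fdw:feasible}: an average over $s$ of $L_{T,s}C L_{T,s}'$ with $C=E(\zeta_t\zeta_t')$. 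Each summand is the covariance matrix of the vector $L_{T,s}\zeta_t$. Its terminal block is the identity applied to $e_t$, so every summand contains the full terminal influence with coefficient one. The argument below therefore applies summand by summand, because only the weights on functions of $Z_t$ change with $s$. Averaging the resulting lower bounds gives the same bound.

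\emph{Step 1: the orthogonal component.} For the one-sided event \eqref{gnd:event}, the coefficient gradient has blocks $d_{\beta_i}(r)=k_i(r)\,r$, with $k_i(r)=-s_if_i(r)$ on $I$ and $k_i(r)=0$ elsewhere, while the remaining gradient blocks multiply functions of $Z_t$ only. Thus $d_A'\xi_t=F(Z_t)+\sum_i k_i(r)Z_{it}\,r'Q^{-1}x_t$ for some square-integrable $F$. Write $x_t=\mu_x+(0,(L_t-\mu_L)')'$. The intercept identity \eqref{dyn:intercept-identity}, together with $e_1'r=1$, gives $r'Q^{-1}\mu_x=1$. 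This splits the linear combination into a function of $Z_t$, namely $F(Z_t)+\sum_ik_iZ_{it}$, plus
\[
 \eta_t=\Big(\sum_i k_i(r)Z_{it}\Big)\,r'Q^{-1}(x_t-\mu_x).
\]
The state $x_t$ is independent of $Z_t$ and $x_t-\mu_x$ has mean zero, so $E\{\eta_t f(Z_t)\}=0$ for every square-integrable $f$. Hence the variance is at least $E\eta_t^2=k'R_Zk\cdot E\{r'Q^{-1}(x_t-\mu_x)\}^2$. The partitioned inverse of $Q=\bigl(\begin{smallmatrix}1&\mu_L'\\ \mu_L&\Sigma_L+\mu_L\mu_L'\end{smallmatrix}\bigr)$ gives $r'Q^{-1}(x_t-\mu_x)=(l-\mu_L)'\Sigma_L^{-1}(L_t-\mu_L)$. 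Its second moment is $(l-\mu_L)'\Sigma_L^{-1}(l-\mu_L)$, which yields \eqref{ond:lower-bound}. Here $\Sigma_L\succ0$ follows from $R_Z\succ0$, $D_\sigma$ nonsingular and the causal representation, since a linear combination of $L_t$ with zero variance would force a zero-variance combination of innovations.

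\emph{Step 2: positivity.} I need $f_i(r)>0$ for every $i\in I$. The signed-face derivative \eqref{gnd:positive-face} is, up to the factor $\phi$ and the sign $s_i$, an integral of the joint innovation density over the face $\{z_i=z_i(r)\}$ of the orthant, with the other restricted coordinates on their prescribed sides. For Gaussian vines the density is a nondegenerate normal density, hence everywhere positive. For positive Clayton vines, Assumption~\ref{main:vine-assumption} makes each pair density positive on $(0,1)^2$, and the conditional transforms map into $(0,1)$, so the vine density is positive on $(0,1)^n$. The face integral is then over a set of positive $(n-1)$-dimensional measure, so it is positive. This gives $k\neq0$, hence $k'R_Zk>0$. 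If $r\neq\mu_x$ then $l\neq\mu_L$, so the first factor in \eqref{ond:lower-bound} is positive. Finally, $L_t$ has a continuous distribution, because its newest coordinates contain $D_\sigma Z_{t-1}$, which has a density and is independent of the rest. Hence $P(X=\mu_x)=0$.

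The main obstacle is the finite-weight case. Its time-varying weights $a_{T,s}$ and $d_{T,s}$ must be checked to leave the cross blocks between the copula scores and the terminal influences unable to cancel $\eta_t$. I would handle this by arguing summand by summand as above, since $\eta_t$ is orthogonal to every function of $Z_t$ regardless of its coefficient. A secondary delicate point is justifying the positivity of the Clayton face integral at finite thresholds, where the conditional PITs near the boundary must remain interior.
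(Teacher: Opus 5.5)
Your proposal is correct and follows essentially the same route as the paper. The intercept identity $Q^{-1}\mu_x=e_1$ isolates the centered-design part of the coefficient influence, and independence of $Z_t$ from $x_t$ makes that part orthogonal to every function of $Z_t$, including summand by summand in the finite-weight array. Block inversion of $Q$ then evaluates the resulting factor as $(l-\mu_L)'\Sigma_L^{-1}(l-\mu_L)$. Positivity of the face integral and the absence of an atom of $x_t$ at $\mu_x$ complete the argument.

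The only differences are cosmetic. You compute the scalar $r'Q^{-1}(x_t-\mu_x)$ directly rather than the matrix $R_Z\otimes K_x$. You obtain $\Sigma_L\succ0$ from the causal representation rather than as a Schur complement of $Q\succ0$. You get $f_i(r)>0$ from a strictly positive integrand on a set of positive measure rather than from a positive minimum on a compact box.
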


\begin{proof}
The coefficient influence contains variation in the lag design that
cannot be explained by functions of the current innovation. To isolate
it, use $Qe_1=\mu_x$ and write
\begin{equation}
 w_{i,t}=e_1Z_{it}+u_{i,t},\qquad
 u_{i,t}=Q^{-1}(x_t-\mu_x)Z_{it},\qquad
 K_x=Q^{-1}-e_1e_1'.
 \label{ond:design-decomposition}
\end{equation}
Current $Z_t$ is independent of $x_t$. For any scalar square-integrable
function $F(Z_t)$, Cauchy--Schwarz ensures that $E|Z_{it}F(Z_t)|<\infty$,
and therefore
\begin{align}
 E\{u_{i,t}F(Z_t)\}
 &=Q^{-1}E(x_t-\mu_x)E\{Z_{it}F(Z_t)\}=0,\notag\\
 E(u_{i,t}u_{j,t}')
 &=(R_Z)_{ij}Q^{-1}(Q-\mu_x\mu_x')Q^{-1}
   =(R_Z)_{ij}K_x.
 \label{ond:orthogonality}
\end{align}
All score and scale coordinates used below are square integrable by
the polynomial envelopes and normal coordinate moments. Stack the
$u_{i,t}$ by equation and put $\upsilon_t=(0_{q_\theta}',u_t',0_n')'$.
The common influence $\xi_t$ in \eqref{dji:ideal-xi} decomposes as
$\xi_t=\upsilon_t+\chi(Z_t)$: replace each coefficient block by
$e_1Z_{it}$ to obtain $\chi$. Equation~\eqref{ond:orthogonality} makes
the two terms orthogonal. The positive semidefinite decomposition \eqref{dji:psd-identity}
then implies
\begin{equation}
 d_A'\mathcal W_{\rm VAR}d_A
 \ge d_{A,\beta}'(R_Z\otimes K_x)d_{A,\beta}.
 \label{ond:variance-bound}
\end{equation}
For \eqref{fdw:population-congruence}, write each weighted influence as
$L_{T,t}\zeta_t=\upsilon_t+\chi_{T,t}(Z_t)$.
Equation~\eqref{ond:orthogonality} gives
\[
 E\{\upsilon_t\chi_{T,t}(Z_t)'\}=0,\qquad
 \mathcal W_T^{\rm fin}
 =E(\upsilon_t\upsilon_t')
       +T^{-1}\sum_{t=1}^TE\{\chi_{T,t}(Z_t)\chi_{T,t}(Z_t)'\}
 \succeq E(\upsilon_t\upsilon_t').
\]
The coefficient block of the last matrix is $R_Z\otimes K_x$ and
its other blocks are zero. This proves \eqref{ond:variance-bound}
also for the finite array.

The sign and size of the coefficient gradient follow from the event
boundaries. Extend $s_i$ to all coordinates by setting $s_i=1$ outside
$I$, put $S_I^*=\operatorname{diag}(s_1,\ldots,s_n)$, and let
$U=S_I^*Z$ have density $f_U$. For $i\in I$, write
$t_i(r)=s_i(c_i-r'\beta_i)/\sigma_i$ and
$D_{-i}(r)=\{u_{-i}:u_j\le t_j(r),\ j\in I\setminus\{i\}\}$.
The density is continuous and strictly positive everywhere for either
verified family. The face differentiation and finite trace bounds of
Lemma~\ref{dfe:rectangle-smooth} give
\begin{equation}
 f_i(r)=\int_{D_{-i}(r)}f_U\{t_i(r),u_{-i}\}\,du_{-i}>0.
 \label{gnd:positive-face}
\end{equation}
For $n>1$, the domain contains a compact box of positive volume: use
$[t_j(r)-1,t_j(r)-1/2]$ for restricted coordinates and $[-1,1]$ for
the others. The continuous positive density has a positive minimum
on this box, so its integral is strictly positive. For $n=1$, the
integral is the positive univariate density. A normalized coefficient
perturbation $d_i$ changes the signed boundary by $-s_i r'd_i$.
Consequently
\begin{equation}
 d_{A,\beta_i}(r)=k_i(r)r,\qquad
 d_{A,\beta}'(R_Z\otimes K_x)d_{A,\beta}
       =\{k(r)'R_Zk(r)\}\{r'K_xr\}.
 \label{gnd:coefficient-gradient}
\end{equation}
At least one $k_i(r)$ is nonzero, and $R_Z>0$, so the first factor
is positive.

To evaluate the second factor, block inversion of
$Q=\left(\begin{smallmatrix}1&\mu_L'\\
\mu_L&\Sigma_L+\mu_L\mu_L'\end{smallmatrix}\right)$ gives
\[
 K_x=\begin{pmatrix}
 \mu_L'\Sigma_L^{-1}\mu_L&-\mu_L'\Sigma_L^{-1}\\
 -\Sigma_L^{-1}\mu_L&\Sigma_L^{-1}
 \end{pmatrix},\qquad
 r'K_xr=(l-\mu_L)'\Sigma_L^{-1}(l-\mu_L).
\]
The matrix $\Sigma_L$ is positive definite because it is the Schur
complement of the intercept entry in $Q>0$. The last display is
strictly positive when $l\ne\mu_L$, proving \eqref{ond:lower-bound}.
Finally, conditional on $\mathcal F_{t-2}$, the first coordinate of
$y_{t-1}$ has a normal distribution with variance $\sigma_1^2>0$.
It therefore has no atom at its unconditional mean. Since $p\ge1$
and this coordinate belongs to $L_t$,
\[
 P(x_t=\mu_x)\le P\{y_{1,t-1}=E(y_{1,t-1})\}
 =E[P\{y_{1,t-1}=E(y_{1,t-1})\mid\mathcal F_{t-2}\}]=0.
\]
Stationarity transfers this conclusion to $X$. The bound proves
almost-sure nondegeneracy, without a uniform lower bound over states
and without asserting positivity at $r=\mu_x$ in the Clayton case.
\end{proof}

\subsection{Intervals on the log-odds scale}

A logit interval requires control near probability boundaries because
the random forecast state can make the true probability arbitrarily
close to zero or one. We first restrict probabilities away from the
boundaries and then let that restriction shrink. This justifies the
logit derivative without a uniform lower bound over states.

Let $\widehat p_T$ estimate a fixed rectangular-event probability
$p(x_{T+1})$, and let $\widehat\tau_T$ be the corresponding nonnegative
projected standard deviation. Write
$\ell(p)=\log\{p/(1-p)\}$ and $\Lambda(u)=(1+e^{-u})^{-1}$.
For deterministic $\epsilon_T\in(0,1/2)$ with $\epsilon_T\to0$, define
\begin{align}
 \check p_T&=\min(1-\epsilon_T,\max(\epsilon_T,\widehat p_T)),\notag\\
 \widehat\sigma_{\ell,T}&=
         \frac{\widehat\tau_T}{\check p_T(1-\check p_T)},\notag\\
 I_{\ell,T}&=\left[
 \Lambda\left\{\ell(\check p_T)-z_{1-\alpha/2}
                    \widehat\sigma_{\ell,T}/\sqrt T\right\},
 \Lambda\left\{\ell(\check p_T)+z_{1-\alpha/2}
                    \widehat\sigma_{\ell,T}/\sqrt T\right\}
                    \right].
 \label{rli:logit-interval}
\end{align}
The construction applies separately to the archive, its finite-weight
covariance, and the Gaussian refit.

\begin{proposition}[Logit event intervals]\label{rli:logit}
Under the event-inference conditions of Theorem~\ref{dfe:event-clt},
Corollary~\ref{dfe:rectangle-clt}, or Corollary~\ref{dfs:event-inference},
suppose $P\{0<p(X)<1,\ \tau^2(X)>0\}=1$. For each fixed
$0<\alpha<1$,
\begin{equation}
 \frac{\sqrt T\{\ell(\check p_T)-\ell(p(x_{T+1}))\}}
      {\widehat\sigma_{\ell,T}}
 \Longrightarrow N(0,1),\qquad
 P\{p(x_{T+1})\in I_{\ell,T}\}\longrightarrow1-\alpha.
 \label{rli:logit-pivot}
\end{equation}
The pivot is set to zero when its denominator is zero, an event of
vanishing probability.
\end{proposition}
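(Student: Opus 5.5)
We prove the pivot with a delta method on the log-odds scale, localized away from probability boundaries and small variances, in the same way Theorem~\ref{dfe:event-clt} localizes the state. The inputs come from whichever event theorem is invoked (Theorem~\ref{dfe:event-clt}, Corollary~\ref{dfe:rectangle-clt}, or Corollary~\ref{dfs:event-inference}). They are the joint limit
\[
 \bigl(\sqrt T\{\widehat p_T-p(x_{T+1})\},\widehat\tau_T^2,x_{T+1}\bigr)\Longrightarrow(d(X)'G,\tau^2(X),X),
\]
with $G$ independent of $X$, together with $\widehat\tau_T^2-\tau^2(x_{T+1})\to_p0$. In particular $\widehat p_T-p(x_{T+1})=O_p(T^{-1/2})$.

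\textbf{Step 1 (localization).} For fixed $\gamma\in(0,1/4)$, define the event
\[
 E_{\gamma}=\{\gamma\le p(x_{T+1})\le1-\gamma,\ \tau^2(x_{T+1})\ge\gamma\}.
\]
By stationarity, $P(E_\gamma^c)=P\{p(X)\notin[\gamma,1-\gamma]\text{ or }\tau^2(X)<\gamma\}$ for every $T$. This probability tends to zero as $\gamma\downarrow0$, by the almost-sure condition $0<p(X)<1$, $\tau^2(X)>0$.

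On $E_\gamma$ we have $|\widehat p_T-p(x_{T+1})|=O_p(T^{-1/2})$ and $\epsilon_T\to0$. Hence, with probability tending to one on $E_\gamma$, $\widehat p_T\in[\gamma/2,1-\gamma/2]$ and the clipping is inactive, so $\check p_T=\widehat p_T$. A mean-value expansion of $\ell$ on $[\gamma/2,1-\gamma/2]$, where $\ell'(p)=1/\{p(1-p)\}$ and $\ell''$ is bounded by a constant $C_\gamma$, gives
\[
 \sqrt T\{\ell(\check p_T)-\ell(p(x_{T+1}))\}=\frac{\sqrt T\{\widehat p_T-p(x_{T+1})\}}{p(x_{T+1})\{1-p(x_{T+1})\}}+O_p(T^{-1/2}).
\]
In the same way, $\check p_T(1-\check p_T)-p(x_{T+1})\{1-p(x_{T+1})\}=o_p(1)$, and the two factors are bounded below by $\gamma/4$ on this event. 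It follows that $\widehat\sigma_{\ell,T}/\sigma_\ell(x_{T+1})\to_p1$ on $E_\gamma$, where $\sigma_\ell(r)=\tau(r)/[p(r)\{1-p(r)\}]$. This uses the relative variance consistency already obtained in the event theorems, which is exactly where $\tau^2\ge\gamma$ enters.

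\textbf{Step 2 (pivot on the localized event).} On $E_\gamma$, the logit pivot equals the probability-scale pivot $\sqrt T\{\widehat p_T-p(x_{T+1})\}/\widehat\tau_T$ multiplied by $1+o_p(1)$, plus $o_p(1)$. The probability-scale pivot converges to $N(0,1)$, and conditionally on $X$ its limit is $N(0,1)$ irrespective of $X$. Because $E_\gamma$ is determined by the state, the joint convergence with $x_{T+1}$ lets us restrict to it. Concretely, for each $z$,
\[
 \limsup_T\bigl|P\{\text{pivot}\le z\}-\Phi(z)\bigr|\le 2P(E_\gamma^c)+o(1).
\]
Two technical points need care here. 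Because the boundary of $E_\gamma$ can carry mass, we should replace the indicator by a continuous cutoff in the state, or choose $\gamma$ from the countable complement of the atoms of the laws of $p(X)$ and $\tau^2(X)$; either choice makes the continuous mapping theorem apply. The zero-denominator convention is handled as in \eqref{dfe:small-variance}: $\widehat\sigma_{\ell,T}=0$ requires $\widehat\tau_T=0$, and that event has vanishing probability. Letting $\gamma\downarrow0$ then gives the pivot limit in \eqref{rli:logit-pivot}.

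\textbf{Step 3 (coverage).} Since $\Lambda$ is strictly increasing and inverts $\ell$, for $p(x_{T+1})\in(0,1)$ the event $p(x_{T+1})\in I_{\ell,T}$ coincides with
\[
 \bigl|\sqrt T\{\ell(\check p_T)-\ell(p(x_{T+1}))\}\bigr|\le z_{1-\alpha/2}\widehat\sigma_{\ell,T}.
\]
The coverage claim then follows from the pivot limit and continuity of $\Phi$ at $\pm z_{1-\alpha/2}$, given that $P\{0<p(x_{T+1})<1\}=1$.

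\textbf{Main obstacle.} The hardest step is Step 2's interchange between the localization in the random state and the weak limit. Boundary mass of $E_\gamma$ and the clipping at $\epsilon_T$ must not interact with the $O_p(T^{-1/2})$ estimation error. I would handle this by first proving the statement with a smooth state-dependent weight $\psi_\gamma(x_{T+1})$ and then removing the weight using $P(E_\gamma^c)\to0$.
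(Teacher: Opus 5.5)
Your proposal is correct and follows essentially the paper's argument: localize away from $p(X)\in\{0,1\}$ and small $\tau^2(X)$ by stationarity, show the clipping is inactive and $\check p_T(1-\check p_T)/[p(x_{T+1})\{1-p(x_{T+1})\}]\to_p1$ via a bounded-$\ell''$ Taylor expansion, transfer the probability-scale pivot by Slutsky, and obtain coverage from monotonicity of $\Lambda$. The ``main obstacle'' you flag does not arise: as the paper does, use the localization only to show that the remainders are $o_p(1)$ unconditionally, via $P(|R_T|>\varepsilon)\le P(E_\gamma^c)+P(|R_T|\mathbf 1_{E_\gamma}>\varepsilon)$, and then apply Slutsky to the unconditional pivot supplied by the event theorem, so neither a smooth cutoff nor an atom-avoiding choice of $\gamma$ is needed.
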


\begin{proof}
Put $P_T=p(x_{T+1})$ and $E_T=\widehat p_T-P_T$. The event theorem
supplies $\sqrt T E_T=O_p(1)$,
$\widehat\tau_T^2-\tau^2(x_{T+1})=o_p(1)$, and the joint state limit.
For $0<\delta<1/2$, stationarity gives
\begin{equation}
 P\{P_T\notin[\delta,1-\delta]\}
 =P\{p(X)\notin[\delta,1-\delta]\}\longrightarrow0
 \quad\text{as }\delta\downarrow0.
 \label{rli:boundary-localization}
\end{equation}
On $\mathcal B_{T,\delta}=\{P_T\in[\delta,1-\delta],|E_T|\le\delta/2\}$,
clipping is inactive for $\epsilon_T<\delta/2$ and
\[
 P(\mathcal B_{T,\delta}^c)
 \le P\{p(X)\notin[\delta,1-\delta]\}+P(|E_T|>\delta/2).
\]
Taking first $T\to\infty$ and then $\delta\downarrow0$ gives
$P(\check p_T\ne\widehat p_T)\to0$ without a uniform probability bound.

On the same event,
$\ell'(p)=\{p(1-p)\}^{-1}$ and
$\ell''(p)=(2p-1)/\{p^2(1-p)^2\}$ have finite bounds depending only
on $\delta$. Taylor's integral formula gives
\begin{align}
 \ell(\widehat p_T)-\ell(P_T)
 &=\frac{E_T}{P_T(1-P_T)}+R_T,\notag\\
 |R_T|\mathbf1_{\mathcal B_{T,\delta}}
 &\le C_\delta E_T^2,\qquad
 \sqrt T|R_T|\mathbf1_{\mathcal B_{T,\delta}}
       \le C_\delta(\sqrt T E_T)^2/\sqrt T=o_p(1).
 \label{rli:logit-remainder}
\end{align}
Define $R_T$ globally by the first equality with $\check p_T$ in
place of $\widehat p_T$. The complement bound and
\eqref{rli:boundary-localization} extend the $o_p(1)$ conclusion
unconditionally. On $\mathcal B_{T,\delta}$,
\[
 \left|\frac{\check p_T(1-\check p_T)}{P_T(1-P_T)}-1\right|
 \le C_\delta |E_T|=o_p(1).
\]
The same localization therefore makes this ratio converge to one in
probability without a global lower bound on $P_T(1-P_T)$.

Positivity and the variance localization in
\eqref{dfe:small-variance} give
$\widehat\tau_T/\tau(x_{T+1})\to_p1$ and
$1/\widehat\tau_T=O_p(1)$, defining this inverse as zero when
$\widehat\tau_T=0$. Thus dividing the Taylor expansion yields
\begin{align}
 \frac{\sqrt T\{\ell(\check p_T)-\ell(P_T)\}}
                    {\widehat\sigma_{\ell,T}}
 &=\frac{\sqrt T E_T}{\widehat\tau_T}
       \frac{\check p_T(1-\check p_T)}{P_T(1-P_T)}
   +\frac{\sqrt T R_T\check p_T(1-\check p_T)}{\widehat\tau_T}
       \notag\\
 &=\frac{\sqrt T E_T}{\widehat\tau_T}\{1+o_p(1)\}+o_p(1).
 \label{rli:pivot-transfer}
\end{align}
Here $\check p_T(1-\check p_T)\le1/4$ and
$\sqrt T R_T=o_p(1)$. The event pivot and Slutsky's theorem imply
\eqref{rli:logit-pivot}. Monotonicity of $\Lambda$ identifies the
coverage event with the pivot lying between the normal quantiles.
Their limiting boundary probabilities are zero.

For a nontrivial fixed rectangle and a strictly positive innovation
density on $\mathbb R^n$, $0<p(r)<1$ for every finite state: its
transformed rectangle contains an open box of positive density, and
its complement contains another whenever at least one boundary is
finite. This establishes the interior-probability condition for the
stated Gaussian and positive-parameter Clayton families. For one-sided events, Proposition~\ref{ond:proposition} establishes
almost-sure positive archive variance in both families, and
Proposition~\ref{gnd:proposition} covers the Gaussian refit.
General bounded rectangles retain their explicit nondegeneracy assumption.
\end{proof}

Finite estimated standard deviations give interval endpoints in
$(0,1)$. The result establishes first-order coverage.

\section{Gaussian Refitting and the Archive Variance Cost}\label{supp:dynamic-refit}\label{app:refit}
The terminal refit recomputes historical residuals from the same
observations as the archive. Quadratic Gaussian edge scores allow an
exact OLS residual cross-product identity to determine its
generated-margin correction, accounting for the dependence of lag
designs on past innovations. Comparing the resulting joint influence
with the archive influence gives the positive semidefinite covariance
difference at common population parameters.

For the Gaussian full-sample fit \eqref{dfs:estimator}, define
\begin{equation}
 \xi_t^{\rm fs}=\begin{pmatrix}
       -J^{-1}\{g(\theta_0,Z_t)+B_g\varphi_t\}\\e_t
       \end{pmatrix},\qquad
 \mathcal W_{\rm fs}=E(\xi_t^{\rm fs}\xi_t^{{\rm fs}\prime}).
 \label{dfs:influence}
\end{equation}

\begin{proposition}[Gaussian refit limit and covariance]\label{dfs:theorem}
Under Assumptions~\ref{main:var-assumption}--\ref{main:regularity}
with a Gaussian innovation vine,
\begin{equation}
 \sqrt T\begin{pmatrix}\widetilde\theta-\theta_0\\\delta_T\end{pmatrix}
 =\frac1{\sqrt T}\sum_{t=1}^T\xi_t^{\rm fs}+o_p(1)
 \ \Longrightarrow N(0,\mathcal W_{\rm fs}).
 \label{dfs:limit}
\end{equation}
The estimator \eqref{dfs:feasible}, with its stated identity-matrix convention,
is positive semidefinite for every sample and consistent for
$\mathcal W_{\rm fs}$. Moreover,
\begin{equation}
 \mathcal W_{\rm VAR}-\mathcal W_{\rm fs}
 =\begin{pmatrix}
 J^{-1}B_g\Sigma_\varphi B_g'J^{-\prime}&0\\0&0
 \end{pmatrix}\succeq0.
 \label{dfs:cost}
\end{equation}
\end{proposition}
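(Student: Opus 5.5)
The proof follows the route of Proposition~\ref{dyn:theorem}, with the recursive nuisance vectors replaced by one common terminal nuisance vector. Four steps are needed.

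\textbf{Consistency.} For $1\le t\le T$, write the refitted residual exactly as
\[
\widetilde Z_{it}=\{Z_{it}-x_t'd_{i,T}\}(1+a^v_{i,T})^{-1/2}.
\]
Since $\delta_T=O_p(T^{-1/2})$ by Lemma~\ref{dji:endpoint-lemma} and $x_t$ has uniformly bounded moments, the perturbation $\|\widetilde Z_t-Z_t\|\le C(1+\|Z_t\|)(1+\|x_t\|)\|\delta_T\|$ is uniformly negligible in averages. The uniform law, stagewise identification and the induction over edges then give $\widetilde\theta\to_p\theta_0$ and interior first-order conditions, exactly as in Proposition~\ref{dyn:theorem}.

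\textbf{Score expansion.} This is the central step. For the Gaussian vine, Proposition~\ref{score:gaussian-family} makes each edge score a quadratic form in the normal scores. Hence $g(\theta_0,\widetilde Z_t)$ is an exact polynomial of degree two in $\widetilde Z_t$. Substituting the representation above, $T^{-1}\sum_t g(\theta_0,\widetilde Z_t)$ becomes a finite combination of sample moments: $T^{-1}\sum Z_tZ_t'$, $T^{-1}\sum x_tZ_{it}Z_{jt}$, $T^{-1}\sum x_tx_t'$, multiplied by $d_{i,T}$ and $a^v_{i,T}$.

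The key tool is the exact OLS identity $\sum_t x_t(y_{it}-x_t'\widehat\beta_{i,T})=0$, together with the residual cross-product identity
\[
T^{-1}\sum_t \widetilde Z_{it}\widetilde Z_{jt}
=\frac{T^{-1}\sum_t Z_{it}Z_{jt}-s_{i,T}'\widehat A_Q s_{j,T}}{\sqrt{(1+a^v_{i,T})(1+a^v_{j,T})}}.
\]
The term $s_{i,T}'\widehat A_Q s_{j,T}$ is $O_p(T^{-1})$. Consequently the mean errors enter only at second order, and the correction reduces to relative-variance errors, which are linear in $T^{-1}\sum v_t$.

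\textbf{Identifying $B_g$.} Next I match this correction with $B_g T^{-1}\sum_t\varphi_t$. By Gaussian sign symmetry the location columns of $B_g$ vanish. The identity $\mu_x'Q^{-1}x_s=1$ from \eqref{dyn:intercept-identity} supplies the location piece in general, but here it is multiplied by zero. Solving the score system with the integrated Jacobian, as in \eqref{dyn:final-normalization}, gives
\[
\sqrt T(\widetilde\theta-\theta_0)=-J^{-1}T^{-1/2}\sum_t\{g_t+B_g\varphi_t\}+o_p(1).
\]
Stacking this with \eqref{dji:endpoint-expansion} yields \eqref{dfs:limit}. The CLT follows from Lemma~\ref{dji:block-clt}, applied with weights $h_{T,t}$ replaced by one; this only simplifies its moment bounds.

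\textbf{Covariance and ordering.} Consistency and positive semidefiniteness of \eqref{dfs:feasible} follow from the moment-consistency argument of Theorem~\ref{dji:joint}, since \eqref{dfs:feasible} is an average of outer products. For \eqref{dfs:cost}, the influence in \eqref{dfs:influence} is exactly $\xi_t$ of \eqref{main:covariance-decomposition}. So $\mathcal W_{\rm fs}=E(\xi_t\xi_t')$, and the decomposition \eqref{dji:psd-identity} gives the difference as the congruence $J^{-1}B_g\Sigma_\varphi B_g'J^{-\prime}\succeq0$.

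The main obstacle is the score expansion. Unlike the archive, the nuisance error $\delta_T$ is correlated with every $Z_t$, because $\delta_T$ is a function of the full sample. The martingale-difference argument of \eqref{dyn:centered-score-bound} therefore does not apply. My plan is to avoid it through the exact polynomial algebra and OLS orthogonality, rather than a generic Taylor bound. I would also check that cross terms such as $T^{-1}\sum x_tZ_{it}Z_{jt}\cdot d_{k,T}$ are $O_p(T^{-1})$ after centering, or else that they contribute precisely the $e_1$-weighted third-moment terms already contained in $B_g$.
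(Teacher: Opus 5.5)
Your proposal is correct and follows essentially the paper's argument. Apart from formally citing the general refit proposition with $b=0$ for the expansion and CLT, the paper uses exactly your residual cross-product identity $\widetilde M_T=L_T(M_T-D_T'\widehat Q_TD_T)L_T$, the quadratic Gaussian edge scores $g_e(\theta_0,z)=c_e+z'H_ez$ with zero location columns and $(B_g)_{e,v_i}=-(RH_e)_{ii}$, the terminal scale expansion, and the decomposition \eqref{dji:psd-identity} for \eqref{dfs:cost}. Your closing worry about terms such as $T^{-1}\sum_t x_tZ_{it}Z_{jt}\,d_{k,T}$ does not arise: because the scores are exactly quadratic, $T^{-1}\sum_t g_e(\theta_0,\widetilde Z_t)=c_e+\operatorname{tr}(H_e\widetilde M_T)$, so only the residual cross-product moment enters.
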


\begin{proof}
Online Proposition~\ref{refit:theorem}, with $b=0$, gives the joint
expansion and limit. Its feasible covariance
\eqref{refit:finite-feasible} reduces to \eqref{dfs:feasible}.
The following direct Gaussian calculation identifies the source of
the marginal correction.

Let $D_T=(d_{1,T},\ldots,d_{n,T})$,
$M_T=T^{-1}\sum_tZ_tZ_t'$, $R=E(Z_tZ_t')$, and
$L_T=\operatorname{diag}\{(1+a^v_{i,T})^{-1/2}:i\le n\}$.
On the bounded-inverse and inactive-safeguard event of
Lemma~\ref{dji:endpoint-lemma}, the OLS normal equations give
\begin{equation}\label{dfs:normal-equations}
 \widetilde M_T:=T^{-1}\sum_t\widetilde Z_t\widetilde Z_t'
 =L_T(M_T-D_T'\widehat Q_TD_T)L_T.
\end{equation}
Here $D_T=O_p(T^{-1/2})$, $M_T-R=O_p(T^{-1/2})$, and
$\|\widehat Q_T\|=O_p(1)$. Put
$V_T=\operatorname{diag}(a^v_{1,T},\ldots,a^v_{n,T})$.
Taylor expansion gives $L_T=I-V_T/2+O_p(T^{-1})$. The omitted
terms in the residual-moment expansion are bounded by
\[
 C\bigl\{\|D_T\|^2\|\widehat Q_T\|+
         \|V_T\|\|M_T-R\|+\|V_T\|^2(1+\|M_T\|)\bigr\}
 =O_p(T^{-1}).
\]
Consequently
\begin{equation}\label{dfs:moment-expansion}
 \widetilde M_T=M_T-\tfrac12(V_TR+RV_T)+O_p(T^{-1}).
\end{equation}

The Gaussian conditional-score recursion is linear in the original
normal scores. Each edge score therefore has symmetric $H_e$ such that
\begin{equation}\label{dfs:quadratic}
 g_e(\theta_0,z)=c_e+z'H_ez,\qquad
 (B_g)_{e,\mu}=0,\qquad (B_g)_{e,v_i}=-(RH_e)_{ii}.
\end{equation}
Indeed, the location derivative is $-2(H_ez)_i$ and has expectation
zero. The relative-variance derivative is $-z_i(H_ez)_i$, whose
expectation is $-(RH_e)_{ii}$. Taking the trace of
\eqref{dfs:moment-expansion} against $H_e$ gives
\[
 T^{-1}\sum_t\{g_e(\theta_0,\widetilde Z_t)-g_e(\theta_0,Z_t)\}
 =-\sum_i(RH_e)_{ii}a^v_{i,T}+O_p(T^{-1}).
\]
Substitute the terminal scale expansion in
Lemma~\ref{dji:endpoint-lemma} to obtain the Gaussian version of
\eqref{refit:score-leading}. The intercept and lag-coefficient errors
enter the residual cross product only through the quadratic term in
\eqref{dfs:normal-equations}. This explains why historical location
errors make no first-order contribution in this specialization.

For later numerical calculations, the terminal residual bound is
\begin{equation}\label{dfs:residual-bound}
 \|\widetilde Z_t-Z_t\|
 \le Cr_T(\|x_t\|+\|Z_t\|),\qquad
 r_T=\|D_T\|+\max_i|a^v_{i,T}|=O_p(T^{-1/2}).
\end{equation}
It follows directly from
$\widetilde Z_{it}=(Z_{it}-x_t'd_{i,T})(1+a^v_{i,T})^{-1/2}$
on the same high-probability event. Finally,
\eqref{main:covariance-decomposition} and \eqref{dfs:influence}
give \eqref{dfs:cost}. For each copula direction $a$, the additional
quadratic form is
$\|\Sigma_\varphi^{1/2}B_g'J^{-\prime}a\|^2\ge0$.
\end{proof}

For a fixed rectangle $A$ as in Lemma~\ref{dfe:rectangle-smooth},
let $\widetilde p_{A,T+1}$ and $\widetilde d_{A,T}$ substitute
$(\widetilde\theta,\widehat\beta_T,\widehat v_T)$ in the probability
and gradient formulas.

\begin{corollary}[Gaussian refit event inference]\label{dfs:event-inference}
Under Proposition~\ref{dfs:theorem}, if
$P\{d_A(X)'\mathcal W_{\rm fs}d_A(X)>0\}=1$, then
\begin{equation}
 \frac{\sqrt T(\widetilde p_{A,T+1}-p_{A,T+1})}
 {\{\widetilde d_{A,T}(x_{T+1})'
       \widehat{\mathcal W}_{\rm fs}
       \widetilde d_{A,T}(x_{T+1})\}^{1/2}}
 \ \Longrightarrow N(0,1).
 \label{dfs:event}
\end{equation}
The associated normal interval, intersected with $[0,1]$, has
limiting coverage $1-\alpha$ for each fixed $0<\alpha<1$.
The pivot is set to zero when its denominator is zero.
\end{corollary}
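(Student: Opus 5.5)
The plan is to transcribe the proof of Corollary~\ref{dfe:rectangle-clt} to the refit estimator, with $\mathcal W_{\rm VAR}$ replaced by $\mathcal W_{\rm fs}$ and the archive expansion replaced by Proposition~\ref{dfs:theorem}. Write $\widetilde\Delta_T=((\widetilde\theta-\theta_0)',\delta_T')'$. The terminal marginal estimates are common to both procedures, so the exact endpoint identity
\[
 \widetilde z_{i,T}^\pm(r)=\{z_i^\pm(r)-r'd_{i,T}\}/\sqrt{1+a^v_{i,T}}
\]
still holds. Hence
\[
 \widetilde p_{A,T+1}=H_A(x_{T+1},\widetilde\theta,\delta_T),\qquad
 p_{A,T+1}=H_A(x_{T+1},\theta_0,0),
\]
with $H_A$ the map of Lemma~\ref{dfe:rectangle-smooth}. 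For Gaussian vines, \eqref{main:event-envelope} holds because the log density is quadratic in $z$. Lemma~\ref{dfe:rectangle-smooth} therefore supplies, for each $M$, local Hessian and gradient bounds on $\|r\|\le M$. The localized Taylor argument of \eqref{dfe:rectangle-remainder} then gives
\[
 \sqrt T(\widetilde p_{A,T+1}-p_{A,T+1})=d_A(x_{T+1})'\sqrt T\widetilde\Delta_T+o_p(1),
\]
using only $\sqrt T\widetilde\Delta_T=O_p(1)$ from \eqref{dfs:limit} and tightness of $x_{T+1}$.

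The main step is the state separation, the analogue of Lemma~\ref{dfe:state-separation}. The leading array is $T^{-1/2}\sum_{t\le T}\xi^{\rm fs}_t$, and I would show it is a martingale-difference array for $\mathcal F_t$. This follows from \eqref{dji:martingale} with unit weights in place of $\mathbf 1\{t>b_T\}$ and $h_{T,t}$, since $Eg=0$, $E\varphi=0$, and $x_t$ is predictable. Its increments have uniformly bounded second moments; this case is simpler than the archive, where the weights $h_{T,t}$ were only logarithmically bounded.

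Delete the last $k_T=\lceil\log T\rceil$ summands. By orthogonality the change is $O_p((\log T/T)^{1/2})$. Approximate $x_{T+1}$ by $X_T^+$ built from the last $k_T$ innovations, with error $O(\rho^{k_T})$ in $L^2$. The truncated sum is $\mathcal F_{T-k_T}$-measurable and independent of $X_T^+$. The characteristic-function bound \eqref{dfe:characteristic-separation} together with the limit in \eqref{dfs:limit} then yields
\[
 (\sqrt T\widetilde\Delta_T,x_{T+1})\Rightarrow(G^{\rm fs},X),\qquad G^{\rm fs}\sim N(0,\mathcal W_{\rm fs})
\]
independent of $X$. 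The continuous mapping theorem gives the mixed normal limit $d_A(X)'G^{\rm fs}$.

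For studentization, Proposition~\ref{dfs:theorem} gives $\widehat{\mathcal W}_{\rm fs}\to_p\mathcal W_{\rm fs}$. The plug-in gradient bound
\[
 \sup_{\|r\|\le M}\|\widetilde d_{A,T}(r)-d_A(r)\|\le B_{A,M}\|\widetilde\Delta_T\|,
\]
followed by delocalization, gives
\[
 \widetilde d_{A,T}(x_{T+1})'\widehat{\mathcal W}_{\rm fs}\widetilde d_{A,T}(x_{T+1})-\tau_{{\rm fs},A}^2(x_{T+1})=o_p(1),
\]
where $\tau^2_{{\rm fs},A}(r)=d_A(r)'\mathcal W_{\rm fs}d_A(r)$. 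Under $P\{\tau^2_{{\rm fs},A}(X)>0\}=1$, the small-variance localization \eqref{dfe:small-variance} yields relative consistency and shows that the zero-denominator convention has vanishing probability.

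Conditionally on $X$, the ratio $d_A(X)'G^{\rm fs}/\tau_{{\rm fs},A}(X)$ is standard normal. Hence \eqref{dfs:event} holds, and continuity of $\Phi$ gives coverage $1-\alpha$. Intersecting with $[0,1]$ preserves coverage because the target lies in $[0,1]$.

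The main obstacle is making sure the refit remainder in \eqref{dfs:limit} needs no separate independence from the state. It does not: only the leading array is separated, and an $o_p(1)$ remainder is added afterwards.
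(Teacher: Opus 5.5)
Your proposal is correct and follows essentially the paper's route. The paper proves this corollary by setting $b=0$ in its online refit event-inference corollary, which applies the same state separation, localized Taylor expansion, and small-variance studentization you transcribe from Lemma~\ref{dfe:state-separation} and Corollary~\ref{dfe:rectangle-clt}, with unit weights, the expansion \eqref{dfs:limit}, and $\mathcal W_{\rm fs}$ in place of $\mathcal W_{\rm VAR}$; you write out explicitly the argument the paper cites.
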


\begin{proof}
Take $b=0$ in Online Corollary~\ref{refit:event-inference}.
Its estimator and covariance are exactly
\eqref{dfs:estimator} and \eqref{dfs:feasible}. The imposed variance
condition is the one in that corollary, so its separation and
studentization argument gives \eqref{dfs:event}. Normal quantiles
give coverage $1-\alpha$. Since the target lies in $[0,1]$,
intersecting the interval with $[0,1]$ preserves its coverage event.
\end{proof}

\endgroup
\clearpage
\begingroup
\setlength{\bibsep}{5pt}
\interlinepenalty=10000
\bibliographystyle{jfe}
\bibliography{VAR_ref}
\endgroup
\end{document}